\newcommand{\cP}{\mathcal P}
\newcommand{\cR}{\mathcal R}
\newcommand{\cT}{\mathcal T}
\newcommand{\cL}{\mathcal L}
\newcommand{\cF}{\mathcal F}
\newcommand{\cX}{\mathcal X}
\newcommand{\cN}{\mathcal N}
\newcommand{\cV}{\mathcal V}
\newcommand{\scAG}{\textsc{AG}}
\title{Computing Hybridization Networks for\\ Multiple Rooted Binary Phylogenetic Trees by\\ Maximum Acyclic Agreement Forests}
\author{Benjamin Albrecht
\thanks{
Benjamin Albrecht\\
Institut f\"ur Informatik, Ludwig-Maximilians-Universit\"at, Germany\\
Tel.: +49-89-2180-4069\\
E-mail: \email{albrecht@bio.ifi.lmu.de}}
}
\institute{Institut f\"ur Informatik, Ludwig-Maximilians-Universit\"at, Germany}
\date{\today}
\begin{document}
\maketitle
\vspace{3cm}

\pagestyle{plain}
\pagenumbering{arabic}

\textbf{Abstract} It is a known fact that, given two rooted binary phylogenetic trees, the concept of maximum acyclic agreement forests is sufficient to compute hybridization networks with minimum hybridization number. In this work, we demonstrate by first presenting an algorithm and then showing its correctness, that this concept is also sufficient in the case of multiple input trees. More precisely, we show that for computing minimum hybridization networks for multiple rooted binary phylogenetic trees on the same set of taxa it suffices to take only maximum acyclic agreement forests into account.

Moreover, this article contains a proof showing that the minimum hybridization number for a set of rooted binary phylogenetic trees on the same set of taxa can be also computed by solving subproblems referring to common clusters of the input trees.\\

\textbf{Keywords} Hybridization Networks $\cdot$ Maximum Acyclic Agreement Forests $\cdot$ Bounded Search $\cdot$ Phylogenetics

\newpage

\section{Introduction}
\label{sec-intro}

The evolution of species is often described by a phylogenetic tree representing a set of speciation events. Due to reticulation events, however, a tree is often insufficient, because different genetic sequences can give rise to different phylogenetic trees. An important reticulation event, which could be prevalently discovered in plants but also in animals, is hybridization \cite{Mallet-2007}. In order to study evolution affected by hybridization, one can reconcile incongruent phylogenetic trees, which for instance have been constructed for certain species based on different genes, into a single hybridization network. Whereas phylogenetic trees only contain internal nodes of in-degree one referring to certain speciation events, hybridization networks can, additionally, contain nodes of larger in-degree representing putative hybridization events.

The problem of computing hybridization networks with minimum hybridization number is known to be \textit{NP-hard} \cite{Bordewich-2007b} but fixed-parameter tractable, even for the simplest case when only two binary phylogenetic input trees are given. In the general case, however, if the input consists of more than two trees, the problem still remains fixed-parameter tractable as recently shown by \citet{Iersel-2013}. This means, in particular, that the problem is exponential in some parameter related to the problem itself, which is the hybridization number in this case, but only polynomial to its input size, which is an important feature that facilitates the development of practical algorithms. Nevertheless, when developing an algorithm solving this computational hard problem, the challenge remains not only in guaranteeing its correctness but, especially, in providing a good practical running time. Thus, such an algorithm, on the one hand, has to be quite sophisticated and, on the other hand, has to be implemented in an efficient way, which can be achieved for instance by applying certain speedup techniques, by reducing the size of the input trees, or by running exhaustive parts of the algorithm in parallel.

Typically, a method computing hybridization networks for two rooted binary phylogenetic trees can be divided into the following two major steps. First, \emph{maximum acyclic agreement forests} are computed by cutting down the input trees in a specific way, and, second, the components of such an agreement forest are again reattached by introducing reticulation edges in a way that the resulting network displays both input trees. Broadly speaking, an \emph{agreement forests} can be seen as a set of common subparts occurring in both input trees. Moreover, in this context, the term \emph{maximum} simply denotes that there is no smaller set fulfilling the properties of an agreement forest and the \emph{acyclic} constraint denotes that it is possible, in a biological sense, to reattach its components back to a hybridization network. If this network contains a minimum number of reticulation edges, its hybridization number is minimal and, thus, this network is called a \emph{minimum hybridization network}.

In general, there exists not just one but a large number of minimum hybridization networks. To recognize putative hybridization events, biologists are interested in all of those networks, since the more frequently a hybridization event is contained in a set of possible evolutionary scenarios the more likely it is part of the true underlying evolutionary history of the considered species. Thus, given two input trees, there is a need for two types of algorithms; one for the computation of all maximum acyclic agreement forests and another one for the computation of all hybridization networks based on each of those agreement forests.

While there exist some software packages providing methods for computing hybridization networks for two rooted binary phylogenetic trees on the same set of taxa \cite{Chen-2010,Huson-2012}, in this work, we will present an algorithm computing a particular type of minimum hybridization networks, namely biologically relevant networks as defined later, for an arbitrary number of rooted binary phylogenetic trees on the same set of taxa. The workflow of this algorithm can be briefly summarized as follows. Starting with one input tree, all other input trees are embedded sequentially into a growing number of networks by adding further reticulation edges corresponding to certain components of a maximum acyclic agreement forest. In order to guarantee the computation of biologically relevant networks, it is important that each input tree is added to a so far computed network in all possible ways. This implies, in particular, that at the beginning, when adding the second tree of the ordering, say $T_2$, to the first one, say $T_1$, all biologically relevant networks embedding $T_1$ and $T_2$ have to be computed. Missing one of those networks could mean that a computational path leading to a biologically relevant network embedding the whole set of input trees is lost, and, as a consequence, the resulting output only consists of networks whose hybridization number is not minimal. A crucial observation of this work is that for this purpose it suffices to consider only maximum acyclic agreement forests.

Until now, the only software that is also able to compute minimum hybridization networks for multiple rooted binary phylogenetic trees is PIRNv2.0 \cite{Wu2010,Wu2013}. A recently conducted simulation, however, has revealed that an implementation of our algorithm provides the clearly better practical running time and, additionally, in general PIRNv2.0 does only output a small subset of all biologically relevant networks \cite{Albrecht2015}, which prohibits a significant biological interpretation of each network as discussed above.

This work is organized as follows. In a first step, the terminology that is used throughout this work is introduced. Next, in Section~\ref{sec-alg}, we give a detailed description of our algorithm \textsc{allHNetworks} whose correctness is shown in a subsequent section. Finally, we end the description of \textsc{allHNetworks} by briefly discussing its theoretical worst-case runtime and by giving some concluding remarks. In a second part, we describe some techniques improving the running time of our algorithm whereat one of those techniques is the well known cluster reduction. We finish this article by presenting a proof showing that the concept of the cluster reduction can be also applied to multiple rooted binary phylogenetic trees without having an impact on the computation of the minimum hybridization number.

\section{Preliminaries}
\label{sec-pre}

In this section, we give some preliminary definitions concerning phylogenetic trees, hybridization networks, and agreement forests following the work of \citet{Huson2011} and \citet{Scornavacca2012}, which will be first used for describing the algorithm \textsc{allHNetworks} and then for showing its correctness. We assume that the reader is familiar with general graph-theoretic concepts.\\

\textbf{Phylogenetic trees.} A \emph{rooted phylogenetic $\cX$-tree} $T$ is a tree whose edges are directed from the root to the leaves and whose nodes, except for the root, have a degree unequal to $2$. We call $T$ a \emph{binary tree} if its root has in-degree $0$ and out-degree $2$, each inner node in-degree $1$ and an out-degree $2$, and each leaf in-degree $1$ and out-degree $0$. The leaves of a rooted phylogenetic $\cX$-tree are labeled one-to-one by the taxa set $\cX$, which usually consists of certain species or genes and is denoted by $\cL(T)$. Considering a node $v$ of $T$, the label set $\cL(v)$ refers to each taxon that is contained in the subtree rooted at $v$. Given a set of trees $\cF$, the label set $\cL(\cF)$ denotes the union of each label set $\cL(F_i)$ of each tree $F_i$ in $\cF$.

Now, based on a taxa set $\cX' \subseteq \cX$, we can define a restricted subtree of a rooted phylogenetic $\cX$-tree, denoted by $T|_{\cX'}$. The restricted subtree $T|_{\cX'}$ is computed by, first, deleting each leaf repeatedly that is either unlabeled or whose taxon is not contained in $\cX'$, resulting in a subgraph denoted by $T(\cX')$, and, second, by suppressing each node of both in- and out-degree $1$. Moreover, given a tree $T$, by $\overline T$ we denote the tree that is obtained from $T$ by suppressing all nodes of both in- and out-degree $1$. The result of such a restriction is a rooted phylogenetic $\cX'$-tree.\\

\textbf{Phylogenetic networks.} A \emph{rooted phylogenetic network} $N$ on $\cX$ is a rooted connected digraph whose edges are directed from the root to the leaves as defined in the following. There is exactly one node of in-degree $0$, namely the \emph{root}, and no nodes of both in- and out-degree $1$. The set of nodes of out-degree $0$ is called the \emph{leaf set of $N$} and is labeled one-to-one by the \emph{taxa set} $\cX$, also denoted by $\cL(N)$. In contrast to a phylogenetic tree, such a network may contain undirected but not any directed cycles. Consequently, $N$ can contain nodes of in-degree larger than or equal to $2$, which are called \emph{reticulation nodes}. Moreover, each edge that is directed into such a reticulation node is called \emph{reticulation edge}.\\

{\bf Hybridization Networks.} A \emph{hybridization network} $N$ for a set $\cT$ of rooted binary phylogenetic $\cX'$-trees, with $\cX'\subseteq\cX$, is a rooted phylogenetic network on $\cX$ \emph{displaying $\cT$} (i.e., \emph{contains an embedding of each tree $T$ in $\cT$}). More precisely, this means that for each tree $T$ in $\cT$ there exists a set $E'\subseteq E(N)$ of reticulation edges \emph{referring} to $T$. More specifically, this means that $T$ can be derived from $N$ by conducting the following steps. 

\begin{enumerate}
\item[(1)] First, delete each reticulation edge from $N$ that is not contained in $E'$.
\item[(2)] Then, remove each node whose corresponding taxon is not contained in $\cX'$.
\item[(3)] Next, remove each unlabeled node of out-degree $0$ repeatedly.
\item[(4)] Finally, suppress each node of both in- and out-degree $1$.  
\end{enumerate}

From a biological point of view, this means that $N$ displays $T$ (i.e., contains an embedding of $T$) if each speciation event of $T$ is reflected by $N$. Moreover, each internal node of in-degree $1$ represents a speciation event and each internal node providing an in-degree of at least $2$ represents a reticulation event or, in terms of hybridization, a hybridization event. This means, in particular, that such a latter node represents an individual whose genome is a \emph{chimaera} of several parents. Thus, such a node $v$ of in-degree larger than or equal to $2$ is called \emph{hybridization node} (or reticulation node) and each edge directed into $v$ is called \emph{hybridization edge} (or reticulation edge). Moreover, each edge that is not a hybridization edge is called \emph{tree edge}.

Now, based on those hybridization nodes, the \emph{reticulation number} $r(N)$ of a hybridization network $N$ is defined by 
\begin{equation}
r(N)=\sum_{v\in V:\delta^-(v)>0}\left(\delta^-(v)-1\right)=|E|-|V|+1,
\label{04-eq-retNumBin}
\end{equation}
where $V$ denotes the node set and $E$ the edge set of $N$. Next, based on the definition of the reticulation number, for a set $\cT$ of phylogenetic $\cX$-trees the (minimum or exact) hybridization number $h(\cT)$ is defined by
\begin{equation}
h(\cT)=\text{min}\{r(N):\text{N is a hybridization network displaying }\cT\}.
\label{04-eq-hNumBin}
\end{equation}
Throughout this work, we call a \emph{hybridization network} $N$ for a set $\cT$ of rooted binary phylogenetic $\cX$-trees a \emph{minimum hybridization network}, if $r(N)=h(\cT)$.

Notice that the computation of the hybridization number for just two rooted binary phylogenetic $\cX$-trees is an \emph{NP-hard} problem \cite{Bordewich-2007b} which is, however, still fixed-parameter tractable \cite{Bordewich2007a}. More specifically, this means that the problem is exponential in some parameter related to the problem itself, namely the hybridization number, but only polynomial in the size of the input trees, which is an important feature facilitating the development of practical algorithms.

Lastly, given a hybridization network $N$ on $\cX$ and an edge set $E'$ referring to an embedded rooted phylogenetic $\cX'$-tree $T'$ of $N$ with $\cX'\subseteq\cX$, the \emph{restricted network} $N|_{E',\cX'}$ refers to the minimal connected subgraph $T$ only containing leaves labeled by $\cX'$ and edges that are either tree edges or contained in $E'$. Consequently, $N|_{E',\cX'}$ is a directed graph that corresponds to $T'|_{\cX'}$ but still contains nodes of both in- and out-degree~$1$, and, thus, each node in $N|_{E',\cX'}$ can be mapped back to exactly one specific node of the unrestricted network $N$ (cf.~Fig.\ref{fig-embTree}(c)). 

\begin{figure}[t]
\centering
\includegraphics[scale=0.9]{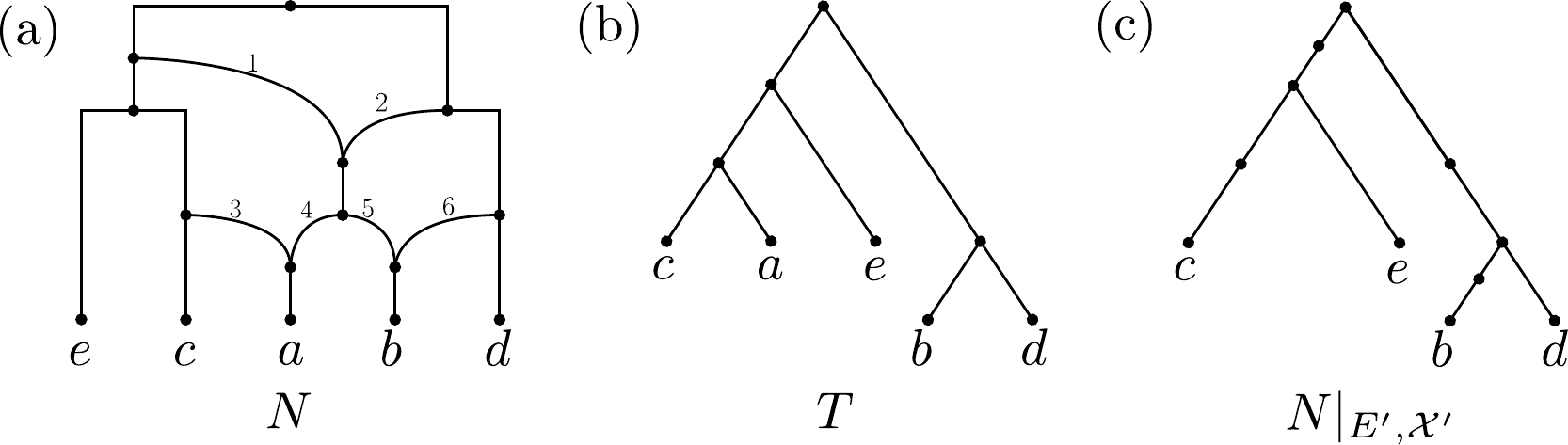}
\caption[Definitions regarding restricted networks]{\textbf{(a)} A hybridization network $N$ with taxa set $\cX=\{a,b,c,d,e\}$ whose reticulation edges are consecutively numbered. \textbf{(b)}~A~phylogenetic $\cX$-tree $T$ that is displayed by $N$. Based on $N$, both edge sets $E'=\{3,6,1\}$ and $E''=\{3,6,2\}$ refer to~$T$ and, thus, $\overline{N|_{E',\cX}}$ as well as $\overline{N|_{E'',\cX}}$ equals $T$. \textbf{(c)}~The restricted network $N|_{E',\cX'}$ with $\cX'=\{b,c,d,e\}$ still containing nodes of both in- and out-degree~$1$.} 
\label{fig-embTree}
\end{figure}

{\bf Forests.} Let $T$ be a rooted nonbinary phylogenetic $\cX$-tree $T$. Then, we call any set of rooted nonbinary phylogenetic trees $\cF=\{F_1,\dots,F_k\}$ with $\cL(\cF)=\cX$ a \emph{forest on $\cX$}, if we have for each pair of trees $F_i$ and $F_j$ that $\cL(F_i)\cap\cL(F_j)=\emptyset$. Moreover, if additionally for each component $F$ in $\cF$ the tree $T|_{\cL(F)}$ equals $F$, we say that $\cF$ is a \emph{forest for $T$}.\\

\textbf{Agreement forests.} For technical purpose, the definition of agreement forests is based on two rooted binary phylogenetic $\cX$-trees $T_1$ and $T_2$ whose roots are marked by a unique taxon $\rho\not\in\cX$ as follows. Let $r_i$ be the root of the tree $T_i$ with $i\in\{1,2\}$. Then, we first create a new node $v_i$ as well as a new leaf $\ell_i$ labeled by a new taxon $\rho\not\in\cX$ and then attach these nodes to $r_i$ by inserting the two edges $(v_i,r_i)$ and $(v_i,\ell_i)$. Notice that in this case $v_1$ and $v_2$ is the new root of $T_1$ and $T_2$, respectively. Moreover, since we consider $\rho$ as being a new taxon, the taxa set of both trees is $\cX\cup\{\rho\}$ (cf.~Fig.~\ref{fig-example-1}(a)).

Now, assuming we have given two trees $T_1$ and $T_2$ whose roots are marked by a unique taxon $\rho$, then, a \emph{binary agreement forest for $T_1$ and $T_2$} is a set of components $\cF=\{F_{\rho},F_1,\dots,F_{k-1}\}$ on $\cX \cup \{\rho\}$ satisfying the following properties.

\begin{itemize}
\item[(1)] Each component $F_i$ with taxa set $\cX_i$ equals $T_1|_{\cX_i}$ and $T_2|_{\cX_i}$. 
\item[(2)] There is exactly one component, denoted as $F_{\rho}$, with $\rho\in\cL(F_{\rho})$.
\item[(3)] Let $\cX_{\rho},\cX_1,\dots,\cX_{k-1}$ be the taxa sets of the components $F_{\rho},F_1,\dots,F_{k-1}$. All trees in $\{T_1(\cX_i)|i\in\{\rho,1,\dots,k-1\}\}$ and $\{T_2(\cX_i)|i\in\{\rho,1,\dots,k-1\}\}$ are node disjoint subtrees of $T_1$ and $T_2$, respectively (cf.~Fig.~\ref{fig-example-1}(b)).
\end{itemize}

Throughout this work, we call an agreement forest a \emph{maximum agreement forest}, if this agreement forest is of minimal size. This means, in particular, that there does not exist another set of components of smaller size satisfying the conditions of an agreement forest listed above.
 
Lastly, there is another important property an agreement forest can satisfy. We call an agreement forest $\cF$ for two rooted binary phylogenetic $\cX$-trees $T_1$ and $T_2$ \emph{acyclic}, if there is no directed cycle in the underlying \emph{ancestor-descendant graph} $AG(T_1,T_2,\cF)$, which is defined as follows. First, this graph contains one node corresponding to precisely one component of $\cF$. Moreover, two different nodes $F_i$ and $F_j$ of this graph are connected via a directed edge $(F_i,F_j)$, if,
\begin{itemize}
\item[(i)] regarding $T_1$, the root of $T_1(\cX_i)$ is an ancestor of the root of $T_1(\cX_j)$
\item[(ii)] or, regarding $T_2$, the root of $T_2(\cX_i)$ is an ancestor of the root of $T_2(\cX_j)$,
\end{itemize}
where $\cX_i\subseteq\cX$ and $\cX_j\subseteq\cX$ refers to the taxa set of the two components $F_i$ and $F_j$, respectively (cf.~Fig.~\ref{fig-example-1}(c)). Again, we call an acyclic agreement forest consisting of a minimum number of components a \emph{maximum acyclic agreement forest}. Notice that for a maximum acyclic agreement forest containing $k$ components there exists a hybridization network with hybridization number $k-1$ \cite{Baroni2005}. This means, in particular, if a maximum acyclic agreement forest for two rooted binary phylogenetic $\cX$-trees $T_1$ and $T_2$ contains only one component, $T_1$ equals $T_2$.\\

\textbf{Acyclic orderings.} Given an agreement forest for two rooted binary phylogenetic $\cX$-trees $T_1$ and $T_2$, then, if $\cF$ is acyclic and, thus, $\scAG(T_1,T_2,\cF)$ does not contain any directed cycles, one can compute an \emph{acyclic ordering} as already described in the work of Baroni \emph{et al.}~\cite{Baroni-2006}. First, select the node $v_{\rho}$ corresponding to $F_{\rho}$ of in-degree $0$ and remove $v_{\rho}$ together with all its incident edges. Next, again choose a node $v_1$ of in-degree $0$ and remove $v_1$. By continuing this way, until finally all nodes have been removed, one receives the ordering $\Pi_V=(v_{\rho},v_1,\dots,v_k)$ containing all nodes in $\scAG(T_1,T_2,\cF)$. In the following, we call the ordering $(F_{\rho}, F_1,\dots,F_k)$ of components corresponding to each node in $\Pi_V$ an \emph{acyclic ordering of $\cF$}. Notice that, as during each of those steps there can occur several nodes of in-degree $0$, especially if $\cF$ contains components consisting only of isolated nodes, such an acyclic ordering is in general not unique.\\

\begin{figure}
\centering
\includegraphics[scale=1]{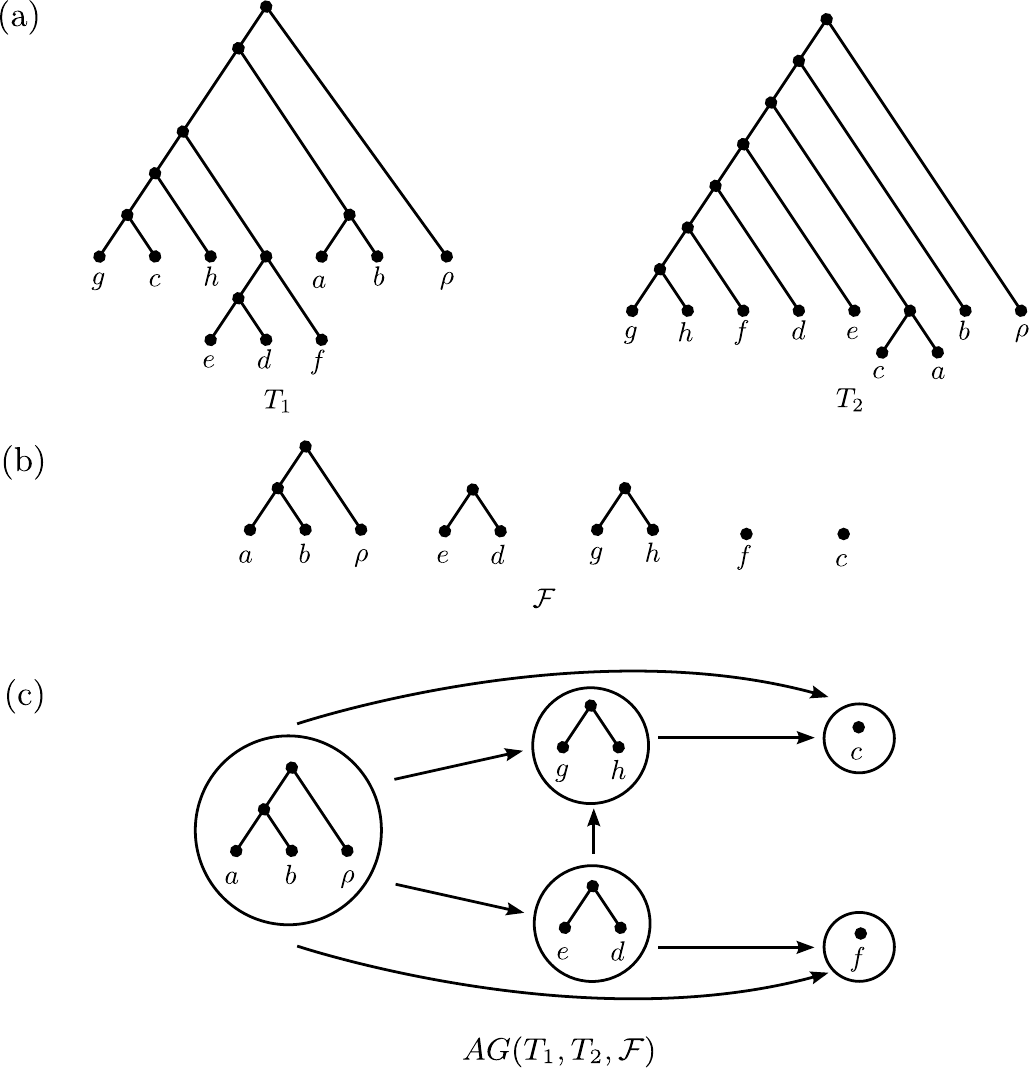}
\caption[Phylogenetic $\cX$-trees and agreement forests]{(a) Two rooted binary phylogenetic $\cX$-trees $T_1$ and $T_2$ with taxa set $\cX=\{a,b,c,d,e,f,g,h,\rho\}$. (b) An acyclic agreement forest $\cF$ for $T_1$ and $T_2$ in acyclic ordering. (c) The directed graph $AG(T_1,T_2,\cF)$ not containing any directed cycles and, thus, $\cF$ is acyclic. }
\label{fig-example-1}
\end{figure}

\textbf{Stacks of hybridization nodes.} Given a hybridization network displaying a set $\cT$ of rooted binary phylogenetic $\cX$-trees and containing a node $v$ of in-degree of at least $3$, one can generate further networks still displaying $\cT$ by dragging some of its reticulation edges upwards resulting in so-called \emph{stack of hybridization nodes}. More precisely, such a stack is a path $(v_1,\dots,v_n)$, with $n>1$, of hybridization nodes in which each node $v_i$ is connected through a reticulation edge to $v_{i+1}$ (cf.~Fig.~\ref{fig-stack}).\\ 

\begin{figure}[t]
\centering 
\includegraphics[scale=1]{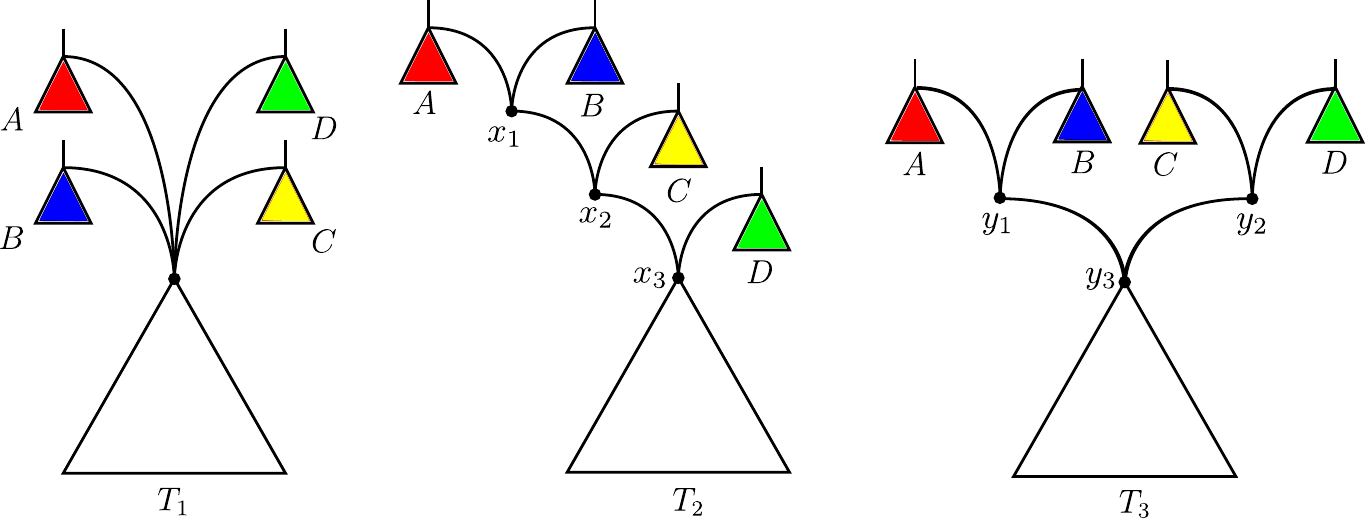}
\caption[An illustration of stacks of hybridization nodes]{An illustration of stacks of hybridization nodes. The hybridization node with in-degree $4$ of the left-hand tree $T_1$ can be resolved (amongst others) into distinctive stacks of hybridization nodes, e.g., $(x_1,x_2,x_3)$ and $(y_1,y_3)$, as demonstrated by $T_2$ and $T_3$, respectively. Notice that resolving a hybridization node into a stack of hybridization nodes does not produce new embedded trees compared with those of the unresolved network.}
\label{fig-stack}
\end{figure}

\textbf{Relevant networks.} Given a set $\cT$ of rooted phylogenetic $\cX$-trees and a phylogenetic network $N$ on $\cX$, then, we say \emph{$N$ is a relevant network for $\cT$}, if $N$ is a hybridization network displaying $\cT$ with minimum hybridization number and if $N$ does not contain any stacks of hybridization nodes. Notice that such a network leaves the interpretation of the ordering of the hybridization events adhering to a hybridization node of in-degree larger than or equal to $3$ open. 

Furthermore, we demand that each relevant network is a binary network not containing any nodes of out-degree larger than $2$. Notice that by allowing nonbinary nodes the set of relevant networks usually shrinks, since a nonbinary network can contain multiple binary networks. Moreover, in order to improve its readability, we further demand that all hybridization nodes of a relevant network have out-degree one. Notice that, in order to identify stacks of hybridization nodes, in such networks the out-edges of all hybridization nodes have to be suppressed.

Lastly, just for clarity, given two relevant networks $N_1$ and $N_2$ for a set $\cT$ of rooted phylogenetic $\cX$-trees, we say that $N_1$ equals $N_2$ if their graph topologies (disregarding the embedding of $\cT$) are isomorphic. 

\section{The Algorithm \textsc{allHNetworks}}
\label{sec-alg}

Given a set of rooted binary phylogenetic $\cX$-trees $\cT=\{T_1,\dots,T_n\}$ and a parameter $k\in \mathbb{N}$, our algorithm \textsc{allHNetworks} follows a \emph{branch-and-bound} approach conducting the following major steps. For each order of $\cT$, the trees are added sequentially to a set of networks $\cN$. In the beginning, $\cN$ consists only of one element, which is the first input tree of the ordering. By sequentially adding the other input trees to each so far computed network, the size of $\cN$ growths rapidly, since in general an input tree $T_i$ can be added to each network in $\cN$ in potential several ways. Each time the reticulation number of a so far extended network exceeds $k$, the processing of this network can be aborted. This is possible because by adding further input trees the reticulation number of the respective network is never decreased.

Given a set $\cT$ of rooted binary phylogenetic $\cX$-trees, based on two different objectives, our algorithm provides two different abort criteria: 
\begin{enumerate}
 \item \textbf{Objective:} Computation of the hybridization number of $\cT$.\\
\textbf{Abort criterion:} As soon as \emph{one} hybridization network with hybridization number $k$ is computed and each search after hybridization networks providing a hybridization number less than $k$ has failed.
 \item \textbf{Objective:} Computation of all relevant networks for $\cT$.\\
\textbf{Abort criterion:} As soon as \emph{all} hybridization networks with hybridization number $k$ are computed and each search after hybridization networks providing a hybridization number less than $k$ has failed.
\end{enumerate}

For the computation of a minimum hybridization network, parameter $k$ is set to an initial value and is increased by one if a network displaying $\cT$ with hybridization number smaller than or equal to $k$ could not be computed so far. At the beginning, $k$ can be either simply set to $0$ or to a lower bound, e.g., $$max\{R(T_i,T_j):i\ne j\}.$$ A more sophisticated method for the computation of such a lower bound is described in the work of Wu~\cite{Wu2010}. In practice, however, the lower bound does not significantly improve the runtime, since the required steps for those $k$'s that can be skipped at the beginning are usually of rather low computational complexity. 

\subsection{Inserting Trees into Networks}
\label{sec-addTree}

Given a hybridization network $N$, we say that a tree $T$ is displayed in $N$, if there exists a set of reticulation edges $E$ such that $\overline{N|_{E,\cX}}$ equals $T$ (cf.~Sec.~\ref{sec-pre}). This implies, if such a subset does not exist, we have to insert new reticulation edges for displaying $T$ in $N$. Given an edge set $E'$ referring to an embedded tree $T'$ that is already displayed in $N$, those edges can be derived from each component of an agreement forest for $T'$ and $T$. The here presented algorithm is based on the observation, that, in order to compute all relevant networks, it suffices to take only maximum acyclic agreement forests into account (cf.~Sec~\ref{sec-proof}). 

Hence, we can summarize the basic steps that are necessary for adding an input tree $T$ to a so far computed network $N$ as follows.
\begin{enumerate}
 \item Choose an edge set $E'$ referring to an embedded tree $T'$ of $N$ by selecting precisely one in-edge of each hybridization node.
 \item First compute a maximum acyclic agreement forest $\cF$ for the two trees $T'$ and $T$ and then choose an acyclic ordering $\Pi_{\cF}$ of $\cF$.
 \item Based on $\Pi_{\cF}$, for each component of $\cF$, except $F_{\rho}$, create a valid pair of source and target nodes (as defined later) such that, by connecting each node pair, $T$ is embedded in the resulting network. Notice that this step will be discussed separately in the upcoming section.
\end{enumerate}
It is easy to see, that the resulting network depends on the chosen edge set $E'$ referring to the embedded tree $T'$, which is the case because different embedded trees lead to different maximum acyclic agreement forests which consequently lead to different reticulation edges that are necessary for the embedding of $T$. Thus, to guarantee the computation of all relevant networks, all three steps have to be conducted for each edge set referring to an embedded tree in $N$. Note that, given a network containing $r$ hybridization nodes, this network can contain up to $2^r$ different embedded trees. Moreover, all maximum acyclic agreement forests of the chosen embedded tree $T'$ and the current input tree $T$ have to be taken into account, which can be done by applying the algorithm \textsc{allMAAFs} \cite{Scornavacca2012}. 

The insertion of components of a maximum acyclic agreement forest to a so far computed network is not a trivial step, since, usually, depending on other so far existing reticulation edges, there exist several potential ways of how $T$ can be inserted with the help of those components. Thus, this step will be discussed separately in the following section. 

\subsection{Inserting Components into Networks}
\label{sec-minNet}

Given an ordering of rooted phylogenetic $\cX$-trees, say $(T_1,T_2,\dots,T_n)$, and a network $N$ displaying each tree $T_j$ with $1\le j<i\le n$ together with an edge set $E'$ referring to some embedded tree $T'$ of $N$, we can add $T_i$ to $N$ by inserting further reticulation edges each corresponding to a specific component of a maximum acyclic agreement forest $\cF$ for $T'$ and $T_i$. Consequently, for each component a specific target and source node in $N$ has to be determined. Since different source and target nodes can lead to topologically different networks containing different sets of embedded trees, in order to obtain all relevant networks, we have take all valid combinations of source and target nodes for each component of $\cF$ into account. More precisely, we consider a pair $(s,t)$ of source and target nodes as being \emph{valid}, if $s$ cannot be reached from $t$. Furthermore, we have to consider each possible acyclic ordering of $\cF$.

Hence, we can summarize all important steps for inserting components of a maximum acyclic agreement forest $\cF$ into a network $N$ as follows.
\begin{enumerate}
 \item Choose an acyclic ordering $(F_{\rho},F_1,\dots,F_k)$ of $\cF$.
 \item Add each component $F_j$ of this ordering, except $F_{\rho}$, sequentially to $N$ by inserting a new reticulation edge connecting a certain source and target node.
\end{enumerate}
The output of these two steps is usually a large number of new networks, since, in general, there exist several pairs of source and target nodes enabling an embedding of $T_i$. Whereas all acyclic orderings of $\cF$ can be simply computed with the help of the directed graph $\scAG(T',T_i,\cF)$ (cf.~Sec.~\ref{sec-pre}), the second step inserting its components is quite more sophisticated. We will describe the way of adding a component $F_j$ of an acyclic ordering to a so far computed network $N$ by  first describing the computation of source and target nodes and then, based on these two nodes, by describing the way new reticulation edges are generated.\\

\textbf{\Romannum{1}~Computation of target and source nodes.} The set of source and target nodes corresponding to a component $F_j$ in $\cF$ is described in Step \Romannum{1}.\Romannum{1}--\Romannum{1}.\Romannum{3}. Therefor, let $$\cF'=\{F_{\rho},F_1,\dots,F_{j-1}\} \subset \cF=\{F_{\rho},F_1,\dots,F_{k}\}$$ be the set of components that has been added so far. Note that, since $N$ is initialized with $F_{\rho}$, at the beginning $\cL(\cF')$ equals $\cL(F_{\rho})$ and the first component that is added is $F_1$.\\

\textbf{\Romannum{1}.\Romannum{1}~Computation of target nodes.} The set $\cV_t$ of target nodes contains all nodes $v$ with $\overline{N|_{E',\cL(\cF')\cup\cL(F_j)}}(v)$ isomorphic to $T_i|_{\cL(F_j)}$. Due to the restriction of the network to $\cL(\cF')$, this set usually contains more than one node. Moreover, since we are only interested in relevant networks, we omit those target nodes that are source nodes of reticulation edges. This is a necessary step preventing the computation of networks containing stacks of hybridization nodes (cf.~Sec.~\ref{sec-pre}).\\

\textbf{\Romannum{1}.\Romannum{2}~Computation of source nodes of \textit{Type A}.} For each edge set $E_i$ referring to the embedded tree $T_i|_{\cL(\cF')}$ in $N$, the set $\cV_s^A$ of source nodes of \emph{Type A} contains all nodes $v$ with $\overline{N|_{E_i,\cL(\cF')}}(v)$ isomorphic to $T_i|_{\cL(\cF')}(v_\text{sib})$, where $v_\text{sib}$ denotes the sibling of the node $v'$ with $\cL(v')=\cL(F_j)$ in $T_i|_{\cL(\cF')\cup \cL(F_j)}$. Note that, due to the restriction of the network to $\cL(\cF')$, this set usually consists of more than one node. However, as we want to construct networks in which each hybridization node has out-degree one, we disregard those nodes having more than one in-edge.\\

\textbf{\Romannum{1}.\Romannum{3}~Computation of source nodes of \textit{Type B}.} The set $\cV_s^B$ of source nodes of \emph{Type B} is computed such that it contains each node $v$ of a subtree, whose root is a sibling of a node in $\cV_s^A$ not containing any leaves labeled by a taxon of $\cL(\cF')$. Moreover, its leaf set $\cL(v)$ has to consist only of those subsets representing the total taxa set $\cL(F)$ of a component $F$ in $\cF$, which means that $v$ must not be part of a subtree corresponding to a component that is added afterwards. However, as we want to construct networks in which each hybridization node has out-degree one, we disregard those nodes having more than one in-edge.\\

For a better understanding, the definitions of source and target nodes are illustrated in Figure~\ref{fig-stNodes}.\\

\begin{figure}[!t]
\centering
\includegraphics[width = 12cm]{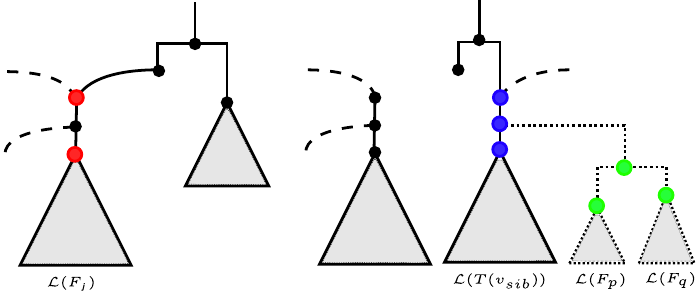}
\caption[An illustration of the definitions of source and target nodes regarding the algorithm \textsc{allHNetworks}]{An illustration of the definitions of target (left) and source nodes (right) for a component $F_j$ (with $j<p,q$) in which red nodes correspond to target nodes, blue nodes to source nodes of \textit{Type~A}, and green nodes to source nodes of \textit{Type~B}. Moreover, dashed edges and dotted edges are those edges that are disregarded when considering the restricted network in terms of the chosen embedded tree and the taxa set of the so far added components, respectively.}
\label{fig-stNodes}
\end{figure}

\emph{Remark.} Regarding two components $F_p$ and $F_q$ of an acyclic ordering of $\cF$ with $p<q$ and $\cF^*=\{F_{\rho},F_1,\dots,F_{q}\}$, it might be the case that both roots of $T_i|_{\cL(\cF^*)}(v_p)$ and $T_i|_{\cL(\cF^*)}(v_q)$ are siblings in $T_i|_{\cL(\cF^*)}$, where $v_p$ and $v_q$ denotes the lowest common ancestor of $\cL(F_p)$ and $\cL(F_q)$ in $T_i|_{\cL(\cF^*)}$. In this case, $F_p$ could be either added before $F_q$ or vice versa as both variants are acyclic orderings of $\cF$. If $F_p$ is inserted before $F_q$, a node whose leaf set corresponds to $\cL(F_p)$ in $N|_{T_i,\cL(\cF^*)}$ acts as source node when adding $F_q$ to $N$. Similarly, by adding $F_q$ before $F_p$ this happens the other way round which leads to a topologically different network. This implies that, in order to receive all relevant networks displaying $T_i$, we have to consider different acyclic orderings of a maximum acyclic agreement forest.\\

\textbf{\Romannum{2}~Adding new reticulation edges.} Now, given a valid pair $(s,t)$ of source and target nodes, a new reticulation edge is inserted as follows (cf.~Fig.~\ref{fig-stIns}).
\begin{enumerate}
 \item First, the in-edge $e$ of $s$ is split by inserting a new node $s'$, i.e., $e=(p,s)$ is first deleted and then two new edges $(p,s')$ ans $(s',s)$ are inserted. Second, if the parent of $t$ has in-degree one, the in-edge of $t$ is split two times in the same way by inserting two nodes $t'$ and $t''$. Let $t'$ be the parent of $t$ after splitting its in-edge. In this case, notice that $t'$ is necessary to receive only hybridization nodes of out-degree one and $t''$ is necessary to provide an attaching point for further reticulation edges as discussed below. Otherwise, if $t$ has an in-degree of at least two, $t'$ is set to $t$, which prevents the computation of networks containing stacks of hybridization nodes.
 \item Now, the two nodes, $s'$ and $t'$, are connected through a path $P$ consisting of two edges. As we do not allow nodes of in-degree larger than one as source nodes, this provides an attaching point for further reticulation edges within already inserted reticulation edges. Notice that, as direct consequence, in each completely processed network, in which all input trees have been inserted so far, one still has to suppress the source nodes of all reticulation edges as these nodes have both in- and out-degree one.
\end{enumerate}

\begin{figure}[t] 
\centering 
\includegraphics[scale=1.2]{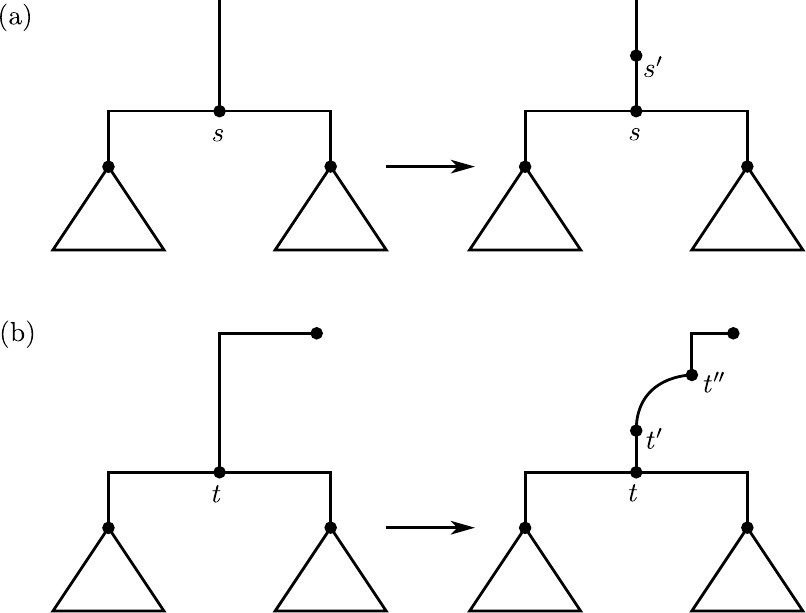}
\caption[Adding new reticulation edges into networks for binary trees]{Generating a source node \textbf{(a)} and a target node \textbf{(b)} for adding a new reticulation edge as described in Step~\Romannum{2}.}
\label{fig-stIns}
\end{figure}

In order to compute all relevant networks, one has to generate for each valid pair $(s,t)$ of source and target nodes a new network $\hat N$. This is necessary, since each of those networks contains different sets of embedded trees which can then be used for the insertion of further input trees and, thus, can initiate new computational paths leading to relevant networks.\\ 

For a better understanding, in Figure~\ref{fig-example} we illustrate the insertion of an input tree into a so far computed network.

\begin{figure}
\centering
\includegraphics[width = 12cm]{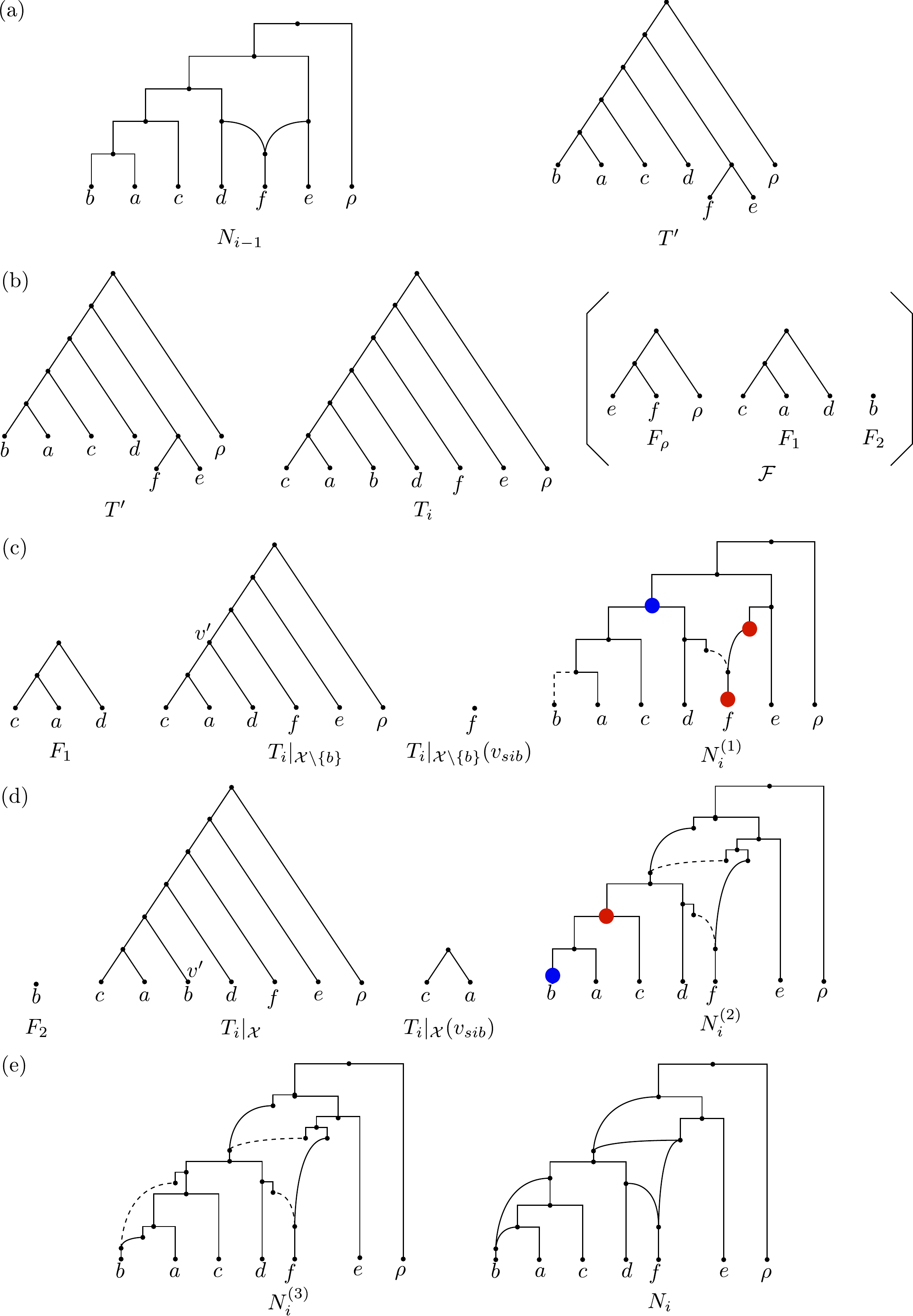}
\caption[An illustration of a tree-insertion regarding the algorithm \textsc{allHNetworks}]{An illustration of how an input tree $T_i$ is inserted into a network $N_{i-1}$. \textbf{(a)} The network $N_{i-1}$ together with an embedded tree $T'$. \textbf{(b)} The input tree $T_i$, which will be embedded into $N_{i-1}$ by inserting the maximum acyclic agreement forest $\cF$ of $T_i$ and $T'$ consisting of three components $F_{\rho}$, $F_1$, and $F_2$. \textbf{(c,d)} The important elements that have to be considered during the insertion of both components $F_1$ and $F_2$. Blue dots correspond to source nodes and red nodes to target nodes. Note that, regarding $N_i^{(1)}$, there is only one valid pair of source and target nodes. \textbf{(e)} The resulting network $N_i$, which is obtained from $N_i^{(3)}$ by suppressing each node of both in- and out-degree $1$.}
\label{fig-example}
\end{figure}

\subsection{Combinatorial complexity}

We finish the description of the algorithm by giving an idea of its combinatorial complexity. Given a set of rooted binary phylogenetic $\cX$-trees $\cT$, in order to guarantee the computation of all relevant networks for $\cT$, one has to consider the following combinations.

\begin{itemize}
\item[(1)] Take all possible orderings of $\cT$ into account.
\item[(2)] When adding a tree $T_i$ to a so far computed network $N$, each possible tree $T'$ that is displayed by $N$ has to be considered.
\item[(3)] When processing a so far computed network $N$ by adding a tree $T_i$ based on an a tree $T'$ that is displayed by $N$, take all acyclic orderings of each maximum acyclic agreement forest for $T_i$ and $T'$ into account.
\item[(4)] When adding a certain component of a maximum acyclic agreement forest for $T_i$ and $T'$ to a so far computed network, consider all valid pairs of source and target nodes.
\end{itemize}

Missing one those combinatorial elements could imply that a computational path leading to a relevant network is not visited. As a direct consequence, possibly either not all relevant networks are computed or the output consists only of those hybridization networks not providing a minimum hybridization number.

\clearpage
\subsection{Pseudocode of \textsc{allHNetworks}}

We end this section by giving a pseudocode summarizing all important steps of the algorithm \textsc{allHNetworks} described in the previous section. Some of those steps are denoted by a roman numeral that refers to the equally marked part of Section~\ref{sec-minNet}. 

\begin{algorithm}
\scriptsize
 \KwIn{Set $\cT$ of rooted phylogenetic $\cX$-trees}
 \KwOut{All topologically different hybridization networks $\cN$ with minimum hybridization number}
 \For{$k=1,\dots$}{
 $\cN=\emptyset$\;
 \ForEach{ordering $\pi$ of $\cT$}{
 $T_1 = \pi(1)$\;
 $\cN= \{T_1\}$\;
 \For{$i=2$ to $n$}{
	$T_i = \pi(i)$\;
	$\cN'=\emptyset$\;
	\ForEach{$N \in \cN$}{
	 \ForEach{$T'$ displayed in $N$}{
	 \ForEach{maximum acyclic agreement forest $\cF$ for $T'$ and $T_i$}{
	 \ForEach{acyclic ordering $(F_{\rho},F_1\dots,F_m)$ of $\cF$}{
	 	$\cN'' = \{N\}$\;
		\For{$j=1$ to $m$}{
		 $\cN''' = \emptyset$\;
		 \ForEach{$N'' \in \cN''$}{
		 	\Romannum{1}.\Romannum{1}~Compute all target nodes $\cV_t$ of $F_j$ in $N''$\;
			\Romannum{1}.\Romannum{2}~Compute all source nodes $\cV_a$ of \textit{Type A} of $F_j$ in $N''$\;
			\Romannum{1}.\Romannum{3}~Compute all source nodes $\cV_b$ of \textit{Type B} of $F_j$ in $N''$\;
			$\cV_s=\cV_a\cup \cV_b$\;
			\ForEach{$(s,t)\in\cV_s \times \cV_t : s\not\in N(t)$}{
			 $N''' = N''$\;
		 		\Romannum{2}~Insert reticulation edge $(s,t)$ in $N'''$\;
		 	 $\cN''' = \cN''' \cup \{N'''\}$\;
		 	}
		 }
		 $\cN'' = \cN'''$
		}
		\ForEach{$N'' \in \cN''$}{
		 	\If{$R(N'')<k$}{	 
		 		$\cN'=\cN'\cup \{N''\}$\;
		 	}
		 }
	 }
	 }
	 }
	}
	$\cN=\cN'$\;
 }
 }
 \If{$\cN\ne \emptyset$}{
 \Return{$\cN$}\;
 }
 }
\caption{\textsc{allHNetworks}$(\cT)$}
\end{algorithm}

\clearpage
\section{Use case}
\label{sec-useCase}

In the following, we give a demonstration of the algorithm \textsc{allHNetworks} by presenting a use case for three input trees with taxa set $\cX=\{\text{rho},1,2,\dots,10\}$. Each of the following Figures~\ref{fig-uc_1}--\ref{fig-uc_6} and Tables~\ref{tab-uc_1},~\ref{tab-uc} refers to a particular substep of the algorithm, which is discussed in the corresponding captions.\\

\begin{figure}
\centering
\begin{tabular}{cc}
\includegraphics[width = 6cm]{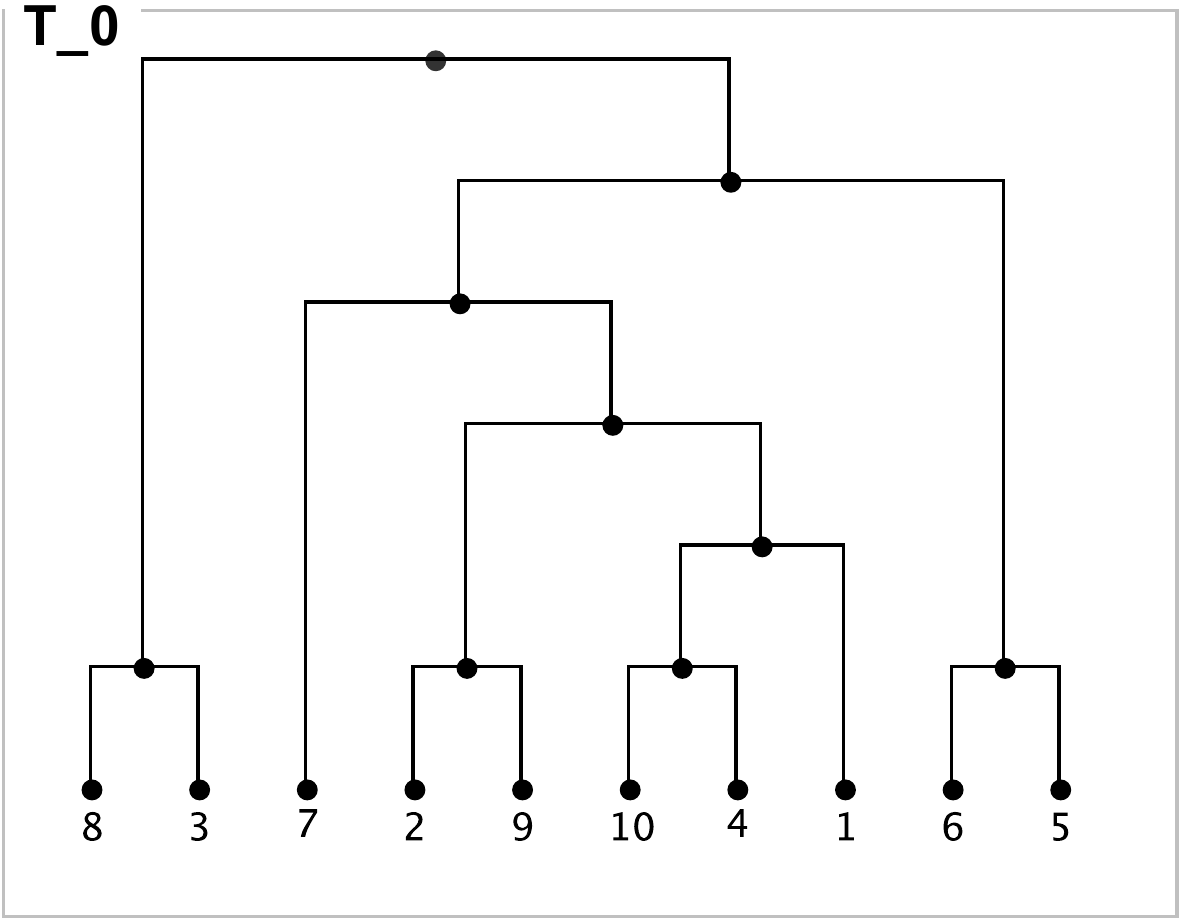}
&
\includegraphics[width = 6cm]{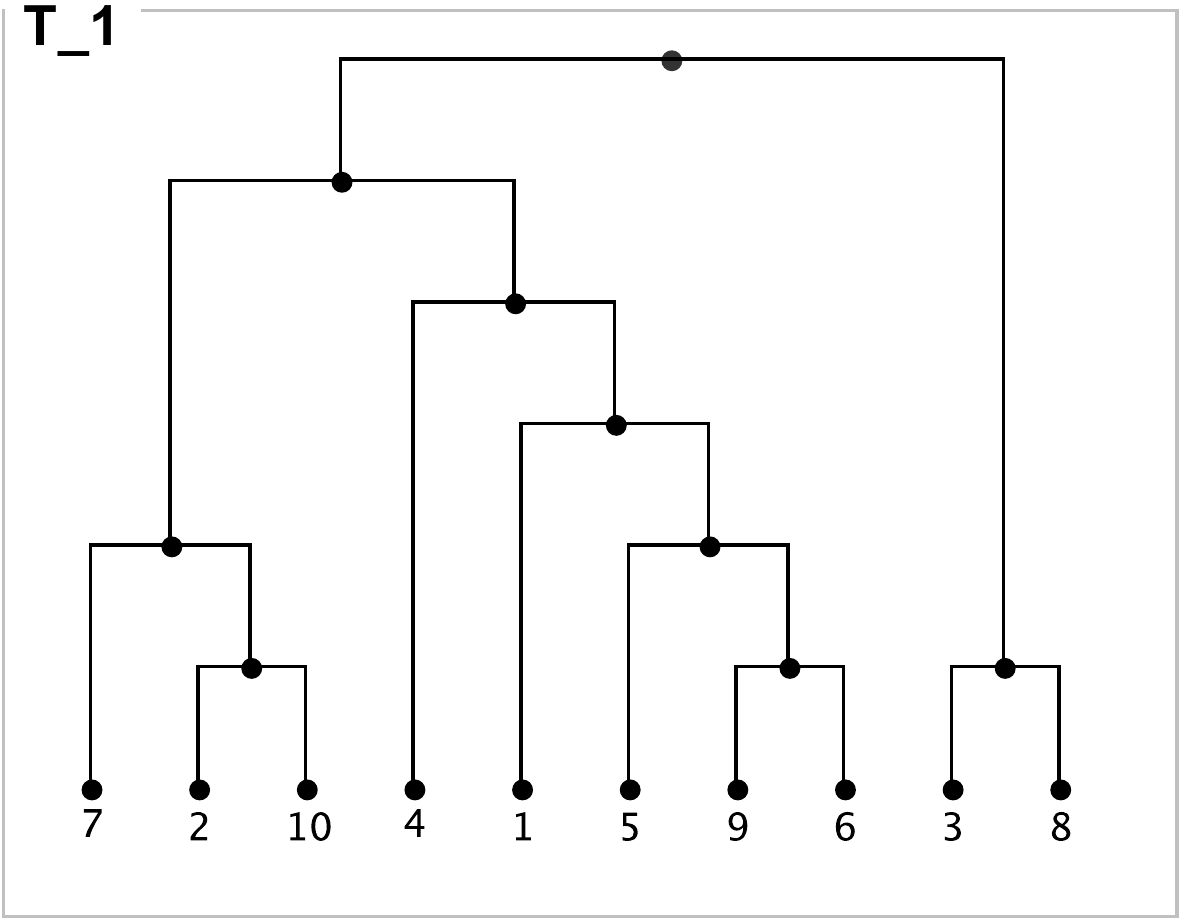}
\\
\includegraphics[width = 6cm]{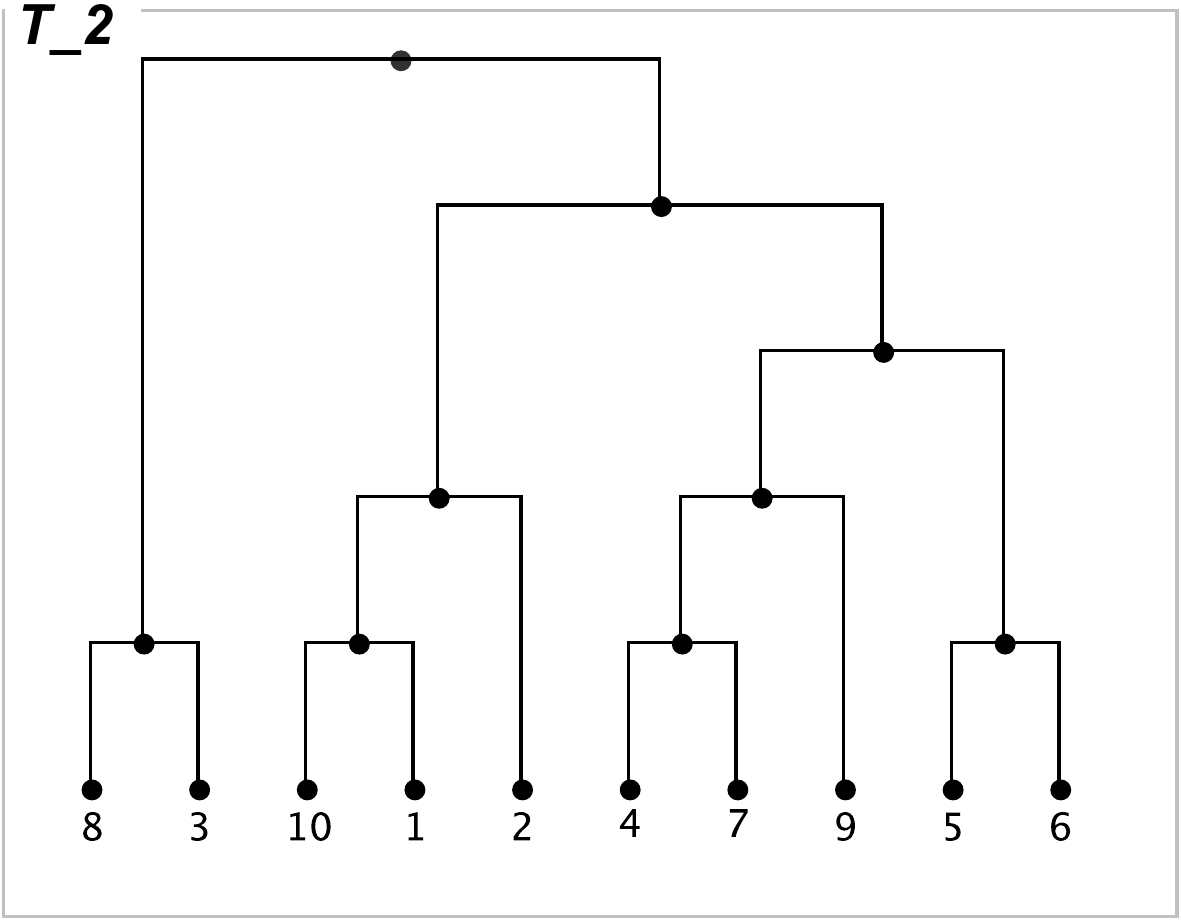}
&
\includegraphics[width = 6cm]{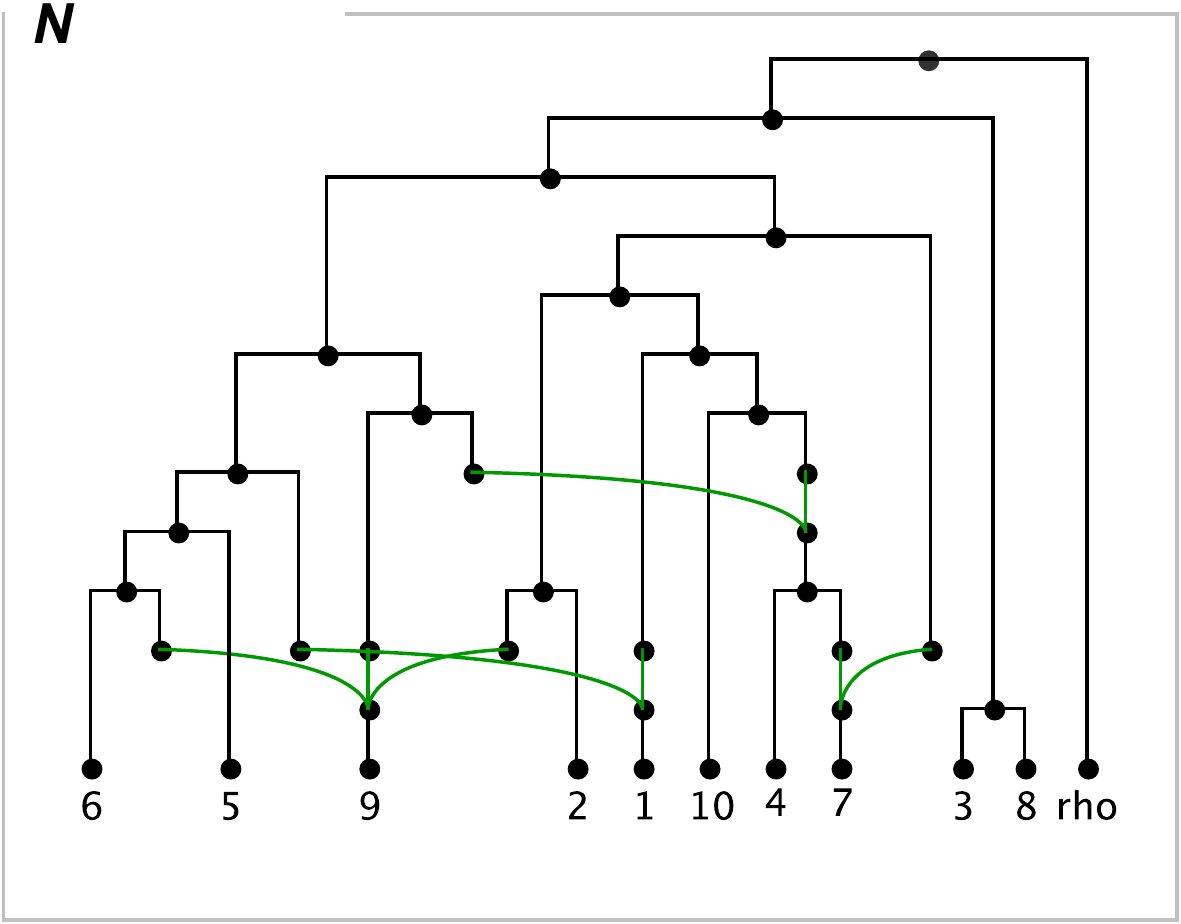}
\end{tabular}
\caption[Figure 1 of a use case referring to the algorithm \textsc{allHNetworks}]{The Figure shows the input set consisting of three rooted binary phylogenetic $\cX$-trees, namely $T_0$, $T_1$, and $T_2$, with $\cX=\{\text{rho},1,2,\dots,10\}$. The minimum hybridization network $N$ is one out five relevant networks for those input trees whose computation is now demonstrated step by step. The network is computed by applying the algorithm to the ordering $(T_1,T_2,T_0)$. Consequently, in a first step, $T_2$ has to be inserted into $T_1$.} 
\label{fig-uc_1}
\end{figure}

\begin{table}[!b]
\footnotesize
\caption[Parameters of a use case referring to the algorithm \textsc{allHNetworks}]{The computation of all pairs of source and target nodes based on $T_2$ given $T_1$ and the components depicted in Figure~\ref{fig-uc_2}. Notice that the notation refers to the one introduced in Section~\ref{sec-minNet}.} 
\begin{center} 
\begin{tabular}{>{\centering\arraybackslash}p{1cm}>{\centering\arraybackslash}p{1cm}>{\centering\arraybackslash}p{2cm}>{\centering\arraybackslash}p{1.5cm}>{\centering\arraybackslash}
p{1cm}>{\centering\arraybackslash}p{2cm}>{\centering\arraybackslash}p{1cm}>{\centering\arraybackslash}p{1cm}}
\toprule
$j$ & $F_j$ & $\cL(\cF')$ & $T_1|_{\cL(F_j)}$ & $\cV_t$ & $T_1|_{\cL(\cF')}(v_{\text{sib}})$ & $\cV_a$ & $\cV_b$\\
\midrule
1 & (7); & $\cX\setminus\{7,9,1\}$ & (7); & 7 & (4); & 4 & - \\
2 & (9); & $\cX\setminus\{9,1\}$ & (9); & 9 & (5,6); & 16 & - \\
3 & (1); & $\cX\setminus\{1\}$ & (1); & 1 & (10); & 10 & - \\
\bottomrule
\end{tabular} 
\end{center}
\label{tab-uc_1}
\end{table} 

\begin{figure}
\centering
\begin{tabular}{cc}
\includegraphics[width = 6cm]{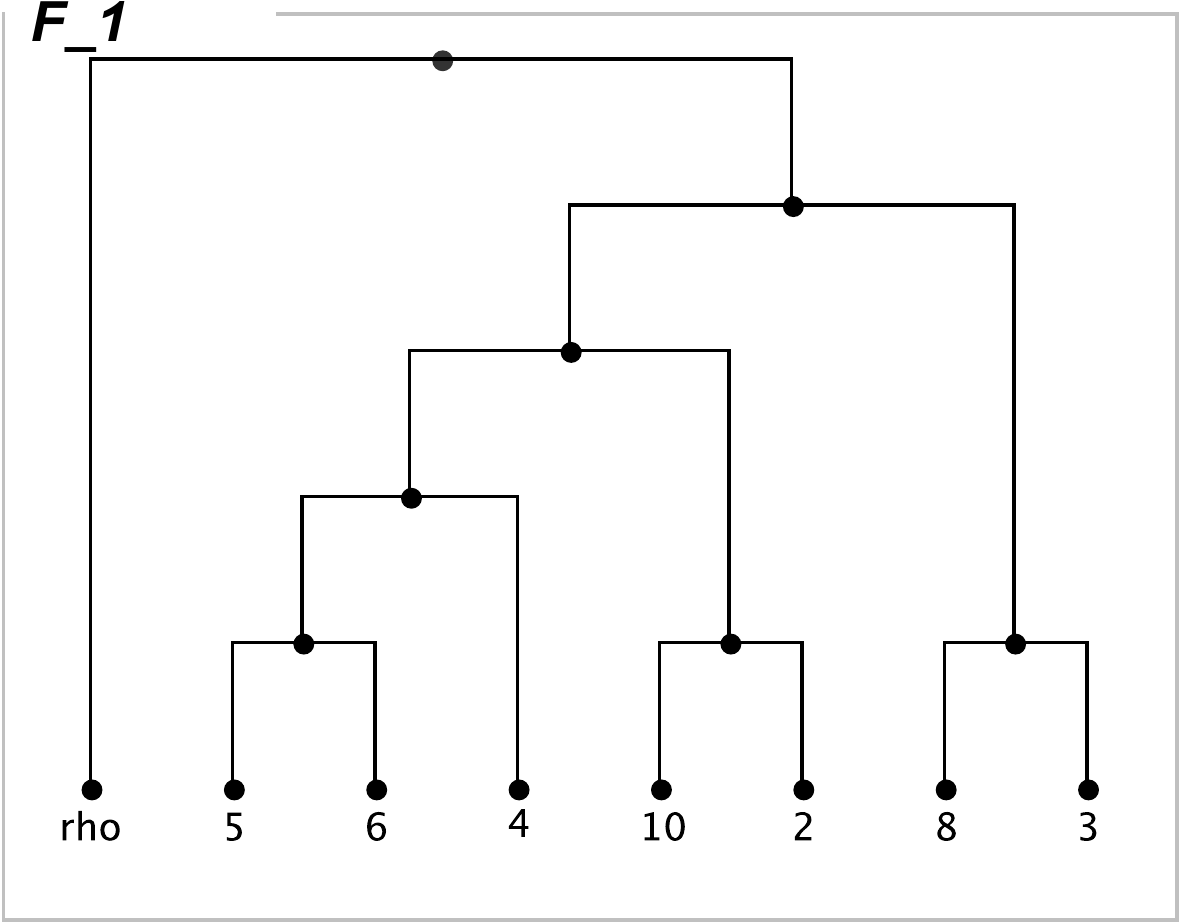}
&
\includegraphics[width = 6cm]{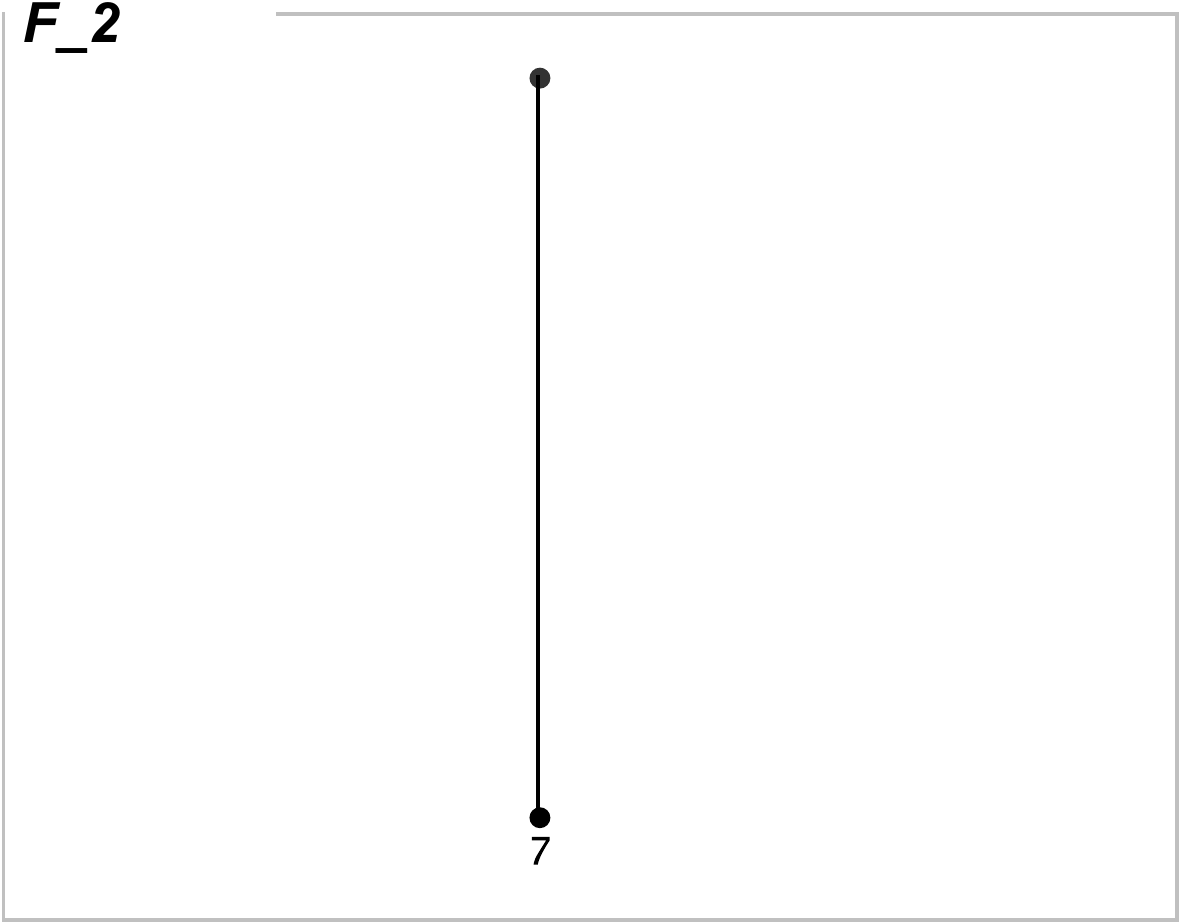}
\\
\includegraphics[width = 6cm]{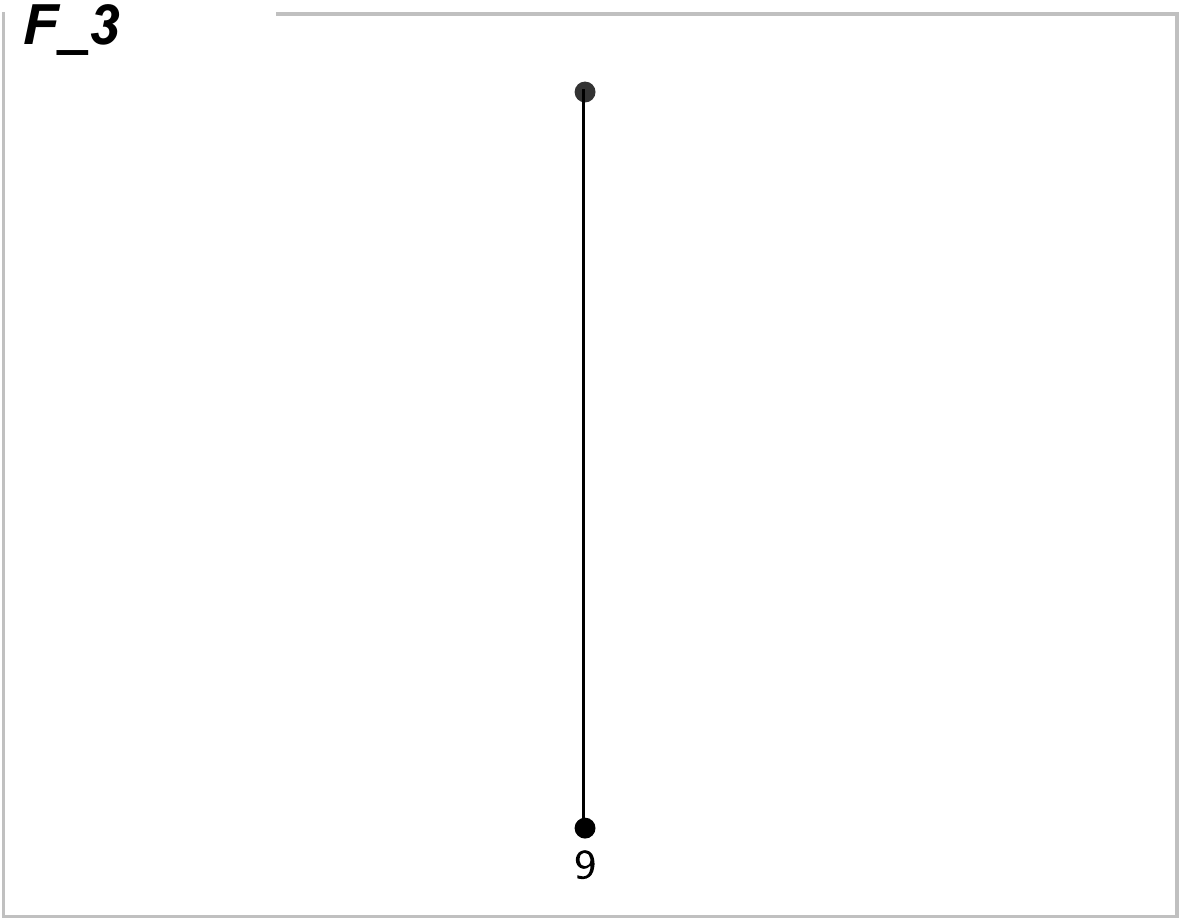}
&
\includegraphics[width = 6cm]{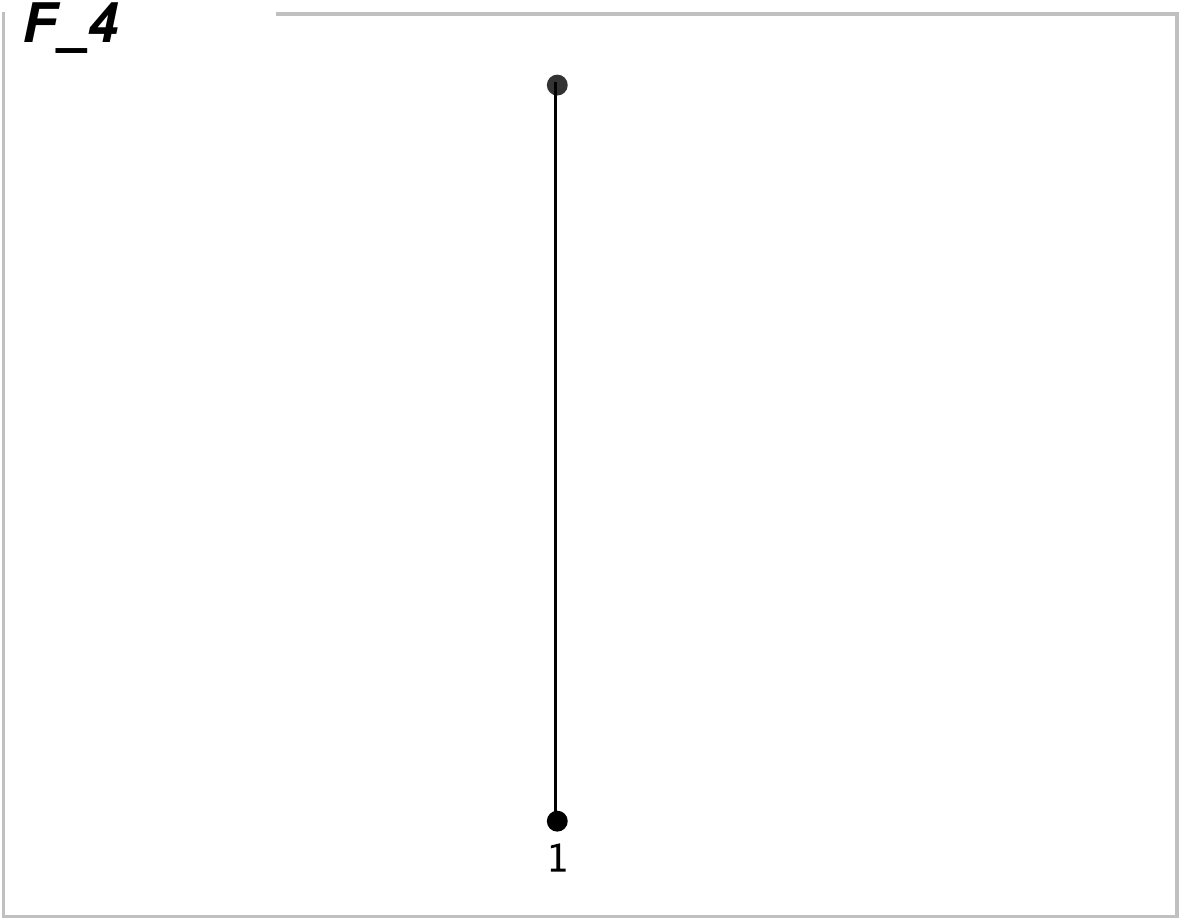}
\end{tabular}
\caption[Figure 2 of a use case referring to the algorithm \textsc{allHNetworks}]{At the beginning, the first network as well as its embedded tree both refer to $T_1$. Hence, in a first step, the maximum acyclic agreement forest $\{F_1,F_2,F_3,F_4\}$ for $T_1$ and $T_2$ is computed whose components are used in a subsequent step to receive $N_0$ displaying both trees.} 
\label{fig-uc_2}
\end{figure}

\begin{figure}
\centering
\begin{tabular}{cc}
\includegraphics[width = 6cm]{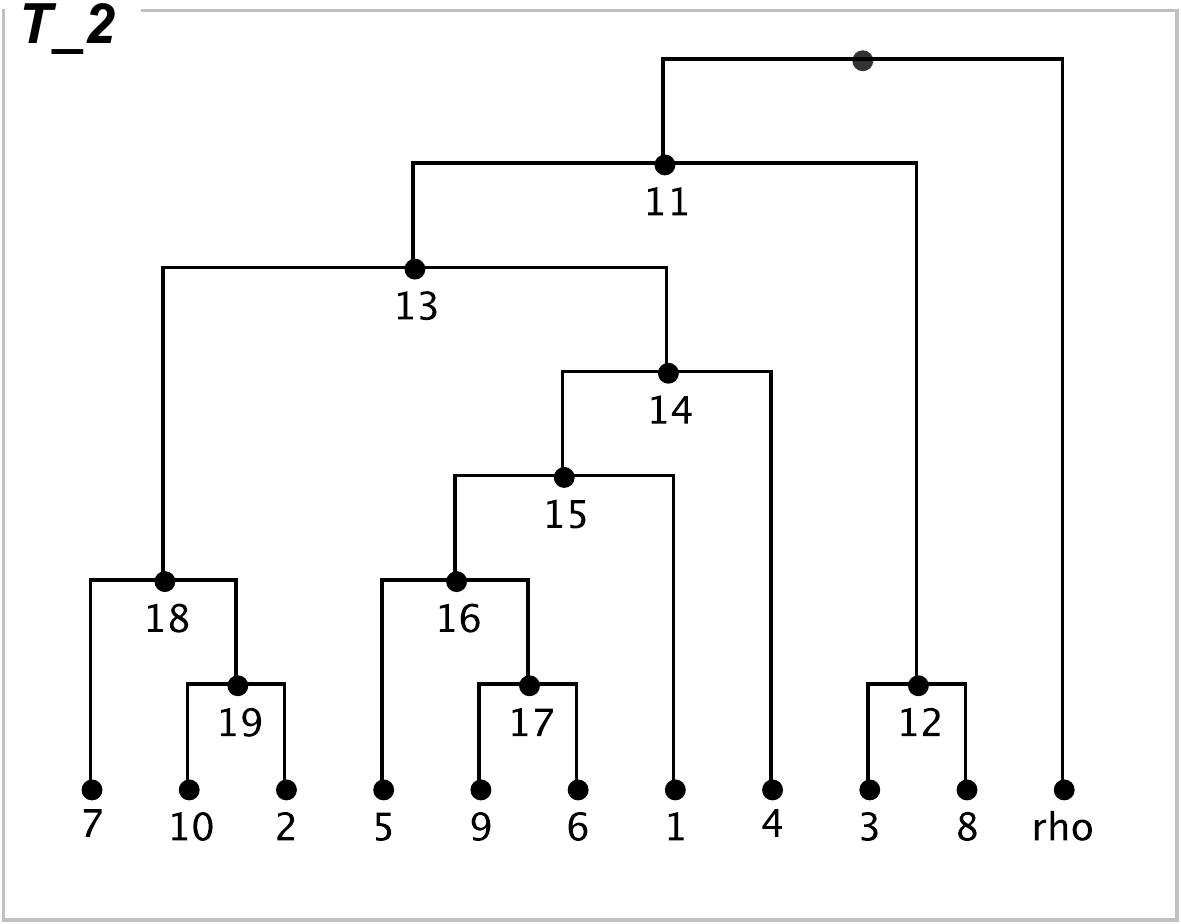}
&
\includegraphics[width = 6cm]{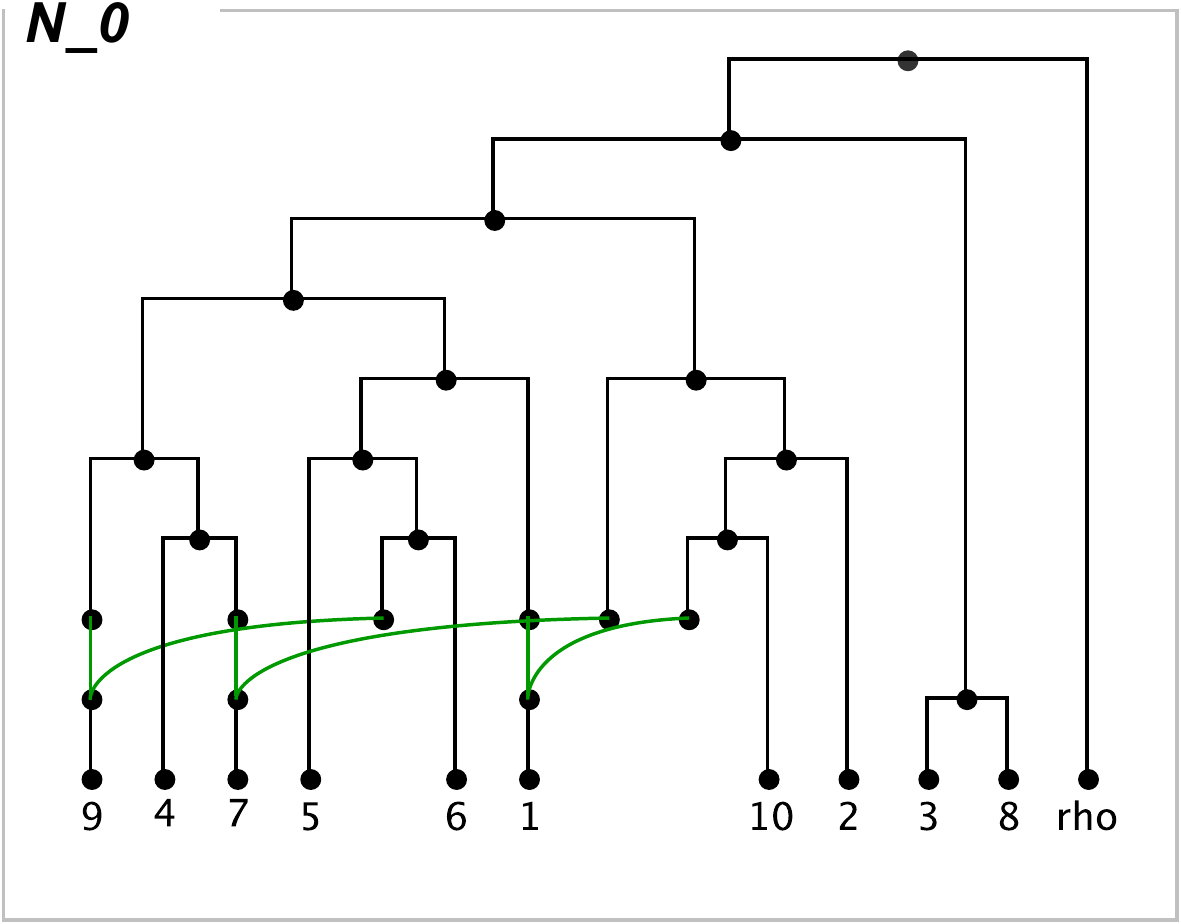}
\end{tabular}
\caption[Figure 3 of a use case referring to the algorithm \textsc{allHNetworks}]{Network $N_0$ is computed by adding the components $F_2$, $F_3$, and $F_4$ (cf.~Fig.~\ref{fig-uc_2}) sequentially in acyclic order to $T_2$. This is done by first computing pairs of target and source nodes (cf.~Step~\Romannum{1}.\Romannum{1}--\Romannum{3} of the algorithm \textsc{allHNetworks}) and then by inserting new reticulation edges for each of those pairs (cf.~Step~\Romannum{2} of the algorithm \textsc{allHNetworks}). Table~\ref{tab-uc_1} indicates the computation of these source and target nodes by referring to the notation used in Section~\ref{sec-minNet}.} 
\label{fig-uc_3}
\end{figure}

\begin{figure}
\centering
\begin{tabular}{cc}
\includegraphics[width = 6cm]{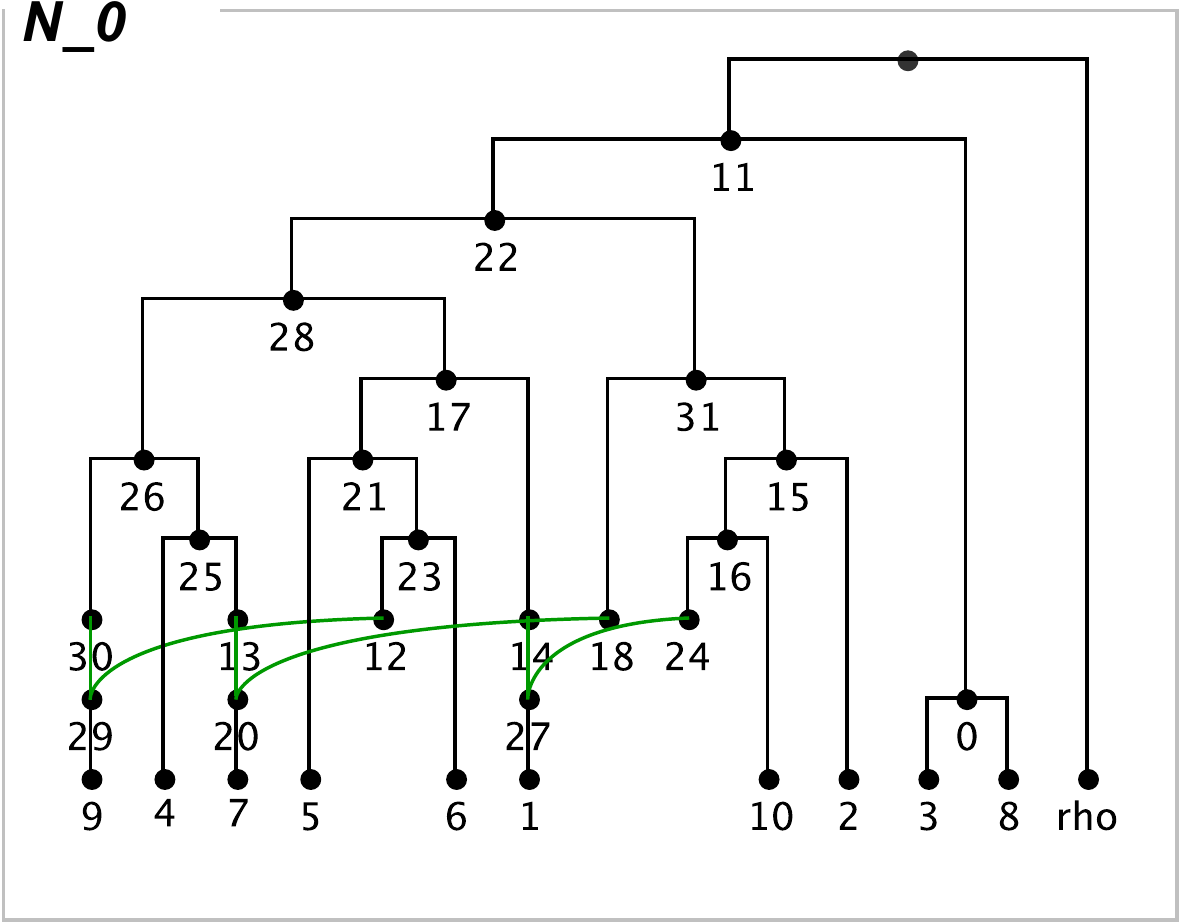}
&
\includegraphics[width = 6cm]{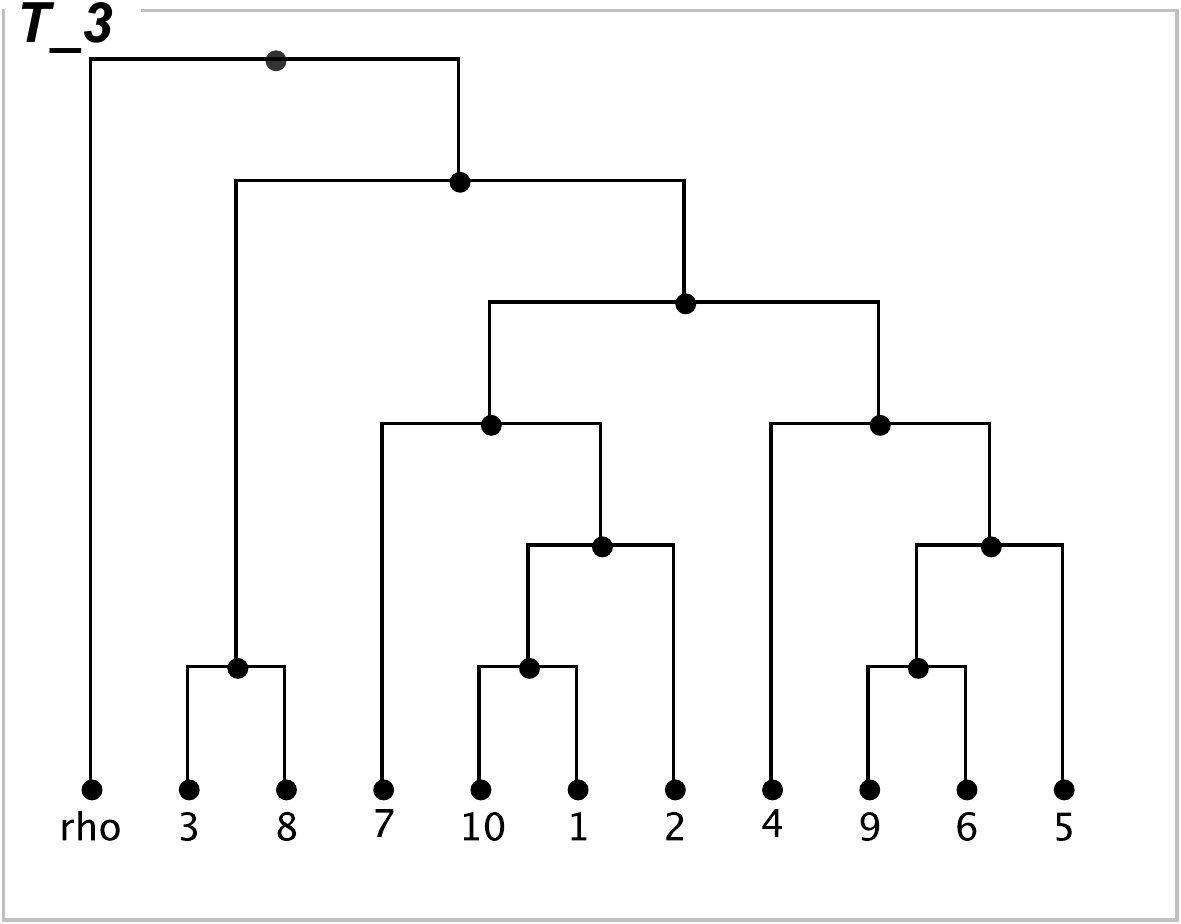}
\end{tabular}
\caption[Figure 3 of a use case referring to the algorithm \textsc{allHNetworks}]{Next, all embedded binary phylogenetic $\cX$-trees are extracted from $N_0$ each by selecting one in-edge of each hybridization node. One of those trees is $T_3$, which is received by selecting the edges $(12,29)$, $(18,20)$, and $(24,27)$.} 
\label{fig-uc_4}
\end{figure}

\begin{table}
\footnotesize
\caption[Parameters of a use case referring to the algorithm \textsc{allHNetworks}]{The computation of all valid pairs of source and target nodes based on $T_2$ given $N_0$, the extracted tree $T_3$ and the components depicted in Figure~\ref{fig-uc_5}. Note that the notation refers to the one introduced in Section~\ref{sec-minNet}.} 
\begin{center} 
\begin{tabular}{>{\centering\arraybackslash}p{1cm}>{\centering\arraybackslash}p{1cm}>{\centering\arraybackslash}p{2cm}>
{\centering\arraybackslash}p{1.5cm}>{\centering\arraybackslash}p{1cm}>{\centering\arraybackslash}p{2cm}>{\centering\arraybackslash}p{1cm}>{\centering\arraybackslash}p{1cm}}
\toprule
$i$ & $F_j$ & $\cL(\cF')$ & $T_0|_{\cL(F_j)}$ & $\cV_t$ & $T_0|_{\cL(\cF')}(v_{\text{sib}})$ & $\cV_a$ & $\cV_b$\\
\midrule
1 & (4); & $\cX\setminus\{4,9\}$ & (4); & 4, 25 & (10); & 10 & - \\
2 & (9); & $\cX\setminus\{9\}$ & (9); & 9 & (2); & 2 & - \\
\bottomrule
\end{tabular} 
\end{center}
\label{tab-uc}
\end{table} 

\begin{figure}
\centering
\begin{tabular}{cc}
\includegraphics[width = 6cm]{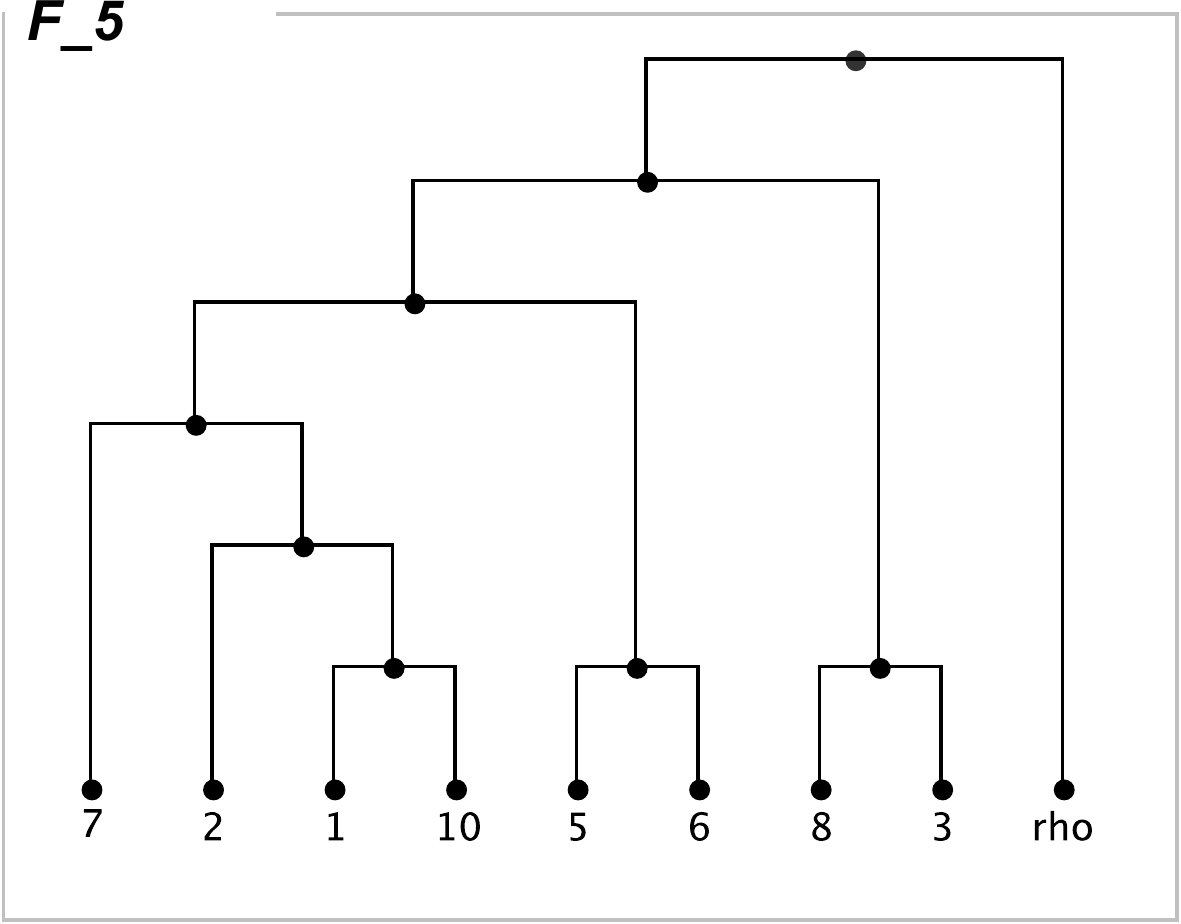}
&
\includegraphics[width = 6cm]{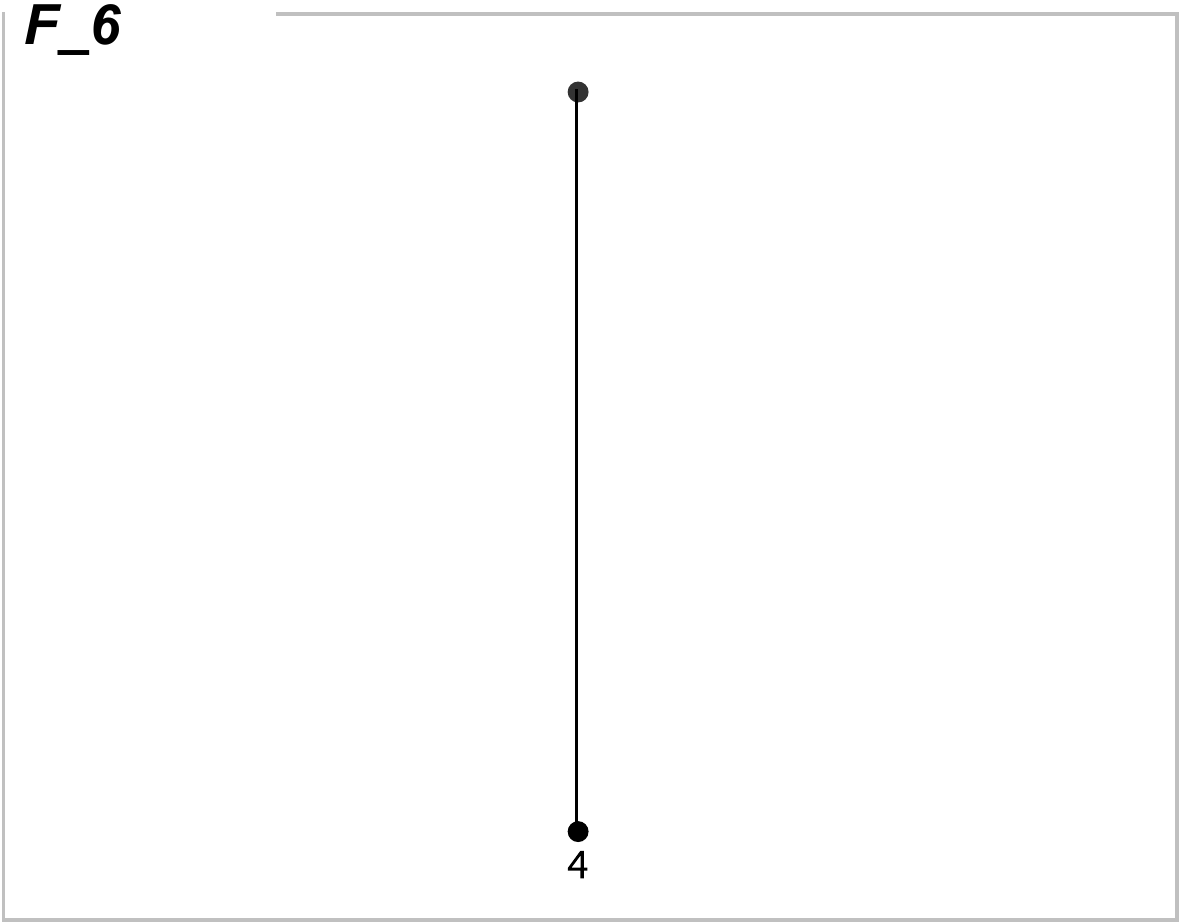}
\\
\multicolumn{2}{c}{\includegraphics[width = 6cm]{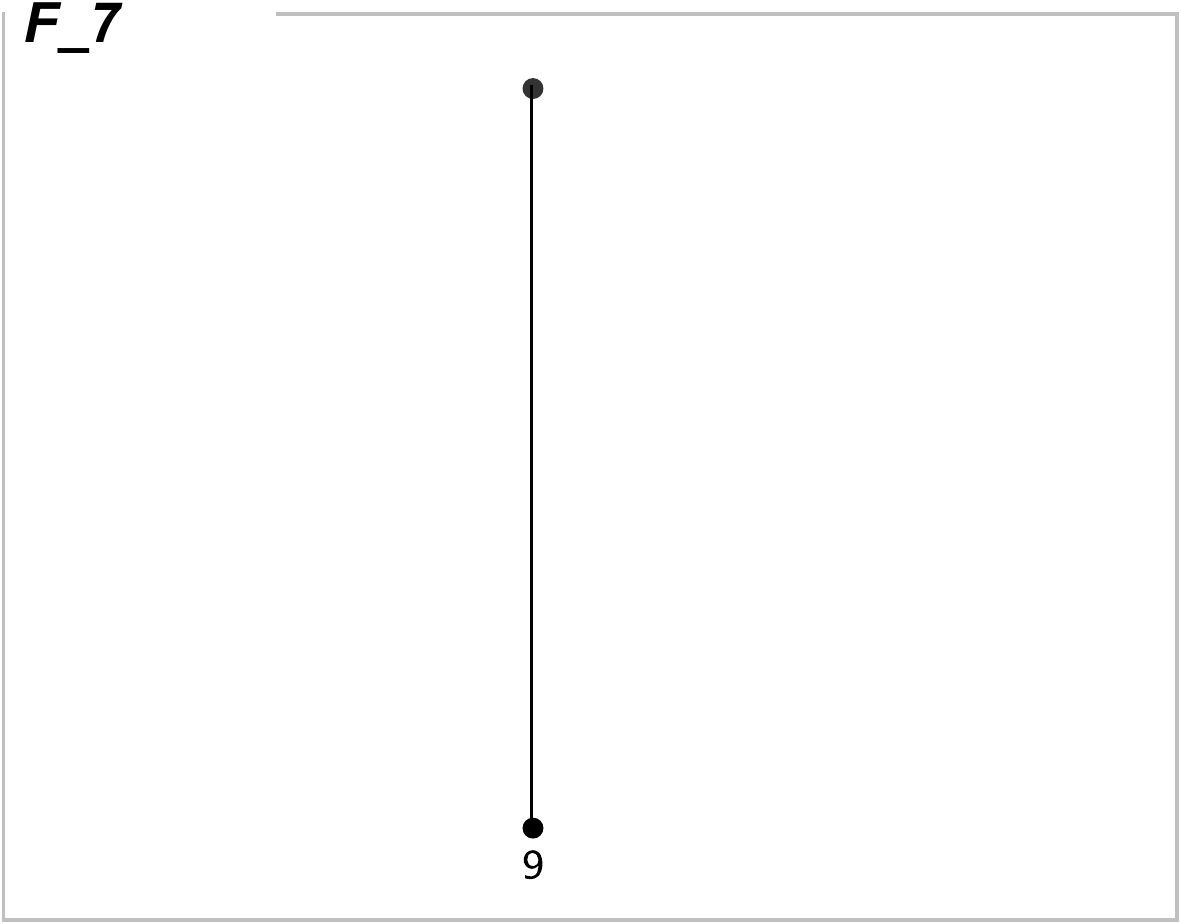}}
\\
\end{tabular}
\caption[Figure 4 of a use case referring to the algorithm \textsc{allHNetworks}]{Now, again a maximum acyclic agreement forest $\{F_5,F_6,F_7\}$ for the extracted tree $T_3$ and the input tree $T_0$ is computed whose components are used in a subsequent step to receive the final network $N$ displaying all input trees.} 
\label{fig-uc_5}
\end{figure}

\begin{figure}
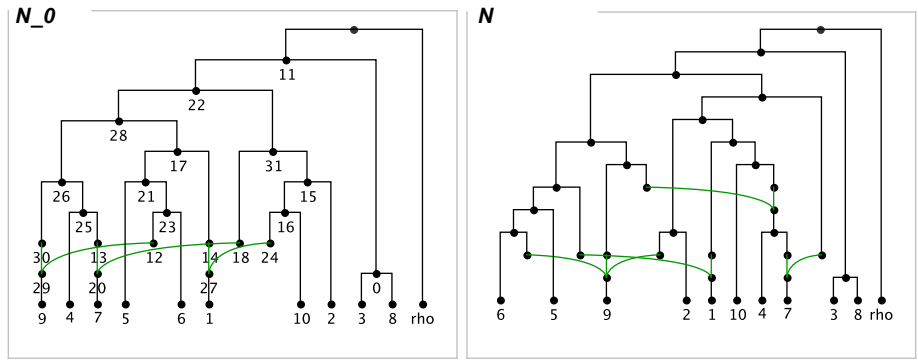

\centering
\begin{tabular}{cc}
\includegraphics[width = 6cm]{Network0with_Labels.pdf}
&
\includegraphics[width = 6cm]{Network2.pdf}
\end{tabular}
\caption[Figure 5 of a use case referring to the algorithm \textsc{allHNetworks}]{The relevant network $N$ is computed by adding the components $F_6$ and $F_7$ (cf.~Fig.~\ref{fig-uc_5}) sequentially in acyclic order to $N_0$. This is done by first computing target and source nodes (cf.~Step~\Romannum{1}.\Romannum{1}--\Romannum{3} of the algorithm \textsc{allHNetworks}) and then by inserting new reticulation edges for each pair of source and target nodes (cf.~Step~\Romannum{2} of the algorithm \textsc{allHNetworks}). Table~\ref{tab-uc} indicates the computation of all pairs of source and target nodes by referring to the notation used in Section~\ref{sec-minNet}.} 
\label{fig-uc_6}
\end{figure}

\clearpage
\section{Proof of correctness}
\label{sec-proof}

Here, we proof the main result of this work, namely that for a set of rooted binary phylogenetic $\cX$-trees the algorithm \textsc{allHNetworks} calculates all relevant networks as defined in Section~\ref{sec-pre}. 

\begin{theorem}
Given a set of binary rooted phylogenetic $\cX$-trees $\cT$, by calling $$\text{\textsc{allHNetworks}$(\cT)$}$$ all relevant networks for $\cT$ are calculated.
\label{th-one}
\end{theorem}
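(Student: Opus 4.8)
I would prove Theorem~\ref{th-one} by induction on the number $n$ of input trees, combining a soundness direction (every network output by \textsc{allHNetworks} is relevant) with a completeness direction (every relevant network is output). The soundness direction is the easier half: each step of the algorithm only inserts reticulation edges that are forced by components of a maximum acyclic agreement forest, the target/source node conventions in Steps~\Romannum{1}.\Romannum{1}--\Romannum{1}.\Romannum{3} explicitly forbid the creation of stacks of hybridization nodes and of hybridization nodes of out-degree $>1$, and the abort test $R(N'')<k$ together with the outer loop over $k$ guarantees the hybridization number is minimal. So one needs to check that after inserting all of $\cF$ the resulting network indeed displays $T_i$ (via the edge set obtained from the source nodes) while still displaying all previously embedded trees, i.e.\ that the path $P$ inserted in Step~\Romannum{2} and the choice of $t'$ preserve the old edge sets $E'$. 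This is a local verification at each inserted edge.

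\textbf{Completeness — the core of the argument.} Let $N$ be an arbitrary relevant network for $\cT=\{T_1,\dots,T_n\}$. I would fix an ordering $\pi$ of $\cT$ (all orderings are tried, so I may choose a convenient one) and argue that there is a computation path of \textsc{allHNetworks} on $\pi$ that produces exactly $N$. The inductive hypothesis is: for each $i<n$ there is a relevant network $N_{i}$ for $\{\pi(1),\dots,\pi(i)\}$ such that $N$ restricted in the appropriate sense ``refines'' $N_i$ — more precisely, $N_i$ is obtained from the minimal subnetwork of $N$ displaying $\pi(1),\dots,\pi(i)$ by suppressing degree-two nodes, and $N_i$ lies in the set $\cN$ computed by the algorithm after processing $\pi(i)$. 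For the inductive step I must add $\pi(i{+}1)=T_i$ and recover $N_{i}$ enriched to $N_{i+1}$. The key structural fact I would invoke is the one flagged in Section~\ref{sec-addTree}: the ``new'' reticulation edges that $N_{i+1}$ adds on top of $N_i$, read off from the embedding of $T_i$ inside $N$, correspond precisely to the components of an agreement forest for some embedded tree $T'$ of $N_i$ and $T_i$; and — this is where the paper's central claim must be used — because $N_{i+1}$ is relevant, hence of minimum hybridization number among networks displaying $\{\pi(1),\dots,\pi(i),T_i\}$ that refine $N_i$, the number of added edges is minimum, so this agreement forest is in fact a \emph{maximum acyclic} agreement forest. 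The acyclicity comes from the fact that $N_{i+1}$ (being a DAG) induces no directed cycle in $\scAG(T',T_i,\cF)$, and maximality comes from the minimum-hybridization-number hypothesis via the Baroni \emph{et al.}\ correspondence between agreement-forest size and reticulation number cited after the definition of acyclic ordering. Then one reads off an acyclic ordering of $\cF$ from the relative positions of the components in $N_{i+1}$, and for each component identifies its source node (Type~A or Type~B, according to whether the sibling of the attaching subtree already carries a $\cL(\cF')$-labelled leaf) and its target node inside $N_{i+1}$; one checks these nodes satisfy the membership conditions of Steps~\Romannum{1}.\Romannum{1}--\Romannum{1}.\Romannum{3} and that the pair is valid ($s\notin N(t)$, again from acyclicity), so the algorithm's inner loops will at some point generate exactly this insertion, producing $N_{i+1}$.

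\textbf{The main obstacle.} The delicate point is the matching between the ``global'' picture (the reticulation edges actually present in the fixed relevant network $N$ once we forget everything but $\pi(1),\dots,\pi(i+1)$) and the ``local'' picture (what the algorithm does when it extends one particular $N_i\in\cN$ by one agreement forest). Two things can go wrong and must be ruled out. First, the set of reticulation edges of $N$ that ``belong to step $i{+}1$'' is not canonically defined — a reticulation edge of $N$ may be usable by several of the $T_j$ — so I must argue that there is \emph{some} choice of embedded tree $T'$ of $N_i$ and \emph{some} way of splitting $N$'s edges into ``old'' and ``new'' for which the new ones form a maximum acyclic agreement forest and the resulting topology after suppression is literally $N_{i+1}$; this requires care because the algorithm inserts each new edge as a two-edge path through a fresh source node $s'$ and possibly two fresh nodes $t',t''$, and one must check that after processing all trees and finally suppressing all degree-two source nodes (as the paper notes at the end of Step~\Romannum{2}) the bookkeeping nodes disappear and exactly $N$ remains. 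Second, one must ensure the Type~B source nodes and the ``$v$ must not be part of a subtree of a later component'' restriction are rich enough to realise every attachment point that occurs in a genuine relevant network — this is exactly the content of the \emph{Remark} about sibling components $F_p,F_q$ and different acyclic orderings, and I would handle it by showing that whichever of the two orderings matches the nesting in $N_{i+1}$ is available to the algorithm. I expect that isolating the right invariant (``$N_i$ = suppress-degree-two of the minimal sub-DAG of $N$ on the first $i$ trees, and this $N_i$ is reachable'') and proving it is preserved is where essentially all the work lies; once the invariant is correctly stated, each individual verification is routine.
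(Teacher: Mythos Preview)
Your overall architecture (soundness plus an inductive completeness argument, tracking the intermediate networks $N_i$ obtained by restricting a fixed relevant $N$ to the first $i$ trees) matches the paper's strategy in spirit, and your treatment of source/target nodes and acyclic orderings corresponds closely to the paper's Lemma~\ref{lem-one}. But there is a real gap at the heart of your completeness argument, and it is precisely the point the paper isolates as the non-trivial part.

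You write that ``because $N_{i+1}$ is relevant, hence of minimum hybridization number among networks displaying $\{\pi(1),\dots,\pi(i),T_i\}$ that refine $N_i$, the number of added edges is minimum, so this agreement forest is in fact a \emph{maximum} acyclic agreement forest''. This inference is not valid for an arbitrary ordering $\pi$. The restriction $N_{i+1}$ of a globally minimum $N$ to the first $i{+}1$ trees need \emph{not} have minimum hybridization number for that subset: extra reticulations may be present in $N_{i+1}$ only because they will later be re-used when embedding $T_{i+2},\dots,T_n$. The paper's Figure~\ref{fig-order} gives an explicit three-tree instance where, for the ordering $(T_0,T_1,T_2)$, the relevant network in question can only be reached by inserting $T_1$ via a \emph{non}-maximum acyclic agreement forest; the intermediate $N_2$ there has reticulation number $2$ while $h(\{T_0,T_1\})=1$. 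So your ``maximality from minimality'' step fails for that ordering.

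You do say ``all orderings are tried, so I may choose a convenient one'', and that is exactly the right escape hatch --- but the entire difficulty is proving that a convenient ordering \emph{exists}. The paper does this in Lemma~\ref{lem-three} by a separate induction: if at some step $i$ the agreement forest would have to be non-maximum, one shows (via a counting argument comparing $r_1$ and $r_2$ and exploiting that $N$ is globally minimum) that moving $T_i$ to the end of the ordering yields an ordering in which that particular obstruction disappears, and the inductive hypothesis handles the shorter prefix. Your proposal currently contains no analogue of this reordering argument; the passage ``I expect that isolating the right invariant \ldots\ is where essentially all the work lies'' is true, but the invariant you state (that each $N_i$ is relevant for the first $i$ trees) is false for a generic $\pi$ and you have not explained how to choose $\pi$ to make it true. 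Until you supply that step, the completeness direction is incomplete.
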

For clarity, here we consider two relevant networks $N_1$ and $N_2$ as being different if both graph topologies of $N_1$ and $N_2$ (disregarding the embedding of $\cT$) differ. 

\begin{proof}
The proof of Theorem~\ref{th-one} is based on the following three Lemmas \ref{lem-two}--\ref{lem-three}. Here, we first show that the concept of acyclic agreement forest suffices to generate all of the desired networks. Next, we argue that for inserting acyclic agreement forests the algorithm takes all necessary pairs of source and target nodes into account.  Finally, we proof that by taking all orderings of the input trees into account it suffices to focus only on acyclic agreement forests of minimum size, i.e., maximum acyclic agreement forests. Before entering the first lemma, however, we first have to introduce some further notations.

Let $N$ and $N'$ be two rooted phylogenetic networks on $\cX$. Then, we say that \emph{$N'$ is displayed by $N$}, shortly denoted by $N \supset N'$, if $N'$ can be obtained from $N$ by first deleting some of its reticulation edges and then by suppressing all nodes of both in- and out-degree~1.

Similarly, let $N$ be a hybridization network on $\cX$ displaying two rooted phylogenetic $\cX$-trees $T_1$ and $T_2$. Now, given an acyclic agreement forest $\cF$ for those two trees, we say that \emph{$N$ displays $\cF$}, shortly denoted by $N\supset \cF$, if we can obtain $\cF$ from $N$ as follows. Regarding $N$, let $E_1$ and $E_2$ be two sets of reticulation edges referring to $T_1$ and $T_2$, respectively. First in $N$ all reticulation edges are deleted that are not contained in $E_1\cap E_2$ and then all nodes of both in- and out-degree $1$ are suppressed. Notice that, by deleting those edges the network is disconnected into a set of disjoint trees each corresponding to exactly one of the components in $\cF$.

Let $\cT=\{T_1,T_2,\dots,T_n\}$ be a set of rooted binary phylogenetic $\cX$-trees and let $N$ be a hybridization network displaying $\cT$. Moreover, let $E_i$ be an edge set in $N$ referring to a tree $T_i\in T$. Then, for a tree $T_k\in\cT$ the edge set $\hat E^{(k)}$ refers to the edge set $$E_k\setminus E_1\cup E_2\dots\cup E_{k-1}\cup E_{k+1}\dots\cup E_n.$$ This means, in particular, that $\hat E^{(k)}$ consists of those reticulation edges that are only necessary for displaying $T_k$ and none of the other trees in $\cT$. 

Next, let $N$ be a phylogenetic network and let $E'$ be a subset of its reticulation edges. Then, by writing $N\ominus E'$ we refer to the network that is obtained from $N$ by first deleting each edge in $E'$ and then by suppressing each node of both in- and out-degree $1$.

\begin{lemma}
Let $\cT=\{T_1,T_2,\dots,T_n\}$ with $n>1$ be a set of rooted binary phylogenetic $\cX$-trees and let $N$ be a hybridization network displaying $\cT$. Moreover, let $E_i$ with $i\in[1:n]$ be an edge set referring to the respective tree $T_i\in\cT$ in $N$. Then, for each tree $T_k$ in $\cT$ the network $N\ominus\hat E^{(k)}$ contains an embedded tree $T'$ such that $N$ contains an acyclic agreement forest $\cF$ for $T'$ and $T_k$, i.e., $N\supset\cF$ holds.
\label{lem-two}
\end{lemma}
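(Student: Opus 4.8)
The plan is to extract, from the hybridization network $N$, the tree $T'$ and the acyclic agreement forest $\cF$ simultaneously. Fix the edge set $E_k$ referring to $T_k$ in $N$, and recall that $\hat E^{(k)}$ consists of precisely those reticulation edges of $E_k$ that are used by \emph{no} other $E_i$. The candidate tree will be $T' = \overline{N\ominus\hat E^{(k)}}$ restricted appropriately; since $\hat E^{(k)}\subseteq E_k$, after deleting $\hat E^{(k)}$ and suppressing degree-$2$ nodes we still have a network $N\ominus\hat E^{(k)}$ that displays all of $T_1,\dots,T_{k-1},T_{k+1},\dots,T_n$ (none of their edge sets met $\hat E^{(k)}$), and it still displays at least one binary $\cX$-tree, which we take as $T'$ — concretely, take $T'$ to be the tree obtained by choosing, for each remaining reticulation node, the in-edge dictated by some fixed $E_j$ with $j\neq k$ (so $T'$ is simultaneously displayed by $N$). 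First I would make this choice of $T'$ precise and note $N\supset T'$ and $N\ominus\hat E^{(k)}\supset T'$.

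Next I would identify the forest. Inside $N$, let $E'$ be the edge set obtained from $E_k$ by replacing each edge of $\hat E^{(k)}$ with the corresponding $T'$-edge at that reticulation node; then $E'$ refers to $T'$ in $N$ and differs from $E_k$ only on the reticulation nodes that are heads of $\hat E^{(k)}$-edges. Now delete from $N$ every reticulation edge not in $E_k\cap E'$ and suppress degree-$2$ nodes. Because $E_k$ and $E'$ agree except at the $\hat E^{(k)}$-reticulations, the deleted edges are exactly the edges incident to those reticulation nodes, so $N$ falls apart into a collection of node-disjoint subtrees $\cF=\{F_\rho,F_1,\dots,F_{m}\}$. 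I would then verify the three agreement-forest axioms: each $F_i$ equals both $T_k|_{\cL(F_i)}$ and $T'|_{\cL(F_i)}$ (because $F_i$ is an induced connected piece that survives inside the embedding of $T_k$ via $E_k$ and inside the embedding of $T'$ via $E'$); there is exactly one component containing $\rho$; and the images of the $F_i$ in $T_k$ and in $T'$ are node-disjoint, which follows from the fact that we cut $N$ at each $\hat E^{(k)}$-reticulation and the embeddings of $T_k$, $T'$ route through $N$'s tree-edges and the retained reticulation edges in a way that respects those cuts. By construction $N\supset\cF$.

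Finally I would show $\cF$ is acyclic, i.e. $\scAG(T',T_k,\cF)$ has no directed cycle. The point is that $N$ itself is a DAG: there is a topological order on $V(N)$, and for each component $F_i$ the root of its image in $T_k$ (resp. $T'$) maps back to a specific node $r_i^{k}$ (resp. $r_i'$) of $N$. An edge $(F_i,F_j)$ in $\scAG$ means $r_i^{k}$ is an ancestor-in-$T_k$ of $r_j^{k}$ or $r_i'$ is an ancestor-in-$T'$ of $r_j'$; in either case, since the embeddings of $T_k$ and $T'$ are both obtained from $N$ by restricting to a subset of its edges, ancestry in the embedded tree implies ancestry in $N$, so $r_i^{\bullet}$ precedes $r_j^{\bullet}$ in $N$'s topological order. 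Choosing the topological order of $N$ to refine a common ordering of the component-roots, a directed cycle in $\scAG$ would force a cycle in $N$, contradiction. I expect the main obstacle to be the bookkeeping in the previous paragraph: namely pinning down precisely which reticulation nodes of $N$ get disconnected when we pass to $E_k\cap E'$, and arguing cleanly that the resulting pieces are simultaneously restricted subtrees of $T'$ and $T_k$ with node-disjoint images — this requires being careful that suppressing degree-$2$ nodes in the two different embeddings $N|_{E_k}$ and $N|_{E'}$ really does send each $F_i$ to the same abstract tree.
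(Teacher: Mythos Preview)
Your approach is essentially the paper's: build $\cF$ by cutting the $T_k$-embedding in $N$ at the reticulation nodes hit by $\hat E^{(k)}$, build $T'$ from an edge set $E'$ that agrees with $E_k$ at every other reticulation node, and deduce acyclicity from $N$ being a DAG. The paper's proof is terser (it dispatches acyclicity in a single sentence), and your topological-order argument is a correct expansion of that sentence.

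There is, however, a genuine bookkeeping slip in your construction of $T'$. You first define $T'$ in $N\ominus\hat E^{(k)}$ by selecting, at each remaining reticulation node, the in-edge coming from one fixed $E_j$ with $j\neq k$; you then define $E'$ by keeping the $E_k$-edge wherever it is not in $\hat E^{(k)}$ and substituting the ``$T'$-edge'' at the $\hat E^{(k)}$-nodes. These two choices do not match. At a reticulation node $v$ whose $E_k$-in-edge happens to lie in some $E_\ell$ with $\ell\neq k$ (so that edge is \emph{not} in $\hat E^{(k)}$) but is different from the $E_j$-in-edge, your $T'$ routes through the $E_j$-edge while your $E'$ routes through the $E_k$-edge. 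Hence $E'$ does not refer to your $T'$, and worse, for your $T'$ the forest $\cF$ need not be an agreement forest: the component of $\cF$ that straddles $v$ is connected in $T_k$ via the $E_k$-edge, but in your $T'$ the subtree below $v$ is attached elsewhere, so that component is generally not a restricted subtree of $T'$.

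The repair is exactly what the paper does: discard the preliminary $E_j$-based description and \emph{define} $T'$ to be the tree that $E'$ refers to. Since $E'\cap\hat E^{(k)}=\emptyset$ by construction, this $T'$ is displayed in $N\ominus\hat E^{(k)}$, and now $E_k$ and $E'$ genuinely differ only at the $\hat E^{(k)}$-reticulation nodes, so deleting all reticulation edges outside $E_k\cap E'$ cuts precisely there and your verification of the agreement-forest axioms goes through.
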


\begin{proof}
Let $E_k$ be an edge set in $N$ referring to $T_k$ and, based on $E_k$, let $\hat E^{(k)}$ be the edge set in $N$ as defined above. Moreover, let $\cF$ be a set of subtrees that is derived from $N$ as follows. First, the network $N'$ is computed by removing each edge $e$ with $e\not\in E_k$. Next, each edge $e$ in $N'$ with $e\in\hat E^{(k)}$ is removed and, finally, each node of both in- and out-degree $1$ is suppressed. As the tree that can be derived from $N'$ by suppressing its nodes of both in- and out-degree one corresponds to $T_k$, it is easy to see that $\cF$ consists of common subtrees of $T_k$. Furthermore, as $\cF$ is obtained from $N'$ by cutting some of its edges, this implies that $\cF$ is a set of node-disjoint subtrees in $T_k$.

Next, we will show how one can derive an edge set $E'$ referring to a phylogenetic $\cX$-tree $T'$ displayed in $N$ so that $\cF$ is an agreement forest for $T'$ and $T_k$. Therefor, we say a reticulation edge $e$ of $N$ is of \emph{Type A}, if $e\in E_k\setminus\hat E^{(k)}$, and of \emph{Type B}, if $e\not\in E_k$. Now, let $E'$ be a subset of reticulation edges that is obtained from $N$ by visiting all of its reticulation nodes as follows. If, for a reticulation node, there exists an in-edge $e$ of \emph{Type A}, this edge is selected, otherwise, an arbitrary in-edge of \emph{Type B} is selected. As each edge in $E'$ is also contained in $N\ominus\hat E^{(k)}$, it is easy to see that $T'$ is also displayed by $N\ominus\hat E^{(k)}$. 


Now, let $\hat E'$ be the set of reticulation edges that is removed from $N$ by restricting $N$ on $E'$ and let $E_{\cF}$ be the set of reticulation edges that has been removed from $N$ in order to obtain $\cF$. Then, the target of each reticulation edge in $\hat E'\setminus E_{\cF}$ is a reticulation node providing an in-edge of $E^{(k)}$, which  has been removed from $N'$ (and, thus, actually from $T_k$) in order to obtain $\cF$. As a direct consequence, each component in $\cF$ can be also obtained from $T'$ by cutting some of its edges, which directly implies that $\cF$ is a set of node-disjoint subtrees in $T'$.

As a direct consequence, $\cF$ is an agreement forest for both trees $T_k$ and $T'$. Moreover, since $N$ is a hybridization network and, consequently, does not contain any directed cycles, $\cF$ has to be an acyclic agreement forest for both trees. 
\end{proof}

This means that for inserting further rooted binary phylogenetic $\cX$-trees into so far computed networks it is sufficient to focus only on acyclic agreement forests. Notice, however, that the insertion of further reticulation edges based on such agreement forests can be conducted in several ways. Thus, in order to calculate all relevant networks, our algorithm has to guarantee that all of those possibilities are exploited, which is stated by the following lemma. 

\begin{lemma}
Let $\cT=\{T_1,T_2,\dots,T_i\}$ be a set of rooted binary phylogenetic $\cX$-trees, $N_{i-1}$ be a network displaying each tree in $\cT\setminus\{T_i\}$, $E'$ be an edge set referring to some embedded tree $T'$ of $N_{i-1}$, and $\cF$ be an acyclic agreement forest for $T'$ and $T_i$. Then, the algorithm \textsc{allHNetworks} inserts $\cF$ into $N_{i-1}$ so that each hybridization network $N_i$ displaying $\cT$ with $N_i\supset\cF$ and $N_i\supset N_{i-1}$ is calculated.
\label{lem-one}
\end{lemma}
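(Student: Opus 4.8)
The plan is to fix the target network $N_i$ (with $N_i \supset \cF$ and $N_i \supset N_{i-1}$) and an acyclic ordering of $\cF$ witnessed inside $N_i$, and then argue by induction on the components of that ordering that the algorithm, when processing $N_{i-1}$ along the same acyclic ordering, passes through a sequence of intermediate networks $N''_0 = N_{i-1}, N''_1, \dots, N''_m$ such that each $N''_j$ is exactly the network obtained from $N_i$ by deleting all reticulation edges added for the components $F_{j+1}, \dots, F_m$ (and suppressing degree-two nodes). First I would use the hypothesis $N_i \supset \cF$ to extract, for each component $F_j$, the reticulation edge $e_j$ of $N_i$ whose deletion disconnects the subtree labelled by $\cL(F_j)$; since $\cF$ is acyclic, Baroni \emph{et al.}'s construction (quoted in the preliminaries) gives an acyclic ordering $(F_\rho, F_1, \dots, F_m)$ of $\cF$ compatible with the nesting of these edges in $N_i$, and one checks this ordering is among those the algorithm's innermost \textbf{ForEach} loop enumerates.

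The core of the argument is the inductive step. Assuming the algorithm has reached the intermediate network $N''_{j-1}$ equal to $N_i$ with the edges $e_j, \dots, e_m$ removed, I would show that the endpoints of $e_j$ in $N_i$ qualify, respectively, as a valid source node in $\cV_s = \cV_s^A \cup \cV_s^B$ and a valid target node in $\cV_t$ for $F_j$ in $N''_{j-1}$. The target side is the cleaner direction: the head of $e_j$ sits above the subtree carrying $\cL(F_j)$, and restricting $N''_{j-1}$ to $\cL(\cF') \cup \cL(F_j)$ (with $\cF' = \{F_\rho, F_1, \dots, F_{j-1}\}$) makes exactly that node satisfy the isomorphism condition of Step~\Romannum{1}.\Romannum{1}; the exclusion of nodes that are already sources of reticulation edges matches the fact that $N_i$, being relevant, contains no stacks. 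For the source side I would case-split according to whether the tail of $e_j$ in $N_i$ lies on an edge already carrying $\cL(\cF')$-labelled leaves (giving a \emph{Type A} node, via the sibling characterisation using $v_{\text{sib}}$) or on a subtree consisting only of later components' leaves together with none of $\cL(\cF')$ (giving a \emph{Type B} node); this dichotomy is precisely what Steps~\Romannum{1}.\Romannum{2}--\Romannum{1}.\Romannum{3} are designed to capture, and the \emph{Remark} handles the subcase where $F_j$ and some later $F_q$ are siblings, which is why all acyclic orderings must be tried. The validity condition $s \notin N(t)$ holds because $N_i$ is acyclic. Finally, inserting the reticulation edge $(s,t)$ as in Step~\Romannum{2} reproduces $e_j$ up to the bookkeeping nodes $s', t', t''$, so the new intermediate network is $N_i$ with $e_{j+1}, \dots, e_m$ removed, closing the induction; after $m$ steps we recover $N_i$ itself, and since $R(N_i)$ is the minimum it survives the pruning test $R(N'') < k$ for the relevant value of $k$.

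The main obstacle I expect is the source-node step, specifically verifying that the tail of $e_j$ in $N_i$ always falls into one of the two prescribed types and, conversely, that the algorithm's restricted-network definitions pick out exactly that node rather than a spurious one. The subtlety is that after deleting $e_{j+1}, \dots, e_m$ and suppressing degree-two nodes, the node that was the tail of $e_j$ may have been suppressed or merged, so one has to track it through the suppression operations and argue it corresponds to a well-defined attachment point — this is exactly the purpose of the extra node $t''$ created in Step~\Romannum{2}, which provides a stable attaching point for the reticulation edges of later components, and making this correspondence precise (rather than hand-waving "up to suppression of degree-two nodes") is where the real work lies. A secondary but routine concern is confirming that the ordering in which the algorithm happens to fix the embedded tree $T'$, the agreement forest $\cF$, and the acyclic ordering matches the one extracted from $N_i$; since all of these are enumerated exhaustively by the nested loops, this reduces to checking that the $\cF$ and ordering read off from $N_i$ genuinely appear in the respective enumerations, which follows from Lemma~\ref{lem-two} and Baroni \emph{et al.}'s acyclic-ordering procedure.
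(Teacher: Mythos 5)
Your proposal is correct in substance but argues in the opposite direction from the paper. The paper's proof is a per-component exclusion argument: it fixes an acyclic ordering, and for each component $F_j$ it argues that any node \emph{outside} the computed sets $\cV_t$ and $\cV_s^A\cup\cV_s^B$ would violate the required isomorphism with $T_i|_{\cL(F_j)}$ or $T_i|_{\cL(\cF')}(v_{\text{sib}})$, so that using such a node could not yield a network with $N_i\supset\cF$ displaying $T_i$; completeness of the enumeration then follows because no usable node is ever discarded, and the quantification over acyclic orderings handles the dependence of the source sets on $\cF'$. You instead fix the target network $N_i$, read off from $N_i\supset\cF$ the reticulation edge $e_j$ associated with each component, and run a forward simulation with the explicit invariant that the algorithm's intermediate network $N''_j$ equals $N_i$ with the edges $e_{j+1},\dots,e_m$ deleted, showing at each step that the endpoints of $e_j$ land in the computed source and target sets. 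Your route is the more constructive one and proves exactly the statement of the lemma (every such $N_i$ is reached), whereas the paper's route proves the stronger-sounding but looser claim that the node sets contain every node that could possibly work; the price you pay is that you must track the tail of $e_j$ through the suppression and bookkeeping operations (the nodes $s'$, $t'$, $t''$ of Step~\Romannum{2}) to maintain the invariant, a verification you correctly flag as the crux but do not carry out --- though the paper's own proof does not carry out the corresponding verification either, simply asserting that the isomorphism conditions characterise the admissible nodes. Both arguments rely on the same two structural facts: the Type~A/Type~B dichotomy for sources and the need to enumerate all acyclic orderings of $\cF$.
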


\begin{proof} 
Given an acyclic ordering $(F_{\rho},F_1,\dots,F_k)$ of the maximum acyclic agreement forest $\cF$ for the two trees $T'$ and $T_i$, then, when inserting each component $F_j$ in ascending order, beginning with $F_1$, all possible target and source nodes in $N_{i-1}$ are taken into account. More precisely, let $\cX'=\cL(\cF')$ with $\cF'=\{F_{\rho},F_1,\dots,F_{j-1}\}$ and let $v_\text{sib}$ be the sibling of a node $v$ with $\cL(v)=\cL(F_j)$ in $T_i|_{\cL(\cF')\cup \cL(F_j)}$.
\begin{itemize}
\item Since for each target node $w \in \cV_t$ the two trees $\overline{N_{i-1}|_{E',\cL(F_j)}(w)}$ and $T_i|_{\cL(F_j)}$, with $E'$ referring to $T'$, are isomorphic, each node $w'$ not in $\cV_t$ automatically does not fulfill this property and, consequently, by using such a node $w'$ as target node the resulting network $N_i$ would not display $F_j$, and, thus, $N_i\supset\cF$ would not hold.
\item For each source node $u \in \cV_s^A \cup \cV_s^B$ either the two trees $\overline{N_{i-1}|_{E',\cL(\cF')}(u)}$ and $T_i|_{\cL(\cF')}(v_\text{sib})$ are isomorphic (if $u \in \cV_s^A$) or, after the insertion of all components in $\cF$, there exists a certain path leading to such a node whose edges can be used for displaying $T_i$ (if $u \in \cV_s^B$). Choosing a node $u'\not\in\cV_s^A\cup\cV_s^B$ as source node, the reticulation edge $e$ inserted for $u'$ and some node $w \in \cV_t$, could not be used for displaying $T_i$ in $N$, since $T_i|_{\cX'\cup \cL(F_j)}$ does not contain a node $v$ whose subtree $\overline{T_i|_{\cX'\cup \cL(F_j)}(v)}$ is isomorphic to $\overline{N_i|_{E_i,\cX'\cup \cL(F_j)}(u')}$, with $E_i$ referring to $T_i$. 
\end{itemize}

Thus, following an acyclic ordering of $\cF$, the algorithm \textsc{allHNetworks} considers all possible source and target nodes that can be used for inserting one of its components into the so far computed network~$N_{i-1}$. 

However, as already discussed, for $\cF$ there may exist different acyclic orderings and, depending on these acyclic orderings, the set $\cF'$ of so far added components varies. Consequently, for different acyclic orderings the tree $T_i|_{\cL(\cF')}(v_\text{sib})$ can differ, which may lead to different sets of source nodes. However, since for inserting an acyclic agreement forest $\cF$ the algorithm \textsc{allHNetworks} takes all of its acyclic orderings into account, all of these different sets of target nodes are automatically considered and, thus, Lemma~\ref{lem-one} is established.
\end{proof}

We have shown so far that, given an ordering of input trees $\Pi^*=(T_1,T_2,\dots,T_n)$, each input tree $T_i$ can be added sequentially to a so far computed network $N_{i-1}$ displaying all previous trees $\{T_1,T_2,\dots,T_{i-1}\}$ by inserting an acyclic agreement forest $\cF$ for some embedded tree $T'$ and $T_i$ in all possible ways such that there does not exist a network $N_i$ displaying $\{T_1,T_2,\dots,T_i\}$ with $N_i\supset\cF$ and $N_i\supset N_{i-1}$. Notice that, as for inserting $T_i$ all embedded trees are taken into account, if the algorithm would additionally consider all acyclic agreement forests of arbitrary size, Lemma \ref{lem-two} and \ref{lem-one} would be sufficient to establish Theorem~\ref{th-one}. 

However, in order to maximize efficiency, the algorithm \textsc{allHNetworks} only focuses on maximum acyclic agreement forests and, thus, we still have to show why we only have to consider acyclic agreement forests of minimum size. For instance, as depicted in Figure~\ref{fig-order}, it can happen that for a specific ordering of the input trees more reticulation edges have to be added when inserting leading input trees so that the resulting networks contain embedded trees that are necessary to obtain so-called \emph{hidden relevant networks} at the end. In the following, however, we will show that, if such a hidden relevant network for a specific ordering of input trees exists, this network has to be contained in a set of relevant networks calculated for another ordering of the input trees.

Now, before presenting the third lemma, we will first introduce a simple modification of the algorithm \textsc{allHNetworks}. Let $\cT$ be a set of rooted binary phylogenetic $\cX$-trees, then, \textsc{allHNetworks}$^*$ denotes a modification of the algorithm \textsc{allHNetworks} that considers for the insertion of an input tree $T_i\in\cT$ to so far computed networks all acyclic agreement forests of arbitrary size (instead of just those of minimum size).

\begin{lemma}
Let $\cT$ be a set of rooted binary phylogenetic $\cX$-trees. A relevant network $N$ for $\cT$ is calculated by calling \textsc{allHNetworks}$^*$($\cT$) if and only if it is calculated by calling \textsc{allHNetworks}($\cT$). 
\label{lem-three}
\end{lemma}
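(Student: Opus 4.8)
The plan is to prove the two implications separately. The direction ``\textsc{allHNetworks} computes $N$ $\Rightarrow$ \textsc{allHNetworks}$^*$ computes $N$'' is immediate: \textsc{allHNetworks}$^*$ differs from \textsc{allHNetworks} only in ranging, at each tree‑insertion step, over \emph{all} acyclic agreement forests instead of only the maximum ones, while keeping the same orderings, the same embedded trees, the same source/target machinery, and the same pruning bound $k$. Hence every computational branch of \textsc{allHNetworks} is also a branch of \textsc{allHNetworks}$^*$, and since the smallest $k$ for which either algorithm returns a nonempty set is $h(\cT)+1$, the output of \textsc{allHNetworks}$^*$ contains that of \textsc{allHNetworks}.

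For the reverse direction, suppose the relevant network $N$ is produced by \textsc{allHNetworks}$^*(\cT)$ along an ordering $\Pi=(T_1,\dots,T_n)$ via a chain $T_1=N_1\subset N_2\subset\dots\subset N_n=N$, where each $N_i$ is obtained from $N_{i-1}$ by inserting an acyclic agreement forest $\cF_i$ for $T_i$ and an embedded tree $T'_{i-1}$ of $N_{i-1}$. Since insertions only add reticulation edges we have $N_i\subset N$ and hence $r(N_i)\le r(N)=h(\cT)$, so the pruning bound discards no $N_i$ and the chain is a legitimate witness. If every $\cF_i$ were a \emph{maximum} acyclic agreement forest, then Lemma~\ref{lem-two} and Lemma~\ref{lem-one}, applied step by step along $\Pi$, would already show that \textsc{allHNetworks} --- which examines all embedded trees, all maximum acyclic agreement forests, all their acyclic orderings, and all valid source/target pairs --- reconstructs each $N_i$, hence $N$.

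It remains to handle the case where some $\cF_i$ is strictly larger than a maximum acyclic agreement forest for $T'_{i-1}$ and $T_i$ --- the ``hidden relevant network'' phenomenon. The plan is to show that $N$ is then still computed by \textsc{allHNetworks} along a \emph{different} ordering $\Pi'$, using maximum agreement forests only. Fix edge sets $E_1,\dots,E_n$ referring to $T_1,\dots,T_n$ in $N$; since $N$ is a minimum hybridization network, every reticulation edge of $N$ lies in some $E_j$. The key point is that a reticulation edge created ``in excess'' at step $i$ (more than $h(\{T'_{i-1},T_i\})$ would force) is a reticulation edge of $N$ that some later tree $T_j$, $j>i$, also uses, so inserting $T_j$ ahead of $T_i$ absorbs that excess at step $i$ without creating new excess afterwards. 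I would formalise this by an exchange argument driven by the nonnegative monovariant $\sum_i\bigl(|\cF_i|-1-h(\{T'_{i-1},T_i\})\bigr)$: while it is positive, pick the largest $i$ with $\cF_i$ non‑maximum, move a later tree sharing an excess edge to position $i$, and rebuild the chain inside $N$; Lemma~\ref{lem-two} guarantees that each rebuilt step is again an acyclic agreement forest displayed by the current subnetwork of $N$, while the monovariant strictly decreases. When it reaches $0$ we obtain an ordering $\Pi'$ and a chain $M_1\subset\dots\subset M_n=N$ witnessing $N$ through maximum acyclic agreement forests only, and Lemmas~\ref{lem-two} and~\ref{lem-one} finish the argument.

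The main obstacle is precisely this exchange step: one must verify that moving a tree earlier yields a \emph{valid} chain --- in particular that the subnetwork of $N$ reached just before inserting $T_i$ still contains an embedded tree admitting a maximum‑size acyclic agreement forest with $T_i$ --- and that the chosen monovariant genuinely drops, so the rewriting terminates. The delicate bookkeeping is tracking which reticulation edges of $N$ belong to which $E_j$ and checking that the agreement forests supplied by Lemma~\ref{lem-two} at the new steps really are of minimum size.
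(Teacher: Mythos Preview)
Your overall strategy --- the easy direction is immediate, and for the hard direction a relevant network reached via a non-maximum agreement forest can be recovered by \emph{reordering} the input trees --- is the same as the paper's. The mechanisms differ, however. The paper argues by induction on $n=|\cT|$: if a non-maximum forest $\cF_i^*$ is used at step $i$ of $(T_1,\dots,T_n)$, one moves $T_i$ to the \emph{end}, considering $\Pi=(T_1,\dots,T_{i-1},T_{i+1},\dots,T_n,T_i)$. The induction hypothesis then disposes of the first $n-1$ insertions in one stroke (they realise the subnetwork $N^*\ominus\hat E^{(i)}$ for $\cT\setminus\{T_i\}$), and a short reticulation-edge count --- comparing the sizes $k_n'$, $k_n^*$, $k_i^*$ and the excess $x=|\cF_i^*|-|\hat\cF_i^*|$ --- shows that the intermediate insertion of $T_n$ and the final insertion of $T_i$ must each use a maximum forest, since otherwise one could assemble a network with strictly fewer reticulations than $N^*$.

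Your exchange instead pulls a later $T_j$ forward to position $i$ and asserts that the monovariant $\sum_i\bigl(|\cF_i|-1-h(\{T'_{i-1},T_i\})\bigr)$ strictly drops. This is the unproven crux, not residual bookkeeping. After the swap, every intermediate network $N_{i-1}$, every chosen embedded tree $T'_{i-1}$, and hence every baseline $h(\{T'_{i-1},T_i\})$ in the sum changes simultaneously; nothing you have written prevents the excess from migrating --- inserting $T_j$ early may itself require a non-maximum forest against the new embedded tree at that position, and steps beyond $i$ may acquire new excess once their predecessor networks change. The paper's push-to-the-end move sidesteps the difficulty entirely: the induction hypothesis absorbs all but the last step, and minimality of $N^*$ directly forces the final forest to be maximum. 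If you want to complete your argument, the cleaner repair is to imitate the paper --- push the offending $T_i$ to the last position and invoke an inductive hypothesis on $|\cT|$ --- rather than pull $T_j$ forward and track a monovariant whose terms are all moving.
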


\begin{proof}

'$\Longleftarrow$': As each computational path of the algorithm \textsc{allHNetworks} is also conducted by the modified algorithm \textsc{allHNetworks}$^*$, each relevant network calculated by calling \textsc{allHNetworks}($\cT$) is obviously also calculated by calling \textsc{allHNetworks}$^*$($\cT$).\\

'$\Longrightarrow$': Here, we have to discuss why the algorithm \textsc{allHNetworks} has \textit{not} to consider non-maximum acyclic agreement forests leading to relevant networks. For this purpose, we will first show by induction on $n=|\cT|$ that, if for a specific ordering $\Pi$ of the input trees a relevant network $N^*$ can be only computed by applying a non-maximum acyclic agreement forest $\cF_i^*$, then, in this case, there exists a different ordering $\Pi^*$ computing $N^*$ by only taking components of maximum acyclic agreement forests into account.\\

\textbf{Base case.} The assumption, obviously, holds for $n=1$. For $n=2$ an agreement forest that is \textit{not} maximal cannot lead to relevant networks, since the insertion of a maximum acyclic agreement forest $\cF$ directly leads to a network whose reticulation number is smaller. This is, in particular, the case, since the algorithm inserts a reticulation edge for all components of an agreement forest, except $F_{\rho}$, and, thus, in this simple case, the hybridization number simply equals $|\cF|-1$. Note that, due to Lemma~\ref{lem-two}~and~\ref{lem-one}, in the case of two input trees, all relevant networks are calculated.\\

\textbf{Inductive step.} Now, let $\Pi^*=(T_1,\dots,T_i,\dots,T_{n})$, with $n>2$, be an ordering of input trees for which the algorithm \textsc{allHNetworks} calculates the set $\cN_{n-1}$ consisting  of all relevant networks for $\cT\setminus\{T_n\}$ and there exists a hidden relevant network $N^*$ for $\Pi^*$ that could only be computed by inserting reticulation edges for a \textit{non}-maximum acyclic agreement forest $\cF^*_i$ for an input tree $T_i$ ($1\le i<n$) and an embedded tree $T'_i$ of the network $N^*_{i-1}$ displaying $\{T_1,T_2,\dots,T_{i-1}\}$. Notice that this directly implies that in $N^*$ there exist $x>0$ reticulation edges only necessary for displaying both trees $T_i$ and $T_n$, where $x$ denotes the difference between $|\cF_i^*|$ and the size of a maximum acyclic agreement forest $\hat\cF^*_i$ for $T_i$ and $T_i'$, i.e., $x=|\cF^*_i|-|\hat\cF^*_i|$. In this case, however, as we will show in the following, $N^*$ can be also calculated by applying the algorithm to the ordering $\Pi=(T_1,\dots,T_{i-1},T_{i+1},\dots,T_{n},T_i)$, where $T_i$ is inserted right after $T_n$.

For this purpose, let $N_{n-2}$ be the relevant network displaying each tree except $T_i$ and $T_n$ in the same topological way as it is the case for $N^*$. More precisely, $N_{n-2}$ equals the network that is obtained from $N^*$ by first deleting a set of reticulation edges $E^*_i$, containing each edge that is not necessary for displaying an input tree in $\cT\setminus\{T_i\}$, then by deleting a set of reticulation edges $E^*_n$, containing each remaining edge that is not necessary for displaying an input tree in $\cT\setminus\{T_n\}$, and finally by suppressing all nodes of both in- and out-degree~$1$. Notice that we can calculate $N_{n-2}$ by applying the algorithm \textsc{allHNetworks} to $\cT\setminus\{T_i,T_n\}$  since, by induction hypothesis, the algorithm is able to calculate all relevant networks embedding $\cT\setminus\{T_n\}$.

Next, let $N_{n-1}$ be the relevant network displaying each tree except $T_i$ in the same topological way as it is the case for $N^*$. More precisely, $N_{n-1}$ equals the network that is obtained from $N^*$ by first deleting each reticulation edge that is not necessary for displaying an input tree in $\cT\setminus\{T_i\}$ and then by suppressing all nodes of both in- and out-degree~$1$. Notice that, based on $N_{n-2}$, due to both previous Lemmas~\ref{lem-two}~and~\ref{lem-one}, this network can be calculated by inserting the components of an specific acyclic agreement forest $\cF_n$ for $T_n$ and the embedded tree $T'_i$ of $N_{n-2}$ with $|\cF_n|=|E^*_n|+1$. Moreover, $T'_i$ is still contained in $N_{n-2}$, since for displaying this tree no reticulation edge is necessary that has been added during the insertion of $T_i$ and $T_n$ and, thus, would not exist in $N_{n-2}$.  

It still remains to show, however, why this acyclic agreement $\cF_n$ is of minimum size. For this purpose, we will establish a proof by contradiction showing that in this case we could construct a hybridization network $N'$ for $\cT$ providing a smaller reticulation number than $N^*$. In a first step, however, we have to recall each acyclic agreement forest that is used in $\Pi^*$ as well as in $\Pi$ in order to insert the two trees $T_i$ and $T_n$. The reader should keep in mind that by inserting a tree based on an acyclic agreement forest of size $k$, the algorithm \textsc{allHNetworks} inserts precisely $k-1$ reticulation edges.

\begin{itemize}
\item Regarding $\Pi^*$, first the tree $T_i$ is inserted by a non-maximum acyclic agreement forest $\cF_i^*$ of size $k^*_i$ and then the tree $T_n$ is inserted by a maximum acyclic agreement forest~$\cF_n^*$ of size $k^*_n$.
\item Regarding $\Pi$, first the tree $T_n$ is inserted by a maximum acyclic agreement forest $\cF_n'$ of size $k'_n$ and then the tree $T_i$ is inserted by a maximum acyclic agreement forest~$\cF_i$ of size $k'_i$.
\end{itemize}

Now, in order to establish a contradiction, let us assume that $k'_n=|\cF_n'|<|\cF_n|$. Notice that through $\cF_n$ the edge set $E^*_n$ is reinserted, which implies that $\cF_n$ has to contain precisely $|E^*_n|+1=k^*_n+x$ components, where $x$, as already mentioned above, denotes the difference between $|\cF_i^*|$ and the size of a maximum acyclic agreement forest $\hat\cF^*_i$ for $T_i$ and $T_i'$, i.e., $x=|\cF^*_i|-|\hat\cF^*_i|$. Regarding $\cF_n'$, this means that we could insert $T_i$ and $T_n$ to $\cN_{n-2}$ by inserting precisely $r_1=k'_n-1+k^*_i-x-1$ reticulations edges. Next, by considering the number of reticulation edges that are added in $\Pi^*$ for $T_i$ and $T_n$, which are $r_2=k_i^*-1+k^*_n-1$, we can establish the following inequation: $$r_1=k'_n-1+k^*_i-x-1<k_n^*+x-1+k^*_i-x-1=k_i^*-1+k^*_n-1=r_2.$$ In summary, this means that, if $|\cF_n'|<|\cF_n|$ holds, we could construct a network $N'$ with $r_1=r(N')<r(N^*)=r_2$ by inserting both trees $T_i$ and $T_n$ into $N_{n-2}$ in respect to $\hat\cF^*_i$ and $\cF_n'$, which implies that $N^*$ would not be a relevant network for $\cT$; a contradiction to the choice of $N^*$.

Lastly, based on $N_{n-1}$, again due to both previous Lemmas~\ref{lem-two}~and~\ref{lem-one}, the network $N^*$ can be calculated by inserting the components of an specific acyclic agreement forest for $T_i$ and some embedded tree of $N_{n-1}$. Notice that this acyclic agreement forest has to be of minimum size, since, otherwise, by simply taking only maximum acyclic agreement forests into account we could directly construct networks providing a smaller reticulation number than $N^*$. Again, this would directly imply that $N^*$ could not be a relevant network for $\cT$; a contradiction to the choice of $N^*$, which finally establishes the induction step.\\

Based on the induction above, we can make the following observation. If for a specific ordering of the input trees there exists a tree $T_i$ that has to be added by a \textit{non}-maximum acyclic agreement forest in order to enable an insertion of another input tree $T_j$ $(i<j)$, which is necessary for the computation of a relevant network $N$, then, in this case, we can compute $N$ by applying the algorithm \textsc{allHNetworks} to an ordering where $T_i$ is located after $T_j$. Thus, for each relevant network $N$ that could only be computed by our algorithm by applying \textit{non}-maximum acyclic agreement forests, there exists a certain ordering of the input trees such that our algorithm is able to compute $N$ by only taking maximum acyclic agreement forests into account. Finally, as a direct consequence, since our algorithm takes all possible orderings of input trees into account, our algorithm obviously guarantees the computation of all relevant networks without considering \textit{non}-maximum acyclic agreement forests. Thus, the correctness of Lemma~\ref{lem-three} is established.
\end{proof}

\begin{figure}[h!]
\centering
\begin{tabular}{cc}
\includegraphics[width = 7.2cm]{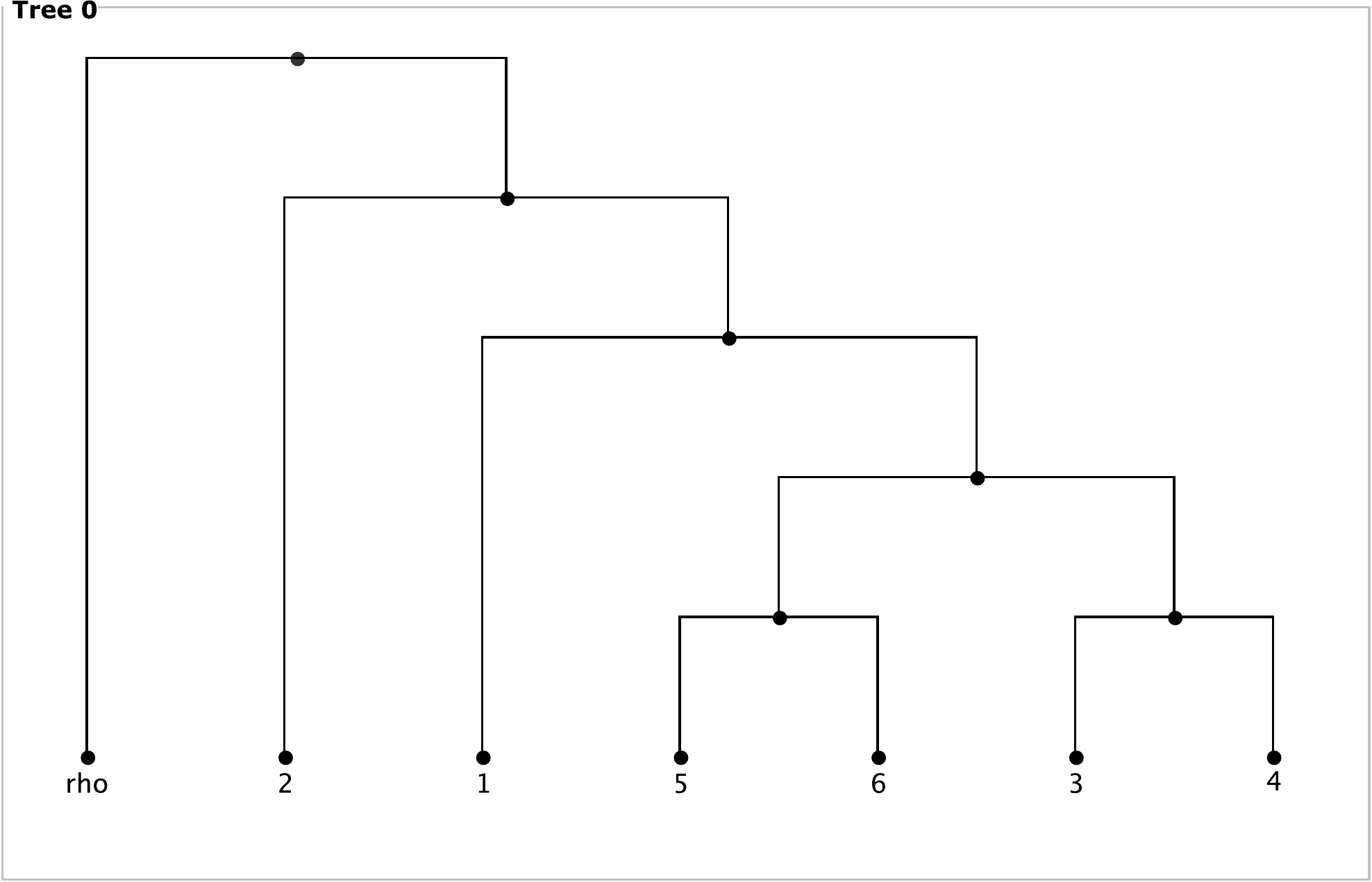}
&
\includegraphics[width = 7.2cm]{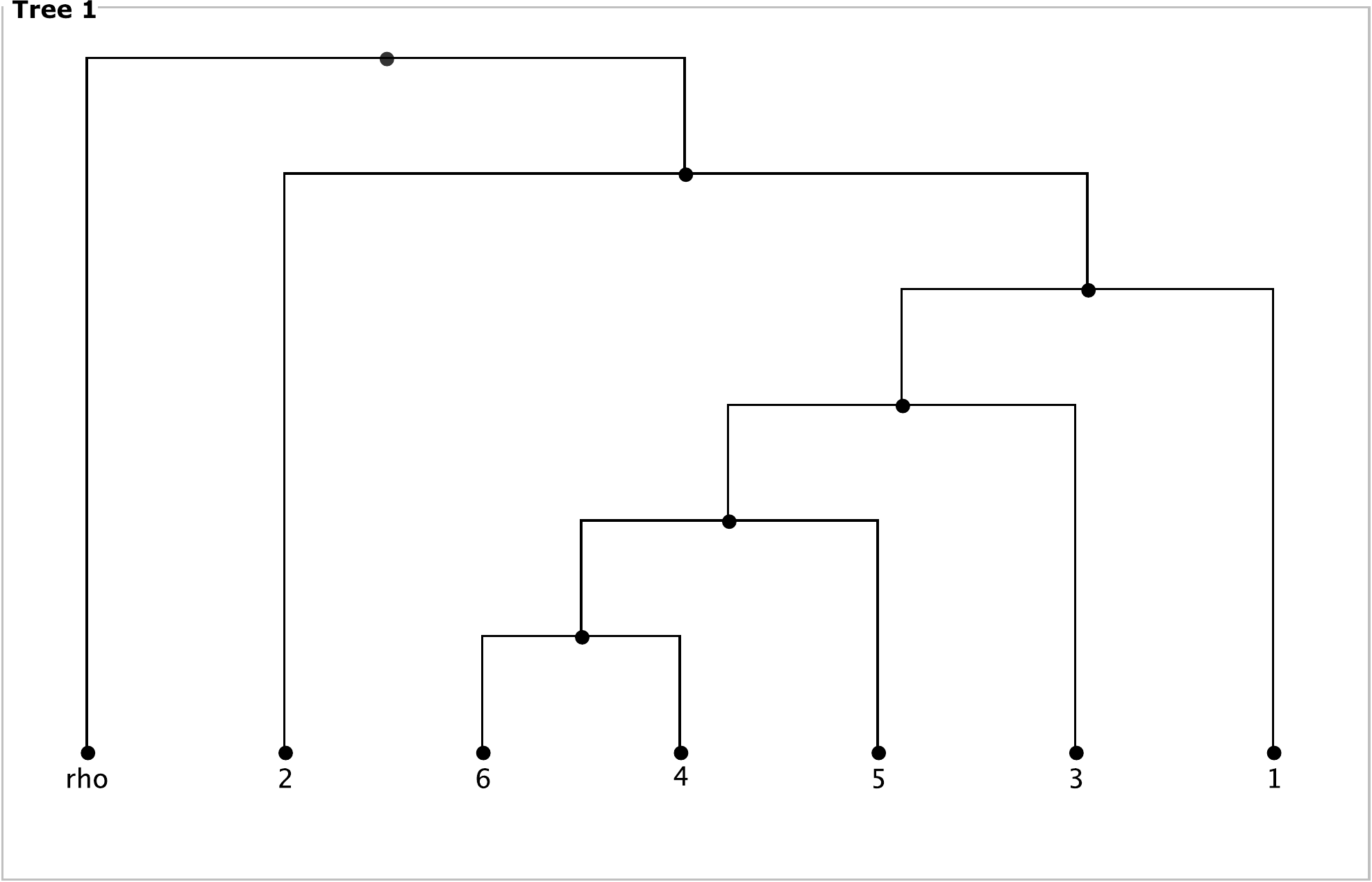}
\\
\includegraphics[width = 7.2cm]{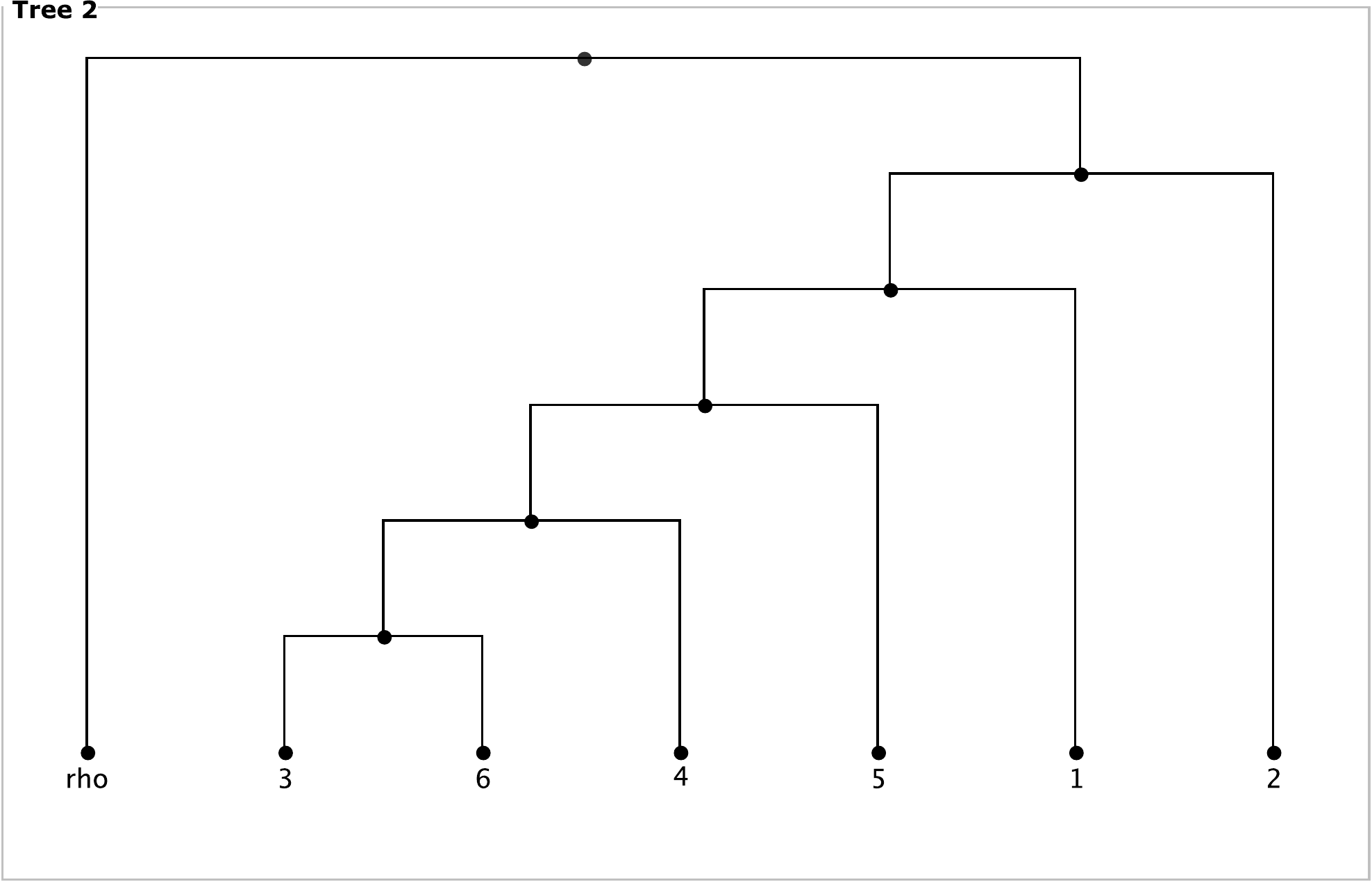}
&
\includegraphics[width = 7.2cm]{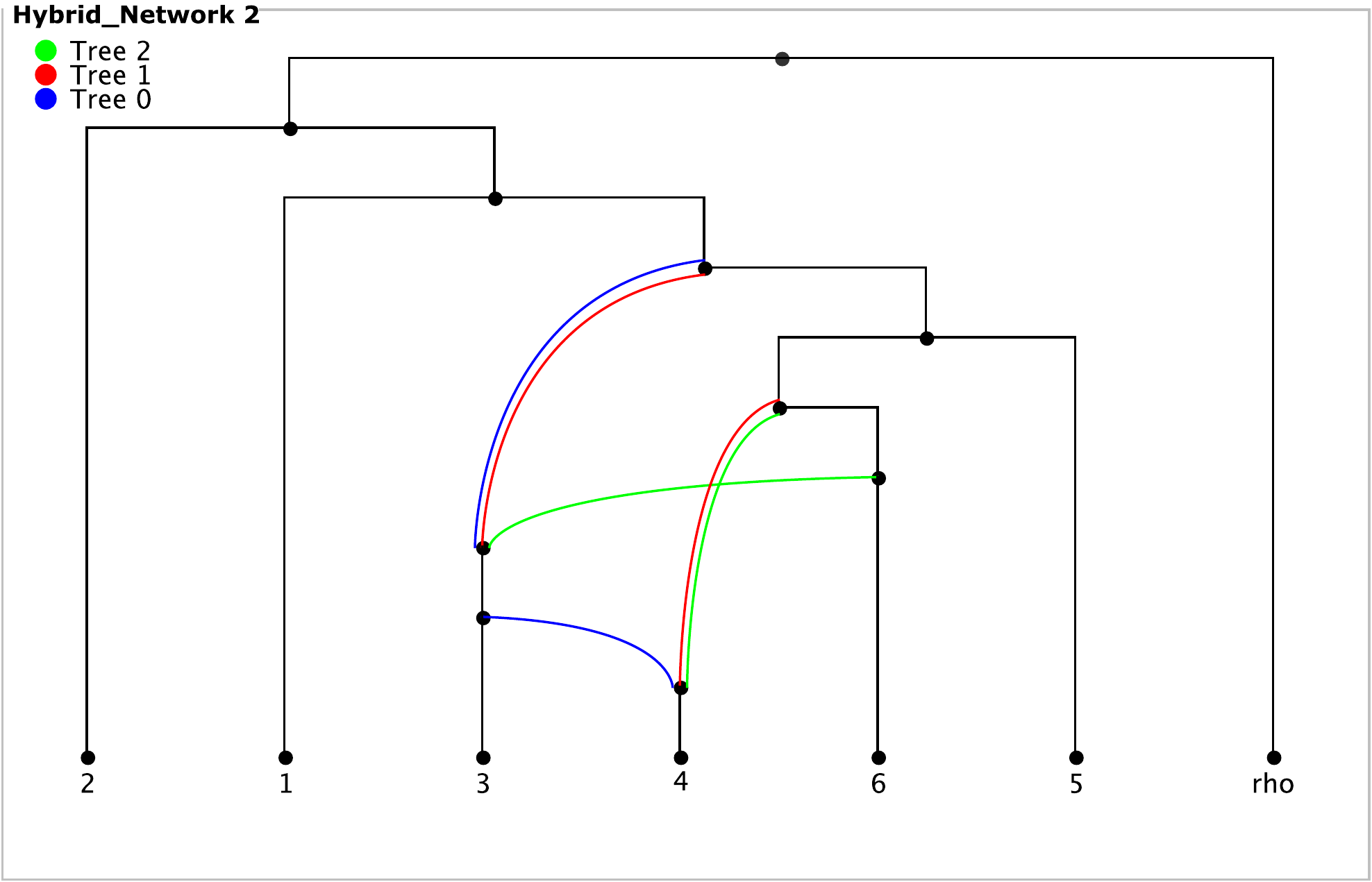}
\\
\includegraphics[width = 7.2cm]{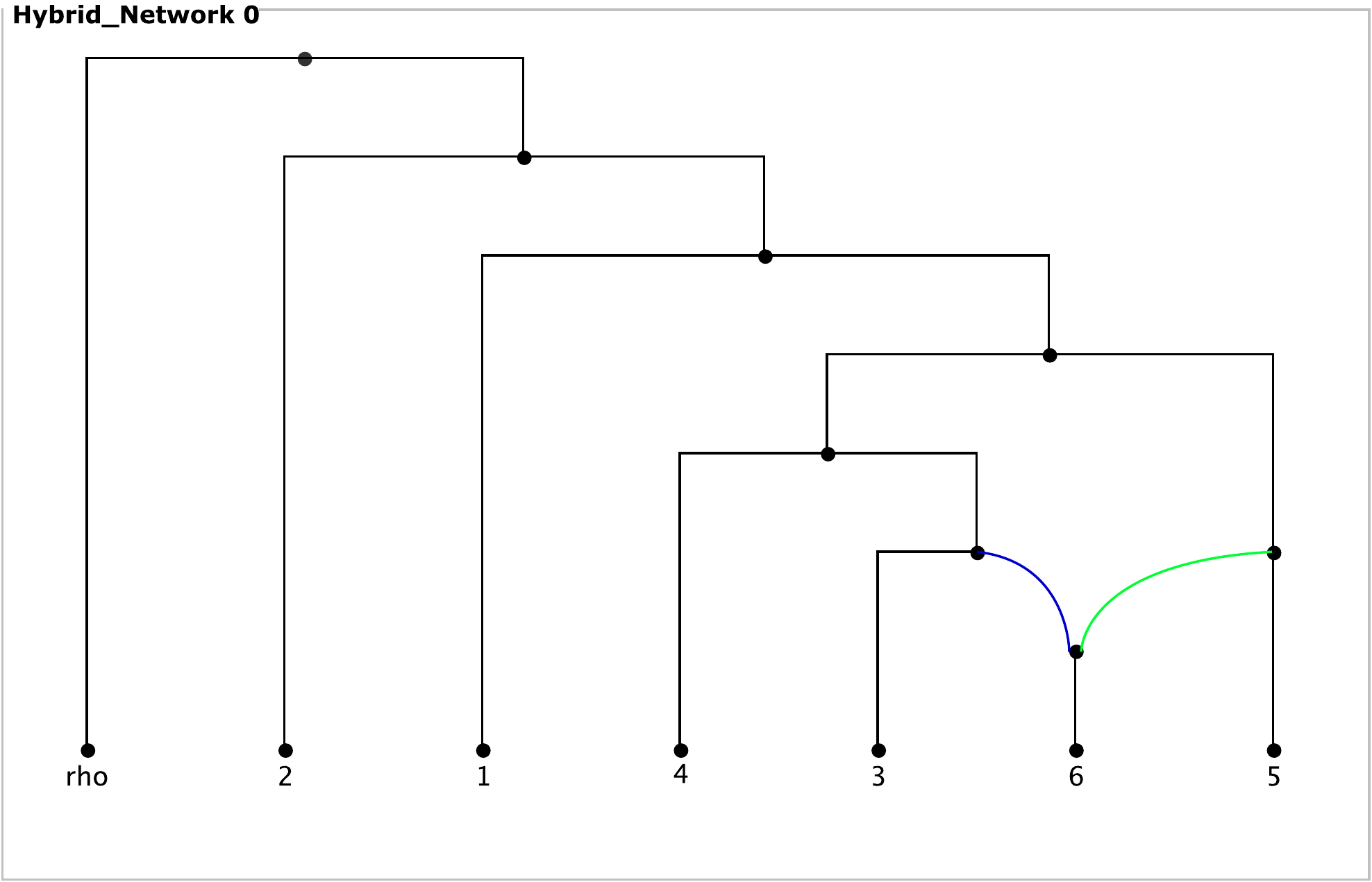}
&
\includegraphics[width = 7.2cm]{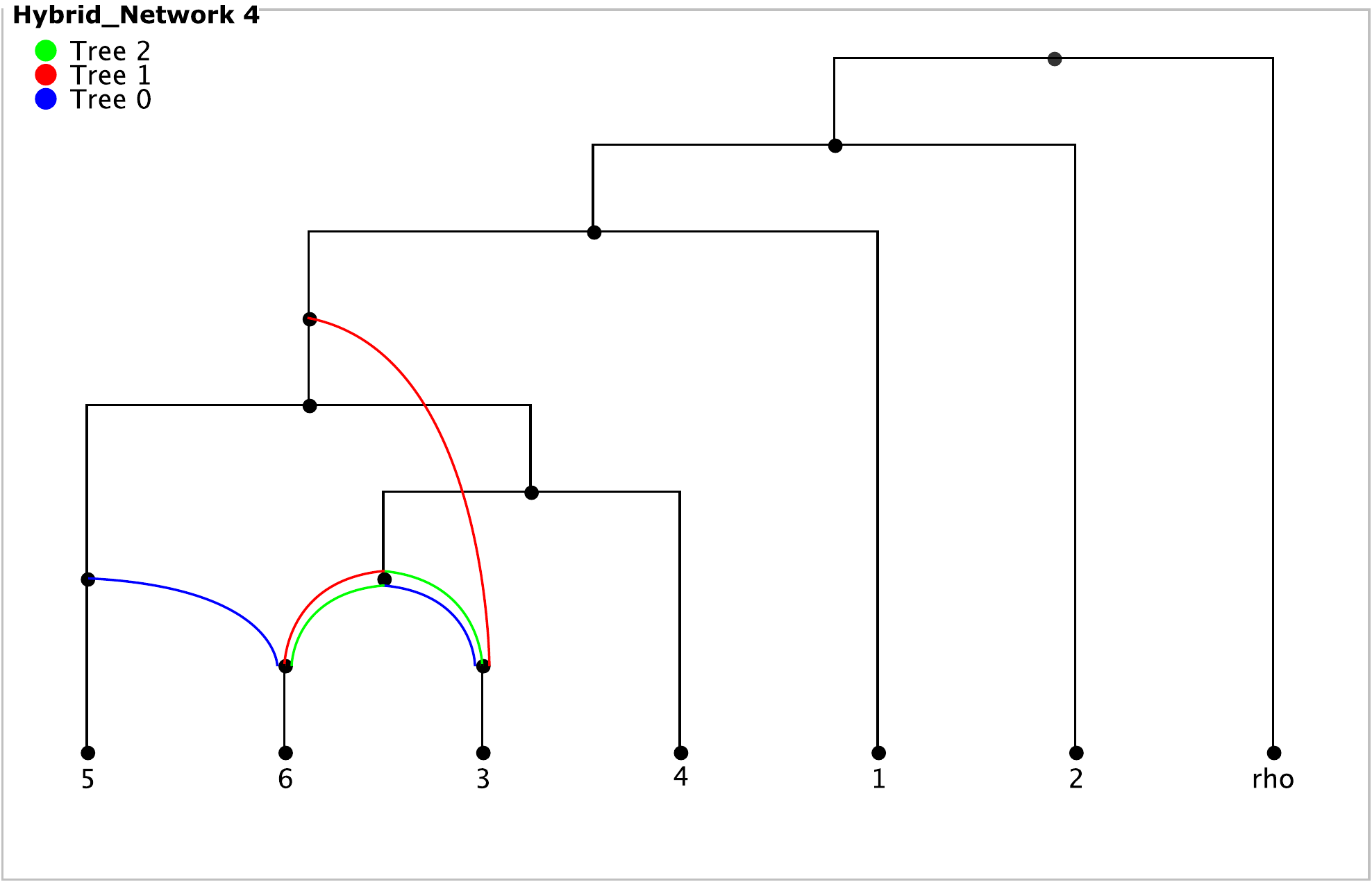}
\\
\end{tabular}
\caption[An example showing why the algorithm \textsc{allHNetworks} has to consider different orderings of the input trees]{An example showing why the algorithm \textsc{allHNetworks} has to consider different orderings of the input trees. By running the algorithm \textsc{allHNetworks} for the ordering $\Pi_1=(\text{\textit{Tree~0},\textit{Tree~1},\textit{Tree~2}})$ only those networks with hybridization number two, as the one denoted by \textit{Hybridization\_Network~2}, are computed providing a hybridization node whose subtree consists of taxon~4. This is the case, since the only maximum acyclic agreement forest for \textit{Tree~0} and \textit{Tree 1} is of size two containing the component consisting of the single taxon 4. To compute the network denoted as \textit{Hybridization\_Network~4} at bottom right, you have to apply the algorithm to the ordering $\Pi_2=($\textit{Tree~0},\textit{Tree~2},\textit{Tree~1}$)$, since now, in a first step, by adding \textit{Tree~2} to \textit{Tree~0} the network at bottom left, denoted by \textit{Hybridization\_Network~0}, is computed. Based on this network you can select an embedded tree $T'$ by choosing its blue in-edge. As a direct consequence, the only maximum acyclic agreement forest for $T'$ and \textit{Tree~2} is of size two containing the component consisting of the single taxon 3 and, thus, by adding this component to the network \textit{Hybridization\_Network~0}, finally, the network \textit{Hybridization\_Network~4} is computed. Regarding the first mentioned ordering $\Pi_1$, this network could only be computed by our algorithm by considering the \textit{non}-maximum acyclic agreement forest for \textit{Tree~0} and \textit{Tree~1} of size three containing the two components consisting of the single taxa 3 and 6.} 
\label{fig-order}
\end{figure}

Now, based on the fact that the algorithm \textsc{allMAAFs} returns all maximum acyclic agreement forests for two binary phylogenetic $\cX$-trees \cite[Theorem 2]{Scornavacca2012}, by combining Lemma \ref{lem-two}--\ref{lem-three} the correctness of Theorem~\ref{th-one} is established. 

More precisely, this is the case, because due to Lemma~\ref{lem-two} we can derive a network displaying a further input tree $T_i$ from an acyclic agreement forest $\cF$ for $T_i$ and an embedded tree of a so far computed network. Moreover, due to Lemma~\ref{lem-one}, by taking all orderings of the input trees into account, for this purpose it suffices to consider only acyclic agreement forests of minimum size. Furthermore, by considering all possible ways of how such a maximum acyclic agreement forest $\cF$ can be inserted (cf.~Lemma~\ref{lem-one}), the algorithm \textsc{allHNetworks} calculates each network displaying $\cF$. Now, since $T_i$ is added to all so far computed networks by taking all maximum acyclic agreement forests for all embedded trees into account, all networks embedding $T_i$ are calculated. Consequently, by adding all input trees sequentially for all orderings in this way all relevant networks for all input trees are calculated.
\end{proof}

\clearpage
\section{Runtime of \textsc{allHNetworks}}
\label{sec-rt}

In order to analyze the theoretical worst-case runtime of the presented algorithm \textsc{allHNetworks}, we have to discuss the complexity of three major steps including the computation of embedded trees, the computation of all maximum acyclic agreement forests of size $k$, and the computation of all possible reticulation edges that can be added for a given maximum acyclic agreement forest. Given an ordering of the input trees, each of those major steps has to be applied sequentially to each input tree in order to insert this tree into a set of so far computed networks. At the beginning, when adding the second input tree, this set of networks only consists of the first tree of the ordering. However, as shown in the upcoming part, this set grows exponentially in the number of input trees. 

\begin{theorem}
The theoretical worst-case runtime of the algorithm \textsc{allHNetworks} for computing all relevant networks for a set $\cT$ of rooted binary phylogenetic $\cX$-trees with minimum hybridization number $k$ is $$O\left(n!\left(2^k\tbinom{|E|}{k}k!\tbinom{|V|}{2}k\right)^{n-1}\left(|V|+3^{|\cX|}\right)\right),$$ where $E$ denotes the edge set and $V$ denotes the node set of a binary tree in $\cT$. 
\label{lem-run}
\end{theorem}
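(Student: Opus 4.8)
\emph{Proof plan.} The plan is to charge the total running time to the nodes of the recursion tree explored by \textsc{allHNetworks}, multiplying the branching factors introduced by the four nested families of choices the algorithm makes — orderings, embedded trees, maximum acyclic agreement forests, and, for a fixed acyclic ordering, source/target pairs — and then multiplying by the work done along each branch. Two preliminary reductions dispose of the outermost loops. The loop over the parameter $k$ contributes only a constant factor: the work performed for a fixed value of $k$ grows at least geometrically in $k$ (the factors $k!$ and $\tbinom{|E|}{k}$ below already do so), the loop terminates at $k=h(\cT)+1$, so the whole sum over $k$ is dominated up to a constant by its last term; we write $k$ for that value and identify it with the minimum hybridization number as in the statement. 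The loop over the $n!$ orderings of $\cT$ contributes the leading factor $n!$, and from here on one ordering is fixed.

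For a fixed ordering the algorithm runs $n-1$ tree-insertion phases, inserting $T_2,\dots,T_n$ in turn; the claim is that after phase $i$ the set $\cN$ contains at most $B^{\,i-1}$ networks, where
$$
B \;=\; 2^{k}\,\tbinom{|E|}{k}\,k!\,\tbinom{|V|}{2}\,k .
$$
Here $B$ bounds the number of networks produced when one network $N\in\cN$ is processed in one phase, and it factorises along the four items of Section~\ref{sec-alg}: (i) a network surviving the bound test has reticulation number below $k$, hence at most $2^{k}$ embedded trees $T'$, all enumerable within that bound up to polynomial factors; (ii) every maximum acyclic agreement forest for $T'$ and $T_i$ consistent with the bound is obtained by deleting at most $k$ edges of a binary $\cX$-tree, so there are at most $\tbinom{|E|}{k}$ of them, and by \cite[Theorem~2]{Scornavacca2012} they are all enumerated by the single call to \textsc{allMAAFs}; (iii) such a forest has at most $k+1$ components and hence at most $k!$ acyclic orderings, read off as topological orderings of $\scAG(T',T_i,\cF)$; (iv) inserting the at most $k$ components of one fixed acyclic ordering into $N$, each component being attached through a valid source/target pair drawn from the at most $\tbinom{|V|}{2}$ pairs of nodes of the pertinent restricted network, yields at most $\tbinom{|V|}{2}\,k$ networks. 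Multiplying (i)--(iv) gives $B$, and iterating over the $n-1$ phases gives $|\cN|\le B^{\,n-1}$ at the end.

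It remains to bound the work attributable to one branch of the recursion. This is dominated by a single call to \textsc{allMAAFs} on two binary $\cX$-trees, which by the analysis underlying \cite{Scornavacca2012} costs $O(3^{|\cX|})$ up to polynomial factors, together with the $O(|V|)$-time elementary operations performed along the branch: splitting an in-edge, attaching a reticulation edge, and recomputing a restricted network, an ancestor--descendant graph, or a reticulation number, each touching only $O(|V|)$ nodes since every tree in $\cT$ and hence every restricted network has $O(|\cX|)$ nodes. Collecting the factor $n!$ from the orderings, the factor $B^{\,n-1}$ from the growth of $\cN$ over the $n-1$ phases, and the per-branch work $O(|V|+3^{|\cX|})$, and absorbing the remaining polynomial overhead and the geometric sums over $k$ and over the phase index into the big-$O$, yields the stated bound.

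The main obstacle — the only part that is not pure bookkeeping — is item (iv): one must show that inserting the $k$ components of one fixed acyclic ordering yields only $\tbinom{|V|}{2}\,k$ networks, and not the $\tbinom{|V|}{2}^{\,k}$ that an unanalysed product of per-component branchings would suggest. The argument must exploit the structural descriptions of $\cV_t$, $\cV_s^{A}$ and $\cV_s^{B}$ in Section~\ref{sec-minNet}: once $T'$, the forest $\cF$ and its acyclic ordering are fixed, the targets and sources available for a component $F_j$ are confined to a single subtree of the restricted network determined by $\cL(\cF')\cup\cL(F_j)$, so the branchings for distinct components are not independent and do not simply compound; making this confinement quantitative is the heart of the proof. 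A secondary, routine point is to verify that the per-phase and per-$k$ contributions form geometric series whose last terms dominate, which is what legitimises collapsing the sums into the clean product $n!\,B^{\,n-1}(|V|+3^{|\cX|})$.
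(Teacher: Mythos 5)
Your decomposition is the same as the paper's: the factor $n!$ for orderings, a per-phase branching factor $B=2^{k}\tbinom{|E|}{k}k!\tbinom{|V|}{2}k$ raised to the power $n-1$, and a per-branch work term $O(|V|+3^{|\cX|})$ charged to tree extraction and to the call to \textsc{allMAAFs}. Items (i)--(iii) and the work accounting match Parts A--C of the paper's argument essentially verbatim, and your extra remark about the outer loop over $k$ being dominated by its last term is routine and unobjectionable.

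The genuine gap is exactly the one you name and then do not close. You assert in item (iv) that inserting the $k$ components of one fixed acyclic ordering yields at most $\tbinom{|V|}{2}k$ networks, concede at the end that the naive product of per-component branchings gives $\tbinom{|V|}{2}^{k}$, announce that a ``confinement'' argument exploiting the structure of $\cV_t$, $\cV_s^{A}$, $\cV_s^{B}$ is the heart of the proof --- and never supply it. Since the pseudocode multiplies $\cN''$ by the number of valid $(s,t)$ pairs at \emph{each} of the $k$ component insertions, a bound on the number of networks produced per phase really does compound multiplicatively unless one proves the choices are not independent; nothing in Section~\ref{sec-minNet} obviously collapses $\tbinom{|V|}{2}^{k}$ to $\tbinom{|V|}{2}k$, so as written your claimed invariant $|\cN|\le B^{\,i-1}$ is unjustified. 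For comparison, the paper's Part~D sidesteps this entirely: it only asserts that a \emph{single} component can be attached in at most $\tbinom{|V|}{2}$ ways and that each attachment costs constant work, reading $\tbinom{|V|}{2}k$ as (pairs per component) times (number of components) and never confronting the compounding across components. So the paper does not contain the lemma you would need either; by recasting $\tbinom{|V|}{2}k$ as a bound on the number of \emph{output networks} of a phase you have set yourself a strictly stronger statement than the paper argues, and that statement is the part of your proof that is missing.
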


\begin{proof}

To show the correctness of Theorem \ref{lem-run}, we divide the stated runtime estimation into four parts A--D and discuss each of those parts separately:

$$O\left(\underbrace{n!}_{A}\left(\underbrace{2^k}_{B}\underbrace{\tbinom{|E|}{k}k!}_{C}
\underbrace{\tbinom{|V|}{2}k}_{D}\right)^{n-1}\left(\underbrace{|V|}_{B}+\underbrace{3^{|\cX|}}_{C}\right)\right)$$\\

\textbf{Part A.} Since different orderings of the input trees can lead to different relevant networks, the insertion of the trees has to be performed for all $n!$ possible orderings.\\

\textbf{Part B.} The number of embedded trees of a network is at most $2^r$~where~$r$ denotes its number of hybridization nodes. This upper bound, however, is achieved only if each hybridization node has in-degree $2$. Otherwise, if a hybridization node has more than two in-edges, the number of embedded trees is smaller as only one of those edges can be part of an embedded tree. Moreover, extracting a tree from a given network is a process of rather low complexity, which can be solved by iterating a constant number of times over all nodes of the network. Thus, the complexity of extracting one certain embedded tree is linear in the number of nodes.\\ 

\textbf{Part C.} The number of all maximum acyclic agreement forests of size $k$ for two input trees $T_1$ and $T_2$ can be estimated by $O(\tbinom{|E(T_1)|}{k})$. In practice, however, this number is clearly smaller since, in general, less than $k$ hybridization events, say $r$, are necessary for the insertion of one of the input trees. Moreover, only a few number of all $\tbinom{|E|}{r}$ possible sets of components fulfills the definition of an acyclic agreement forest. 

Given an agreement forest of size $k$, there exist at most $k!$ acyclic orderings. Note that, similar to the number of all maximum acyclic agreement forests, there exist, in general, clearly less orderings. This number, however, can be large if there are a lot of components consisting of isolated nodes. The runtime for the computation of those maximum acyclic agreement forests is stated in the work of Scornavacca~\emph{et al.}~\cite[Theorem 3]{Scornavacca2012} by $O(3^{|\cX|})$, where $\cX$ denotes the taxa set of each input tree.\\

\textbf{Part D.} As mentioned during the presentation of the algorithm \textsc{allHNetworks}, a component of a maximum acyclic agreement forest can potentially be added in several ways to a so far computed network $N$. This number is, obviously, bounded by $\tbinom{|V|}{2}$ where $V$ denotes the set of nodes corresponding to $N$. In practice, however, this number is clearly smaller, since only a small fraction of all possible node pairs enable a valid embedding of an input tree. Lastly, given a source and a target node, a new reticulation edge can be simply added by performing a constant number of basic tree operations.
\end{proof}

\section{Speeding Up the Algorithm \textsc{allHNetworks}}
\label{sec-speed}

To handle the huge computational effort, which is indicated in Theorem~\ref{lem-run}, it is very important to implement the algorithm in an efficient way. This can be done by parallelizing its execution on distributed systems, by initially applying certain reductions to the input trees, and by reducing the computation of isomorphic networks.

\subsection{Parallelization}
\label{sec-para}

In order to improve the practical runtime of our algorithm, each exhaustive search looking for relevant networks with minimum hybridization number $k$ can be parallelized as follows. As described in Section~\ref{sec-alg}, the insertion of a tree $T_i$ to a so far computed network results in several new networks which are then processed by inserting the next input tree $T_{i+1}$ of the chosen ordering (cf.~Fig.~\ref{fig-exSearch}). 

\begin{figure}[t]
\centering
\includegraphics[width = 12cm]{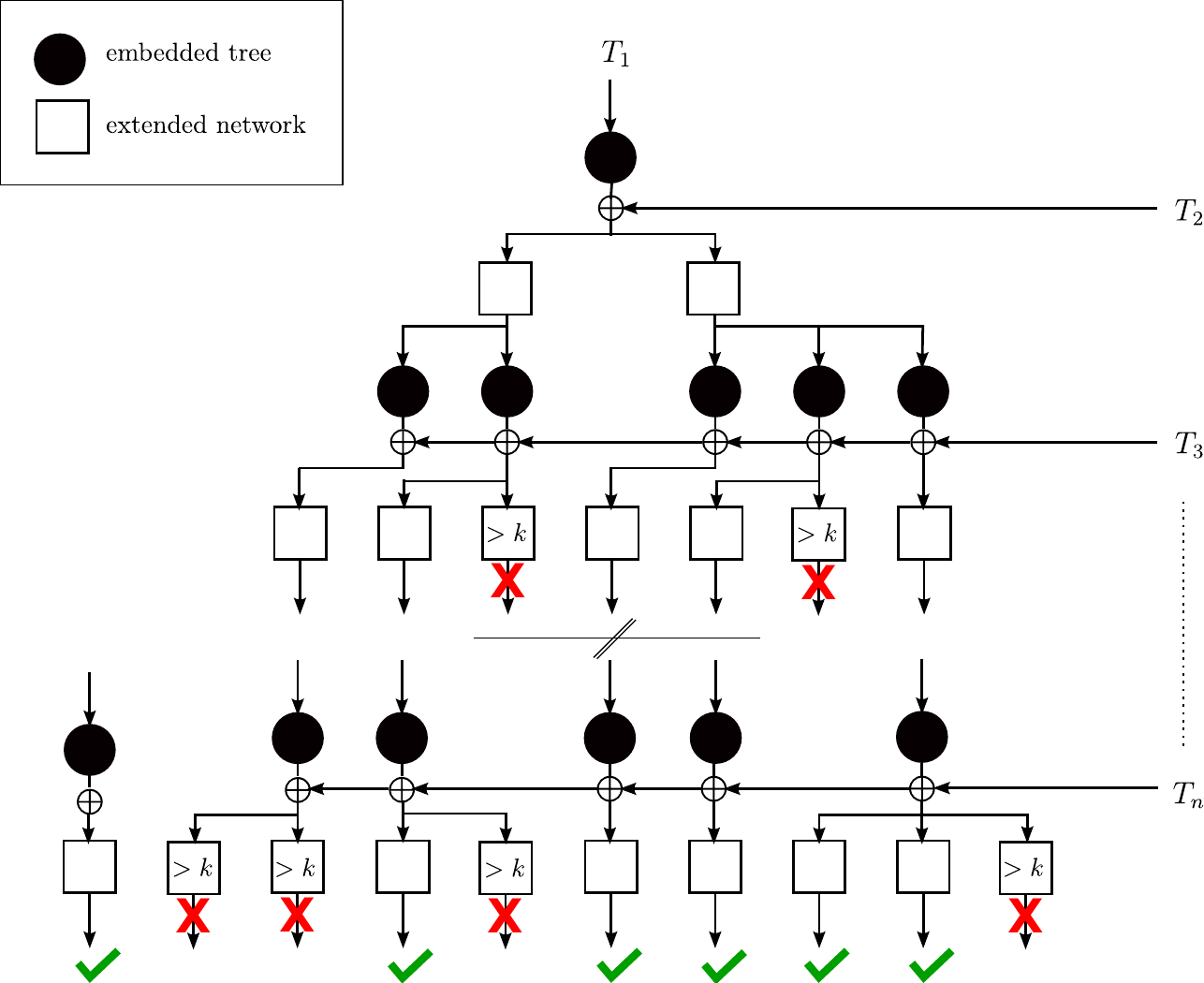}
\caption[An illustration of computational paths generated by the algorithm \textsc{allHNetworks}]{An illustration of how the insertion of the input trees is conducted by the algorithm \textsc{allHNetworks} in respect of the parameter $k$ bounding the maximal reticulation number of resulting networks. Beginning with the first input tree $T_1$, repeatedly, first, an embedded tree $T'$ of a so far computed network $N$ is selected, and, second, the current input tree $T_i$ is inserted into $N$ by sequentially adding the components of a maximum acyclic agreement forest for $T'$ and $T_i$. As soon as the reticulation number of a so far computed network exceeds $k$ one can be sure that this network cannot lead to a network whose reticulation number is smaller than or equal to $k$ and, thus, the corresponding computational path can be early aborted.} 
\label{fig-exSearch}
\end{figure}

Since the processing of networks runs independently from each other, these steps can be parallelized in a simple manner. Notice, however, that, based on the reticulation number of so far computed networks, each of those steps is more or less likely to result in relevant networks. Thus, one can set up a priority queue to process the most promising networks first, which depends, on the one hand, on the number of so far embedded input trees and, on the other hand, on its current reticulation number. One should keep in mind, however, that such a priority queue can only speed up the computation of the hybridization number, since, only in this case, the search can be aborted immediately as soon as the first relevant network has been calculated. Otherwise, if one is interested in all relevant networks, each network has to be processed anyway until either it can be early aborted (which is the case if the reticulation number exceeds $k$) or it results in relevant networks.

\subsection{Reductions rules}
\label{sec-redRules}

In order to reduce the size of the input trees, before entering the exhaustive part of the algorithm, one can apply the particular reduction rules that are, on the one hand, the subtree reduction, following the work of Bordewich and Semple~\cite{Bordewich-2005}, and, on the other hand, the cluster reduction, following the work of Baroni~\emph{et al.}~\cite{Baroni-2006} and Linz~\cite{Linz-2008}. \\

\textbf{Subtree reduction.} Let $\cT$ be a set of rooted binary phylogenetic $\cX$-trees, then the subtree reduction transforms all of those trees into a set $\cT'$ of rooted binary phylogenetic $\cX$-trees by replacing each maximal pendant subtree $T'$ of size $\ge 2$ occurring in all trees of $\cT$. More precisely, let $v$ be the root of such a maximal pendant subtree $T'$. Then, in each tree of $\cT$, first all nodes that can be reached from $v$ are deleted and afterwards $v$ is labeled by a new taxon $a\not\in\cX$. Notice that, in order to undo the subtree reduction at a given time, one has to keep track which of these new taxa belongs to which common subtree.\\

\textbf{Cluster reduction.} Let $\cT$ be a set of rooted binary phylogenetic $\cX$-trees and let $A\subset \cX$ be a \emph{cluster} with $A\ge 2$ such that for each tree $T_i$ in $\cT$ there exists a specific node $v_i$ with $\cL(v_i)=A$. Then, the cluster reduction separates $\cT$ into two tree sets $\cT|_A$ and $\cT_a$, where $\cT|_A$ contains each tree $T_i|_A$ and $\cT_a$ contains each tree $T_i$ where $T(v_i)$ is replaced by a new taxon $a$. More precisely, the tree set $\cT_a$ is obtained from $\cT$ by first deleting from each tree $T_i$ all nodes that can be reached from $v_i$ and then by labeling $v_i$ by a new taxon $a\not\in\cX$. Notice that, in order to reattach those clusters back together at a given time, one has to keep track which of these new taxa belongs to which common cluster.\\

Hence, the cluster reduction cuts down the set of input trees into all minimum common clusters whose relevant networks can then be computed independently by running the presented algorithm for each of those clusters separately. Consequently, the cluster reduction usually provokes a significant speedup, because often a problem of high computational complexity can be separated into several subproblems providing low computational complexities, which can be solved efficiently on its own. Notice that, in Section~\ref{sec-cr}, we give a proof showing that the cluster reduction is save for multiple rooted binary phylogenetic $\cX$-trees $\cT$, which means that $h(\cT)$ corresponds to the sum of the minimum hybridization numbers each calculated for a different common cluster.

However, when applying the cluster reduction to a set $\cT$ of rooted binary phylogenetic $\cX$-trees, in order to obtain a set consisting of all relevant hybridization networks $\cN$ displaying $\cT$, due to the following observation one still has conduct further combinatorial steps. Let $A\subset \cX$ be a common cluster of $\cT$ and let $\cT_a$ be the set of trees obtained from $\cT$ by replacing each cluster $A$ through a leaf labeled by taxon $a\not\in\cX$. Then, in a further step, one still has to reattach the networks computed for $\cT|_A$ and $\cT_a$, shortly denoted by $\cN_A$ and $\cN_a$, respectively, as follows. 

First, replace each taxon $a$ of a network in $\cN_A$ by each network in $\cN_a$ resulting in a set of networks $\cN_{A,a}$. However, due to the following observation this set $\cN_{A,a}$ might be just a subset of all relevant networks $\cN$. Since $\cN_A$ and $\cN_a$ are calculated separately, the source node of a reticulation edge is caught in $\cN_A$ and $\cN_a$, respectively. This means, in particular, that, regarding the set of reattached networks $\cN_{A,a}$, each network whose source node of a specific edge $e$ could be also located outside of its subgraph referring to $\cN_A$ or $\cN_a$, is missing (cf.~Fig.~\ref{fig-caughtEdges}). For example, regarding Figure~\ref{fig-shiftUp}, the network at the bottom left would not be calculated, which is due to the fact that the blue in-edge referring to node $6$ could not ``leave'' the common cluster $\{2,3,4,9\}$. However, keep in mind, that, as proven in Section~\ref{sec-cr}, this fact does not have an impact on the calculation of the minimum hybridization number $h(\cT)$ and, as demonstrated in the following, we can still generate the set $\cN\setminus\cN_{A,a}$ of missing networks by applying further linking patterns.

\begin{figure}[t]
\centering
\includegraphics[width = 10cm]{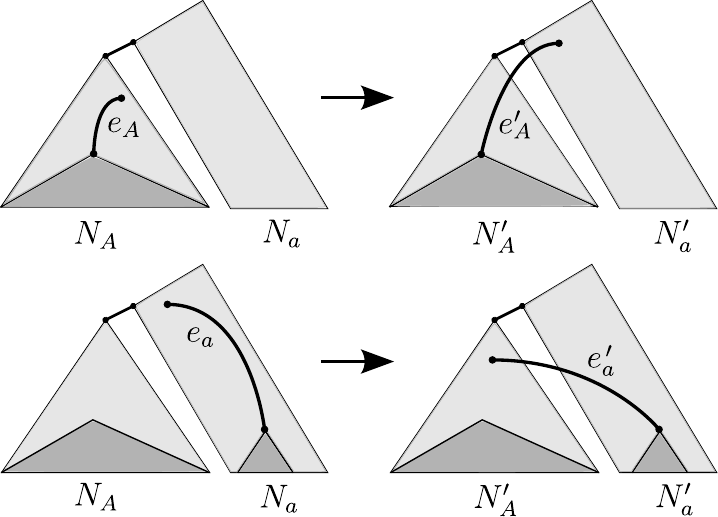}
\caption[An illustration of an caught edge within a cluster network]{An illustration of an edge in $\cN_{A,a}$ that is caught in a subgraph corresponding to $\cN_A$ and $\cN_a$, respectively. This means in particular that both relevant networks on the right hand side are missing in $\cN_{A,a}$.} 
\label{fig-caughtEdges}
\end{figure}

Let $N_a$ and $N_A$ be a network of $\cN_a$ and $\cN_A$, respectively. Then, this second step generates each missing network by performing \emph{legal shifting steps} reattaching reticulation edges to subgraphs beyond the border of two joined networks being part of $N_A$ and $N_a$, respectively. More precisely, we call a \emph{legal shifting step of a reticulation edge $e$} that is part of a subgraph corresponding to $N_A$ (resp. $N_a$), if it is possible to reattach $e=(x,y)$ to a node $s\ne x$ located in the subgraph corresponding $N_a$ (resp. $N_A$), so that still all input trees are displayed in the resulting network $N_{A,a}'$. Note that by saying reattaching we mean that first $e$ is deleted, then a new edge $(s,y)$ is inserted, and finally all nodes of both in- and out-degree one are suppressed. Now, in order to guarantee the computation of all relevant networks, for each network in $\cN_{A,a}$, one simply has to take all combinations of legal shifting steps into account.

For a better illustration of this concept, we will describe some linking-patterns that can be used to apply legal shifting steps in respect to a node $v$ and a subset $\cT'$ of all input trees $\cT$. In general, those pattern can be separated into three different types. A linking-pattern of Type A can be used to shift reticulation edges downwards in the given network, which means that the new source will be a successor of the original source node being part of a network separately calculated for a particular cluster of $\cT$. Similarly, a linking-pattern of Type B can be used to shift reticulation edges upwards in the given network, which means that the new source will be a predecessor of the original source node being part of a network separately calculated for a particular cluster of $\cT$. Once an edge has been shifted in terms of a pattern of Type A or Type B (or Type C), one can apply an additional linking-pattern of Type C (as defined below).

Just for convenience, in the following we will assume that the set of input trees $\cT$ only consists of two trees, which means that each reticulation edge is only necessary for the embedding of one of both trees and not for more than one tree (which obviously could be the case if $\cT$ contains more than two trees). Therefore, let $N$ be a relevant network displaying two rooted binary phylogenetic $\cX$-trees $T_1$ and $T_2$ and let $E_i$ be an edge set referring to $T_i\in\{T_1,T_2\}$.\\

\textbf{Linking-pattern of Type A.} Let $e=(x,y)$ be a reticulation edge of $E_i$ and let $\cP=(s_0,s_1,\dots,s_k)$ be a path in $N$ in which $s_0=x$, $y\not\in\cP$, and $v=s_i$ with $0<i\le k$. Moreover, let the out-degree of each node $s_i$, with $0<i\le k$, in $N|_{E_1,\cX}$ be $1$. Then, we can conduct a legal shifting step by first pruning $e$ and then by reattaching it to any node of $\cP$ (except $s_0$) (cf.~Fig.~\ref{fig-shiftDown}).\\

\textbf{Linking-pattern of Type B.} Let $e=(v,y)$ be a reticulation edge of $E_i$ and let $\cP=(s_0,s_1,\dots,s_k)$ be a path in $N$ in which $s_k=v$. Moreover, let the out-degree of each node $s_i$, with $0\le i<k$, in $N|_{E_1,\cX}$ be $1$. Then, we can conduct a legal shifting step by first pruning $e$ and then by reattaching it to any node of $\cP$ (except $s_k$) (cf.~Fig.~\ref{fig-shiftUp}).\\

\textbf{Linking-pattern of Type C.} Let $x$ be the source node of a reticulation edge $e_s$ that has already been shifted by applying a pattern of Type A or Type B. Moreover, let $e_t=(x,y)$ be an out-going tree edge of $x$ not necessary for displaying $T_i$. Then, we can conduct a legal shifting step by first pruning $e$ and then by reattaching it to $y$ (cf.~Fig.~\ref{fig-shiftDown}).\\

Now, let $N$ be a network of the set $\cN_{A,a}$ as defined above. Moreover, let $v$ be the root of the subgraph corresponding to $N_a$. Then, by applying those three linking-patterns to each network in $\cN_{A,a}$ and repeatedly to all resulting networks, one can produce the missing set of relevant networks. 

Note that, when applying a linking-pattern of Type B, the initial node $v$ might gets suppressed, if its in- and out-degree is $1$. In such a case, $v$ has to be redefined by the target of its out-going edge. Moreover, once an edge has been shifted downwards, one has to take care not shifting it back again upwards (and vice versa). This means, in particular, that edges that have been shifted in terms of a linking-pattern of Type A or B must not be shifted again by applying of one those two patterns. 

Lastly, by applying those linking patterns, the resulting networks not necessarily have to match the definition of a relevant network as given in Section~\ref{sec-pre}. Thus, one additionally has to apply the following two modifications.\\

\textbf{Modification of Type A.} By the linking patterns from above one automatically generates multifurcating nodes. Consequently, in order to turn those nonbinary networks into binary networks, one still has to resolve these nodes in all possible ways.\\

\textbf{Modification of Type B.} Moreover, by applying a linking-pattern of Type B, one can attach an edge $e$ to a hybridization node, which consequently means that a network is generated containing hybridization nodes of out-degree larger than one. As a consequence, one either has to reject those networks or, if such a hybridization node provides an in-edge $e_h$ that can be used for displaying the same set of trees as for $e$, one can first split $e_h$ and then attach $e$ to the new inserted node (cf.~Fig.~\ref{fig-shiftUp}).

\begin{figure}
\centering
\begin{tabular}{cc}
\includegraphics[height = 6cm, width = 8cm]{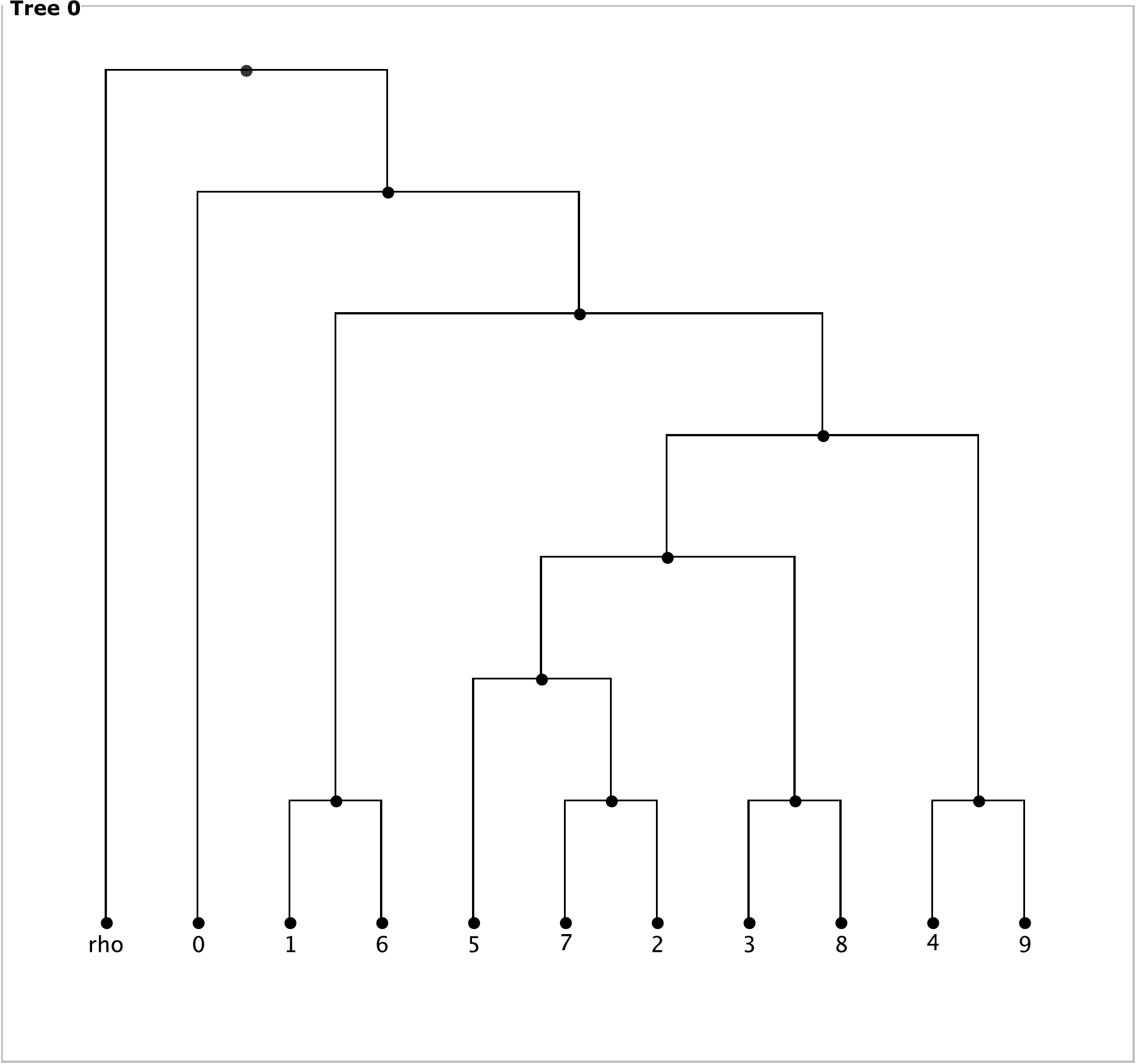}
&
\includegraphics[height = 6cm, width = 8cm]{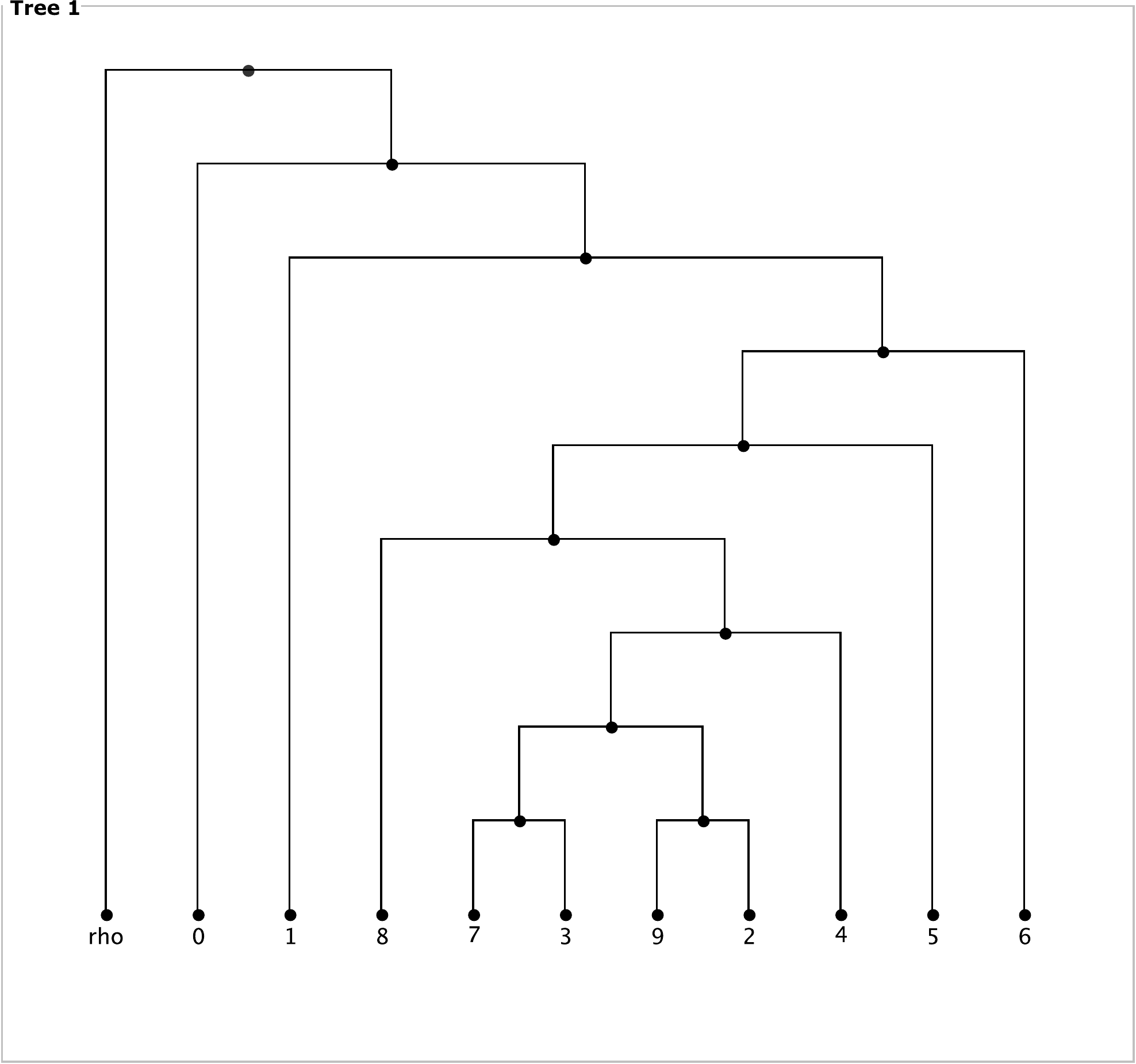}
\\
\includegraphics[height = 6cm, width = 8cm]{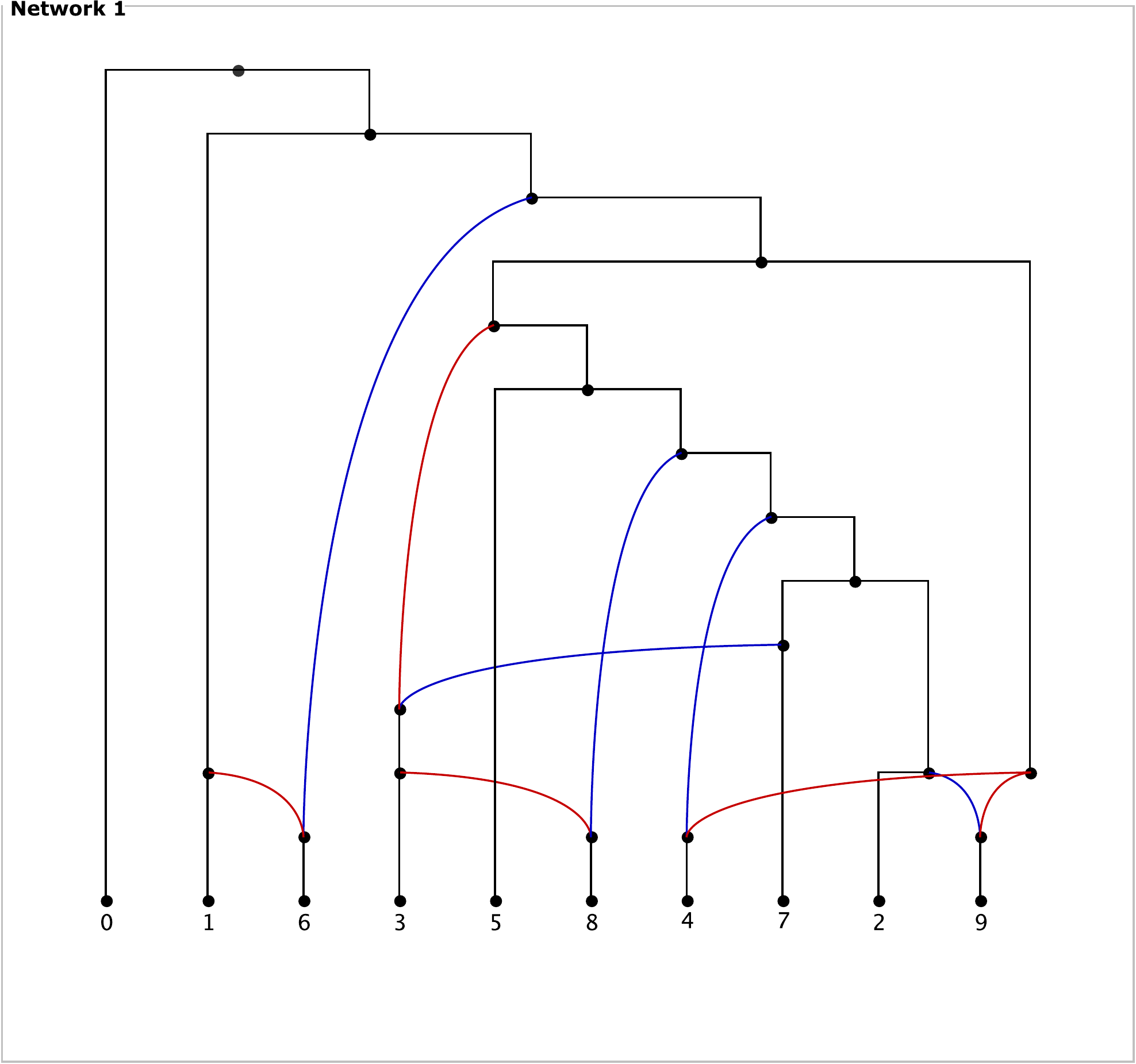}
&
\includegraphics[height = 6cm, width = 8cm]{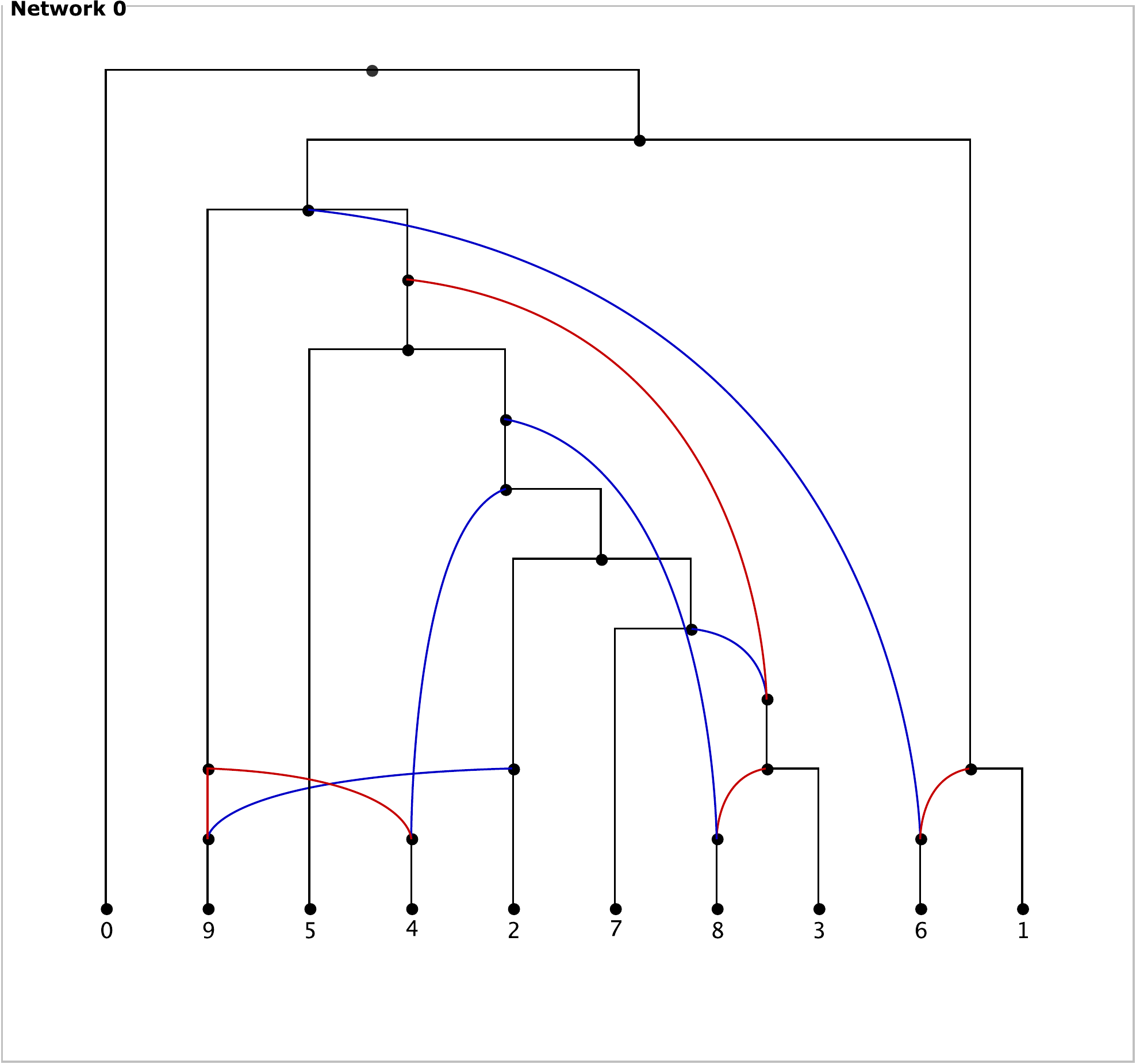}
\\
\includegraphics[height = 6cm, width = 8cm]{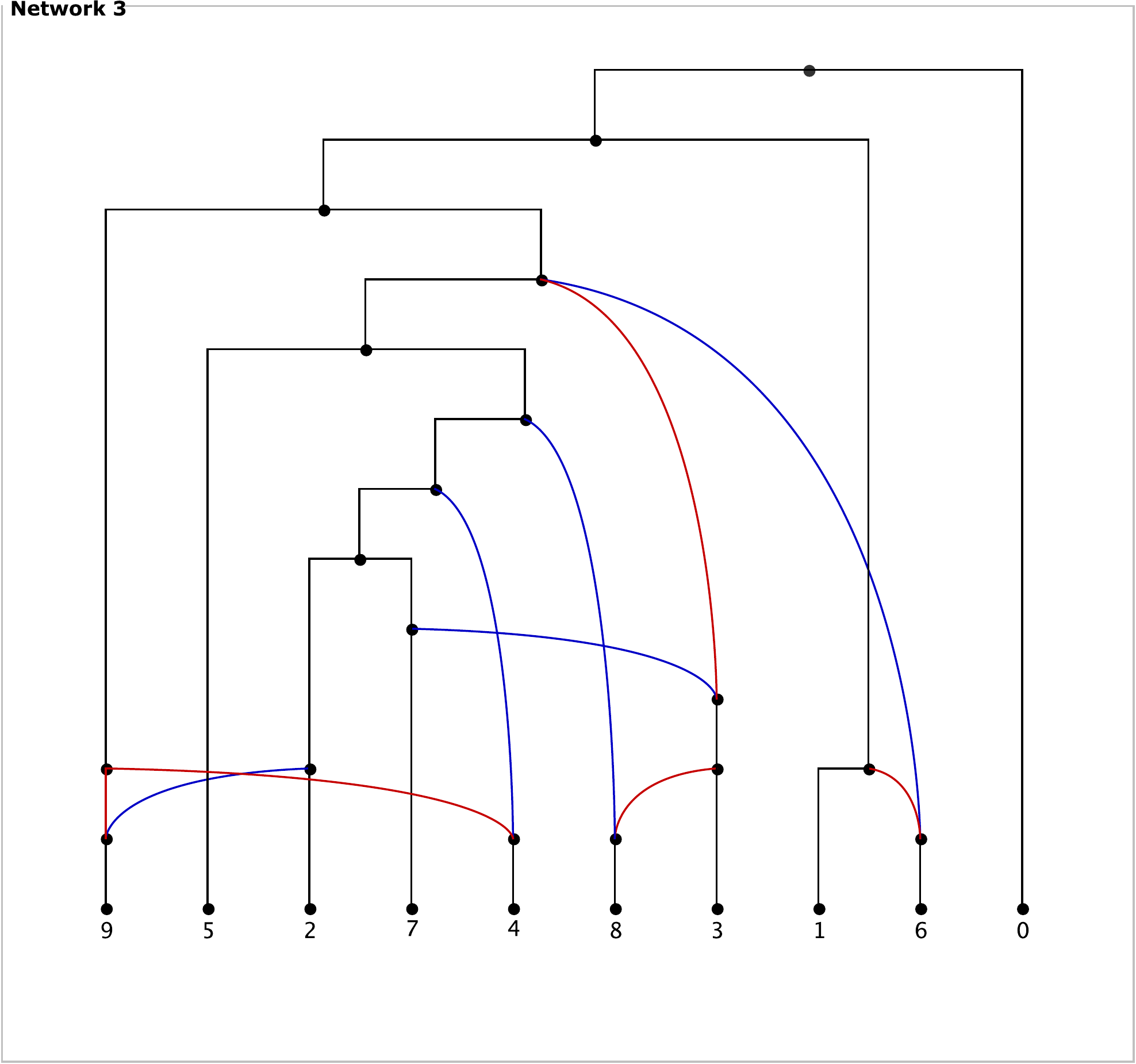}
&
\includegraphics[height = 6cm, width = 8cm]{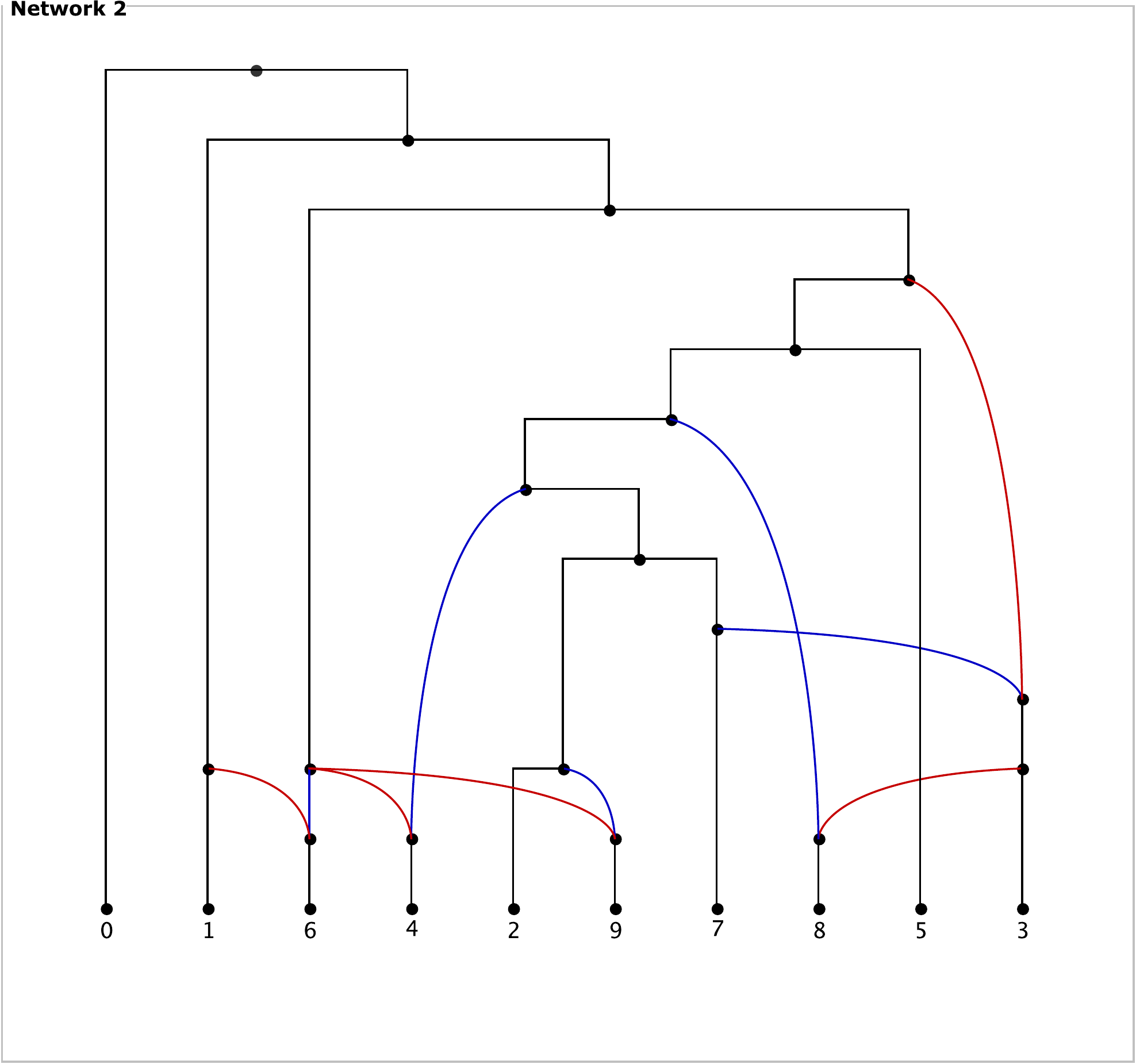}
\\
\end{tabular}
\caption[An illustration of the application of linking patterns - Part 2] {(\textbf{Top}) Two rooted binary phylogenetic $\cX$-trees sharing the common cluster $\{2,3,4,5,7,8,9\}$. (\textbf{Rest}) Minimum hybridization networks displaying both trees from the top where red edges refer to the left and blue edges to the right tree. Both networks, Networks 0~and Networks~2, can be obtained from Network~1 by applying a linking-pattern of Type A, whereas Network~3 can be obtained from Network~0 by applying a linking-pattern of Type C.} 
\label{fig-shiftDown}
\end{figure}

\begin{figure}
\centering
\begin{tabular}{cc}
\includegraphics[width = 8cm]{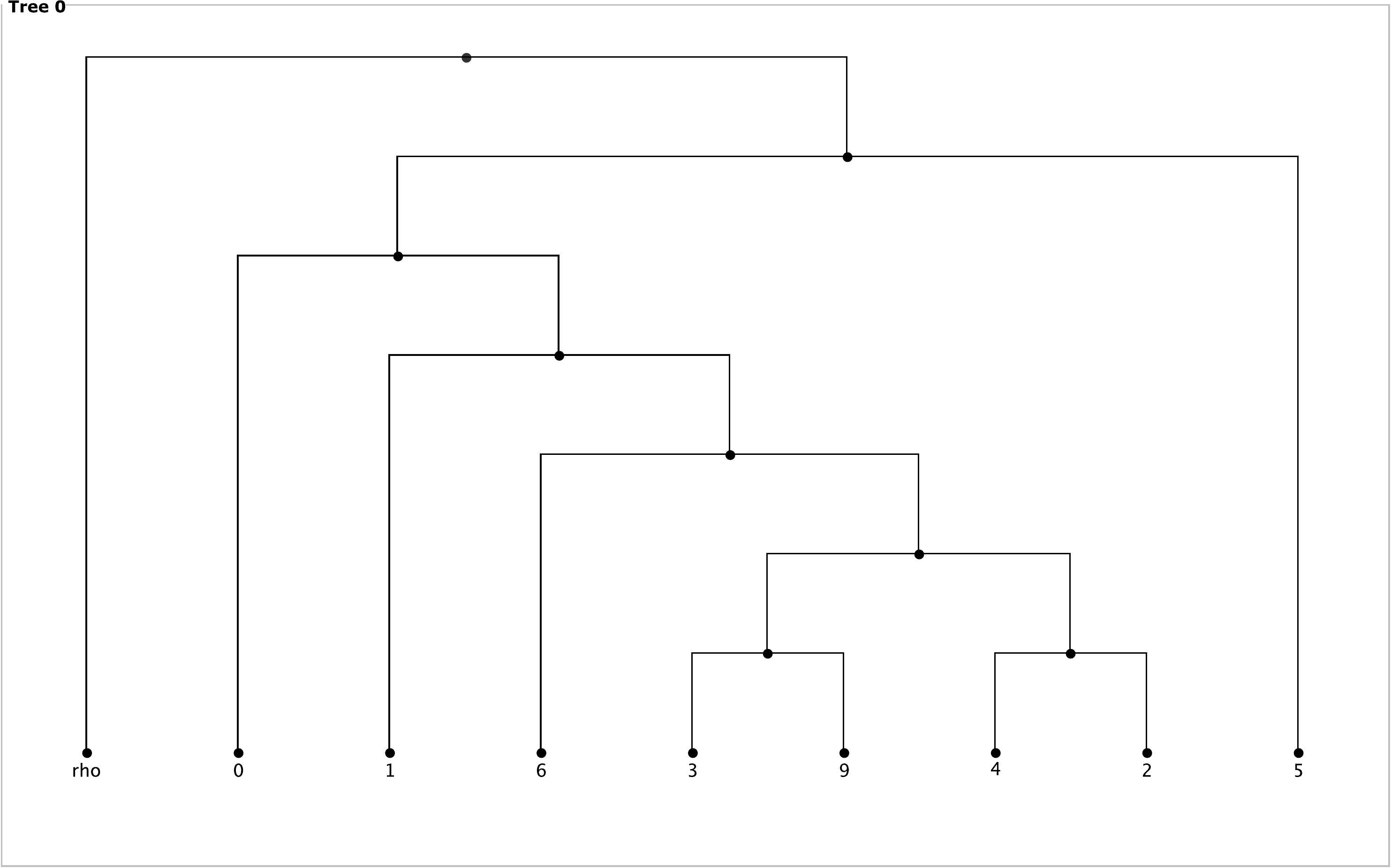}
&
\includegraphics[width = 8cm]{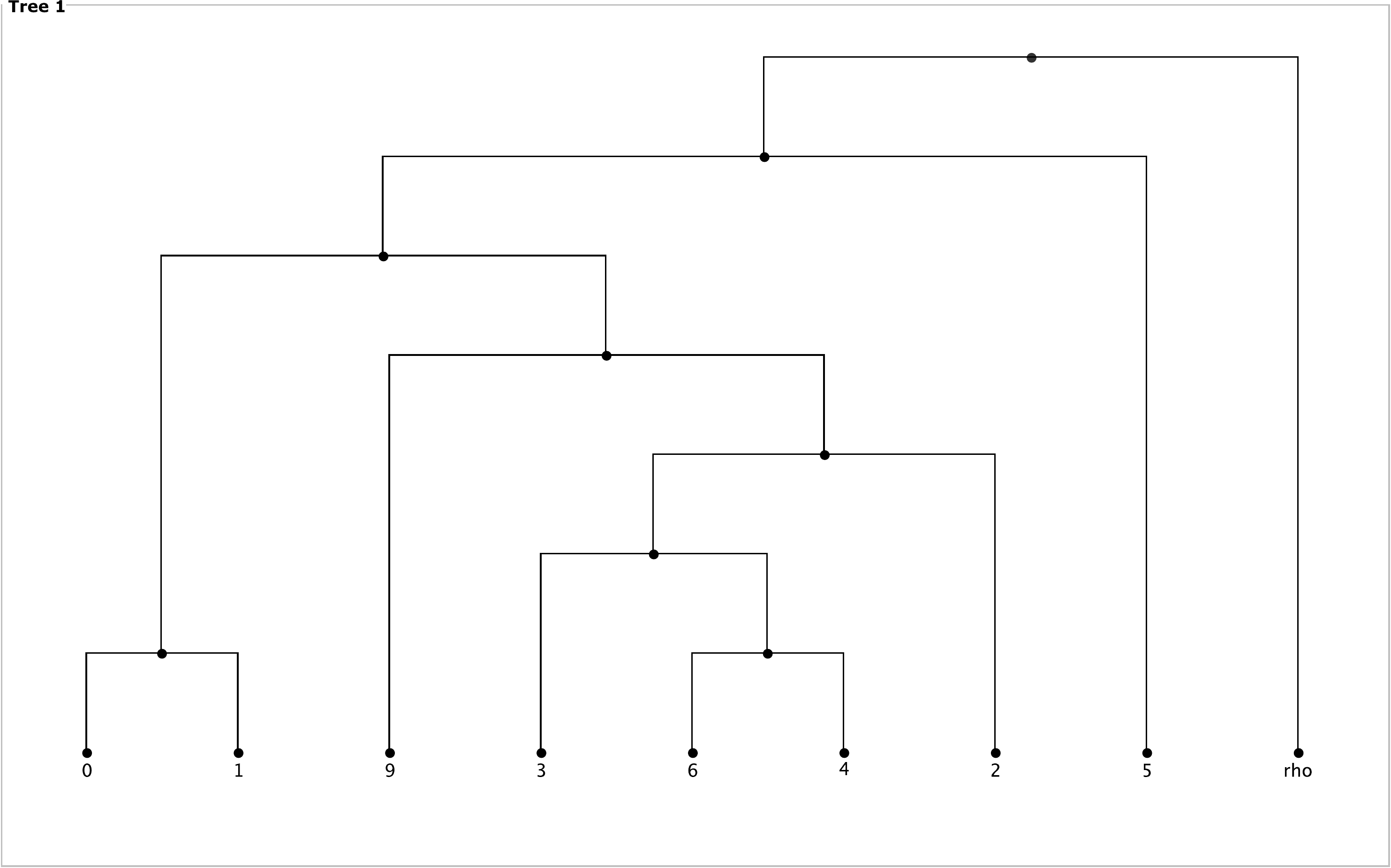}
\\
\includegraphics[width = 8cm]{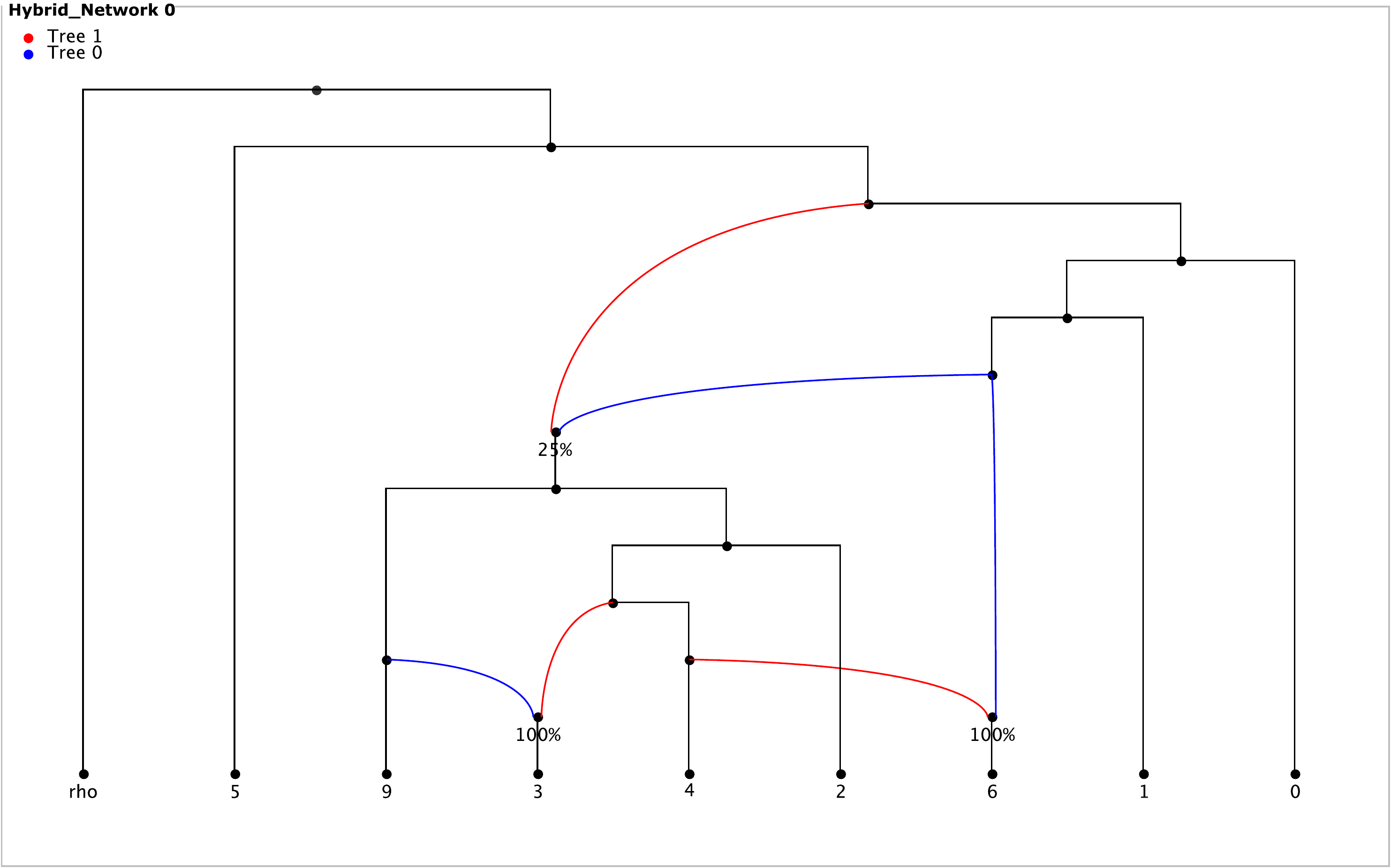}
&
\includegraphics[width = 8cm]{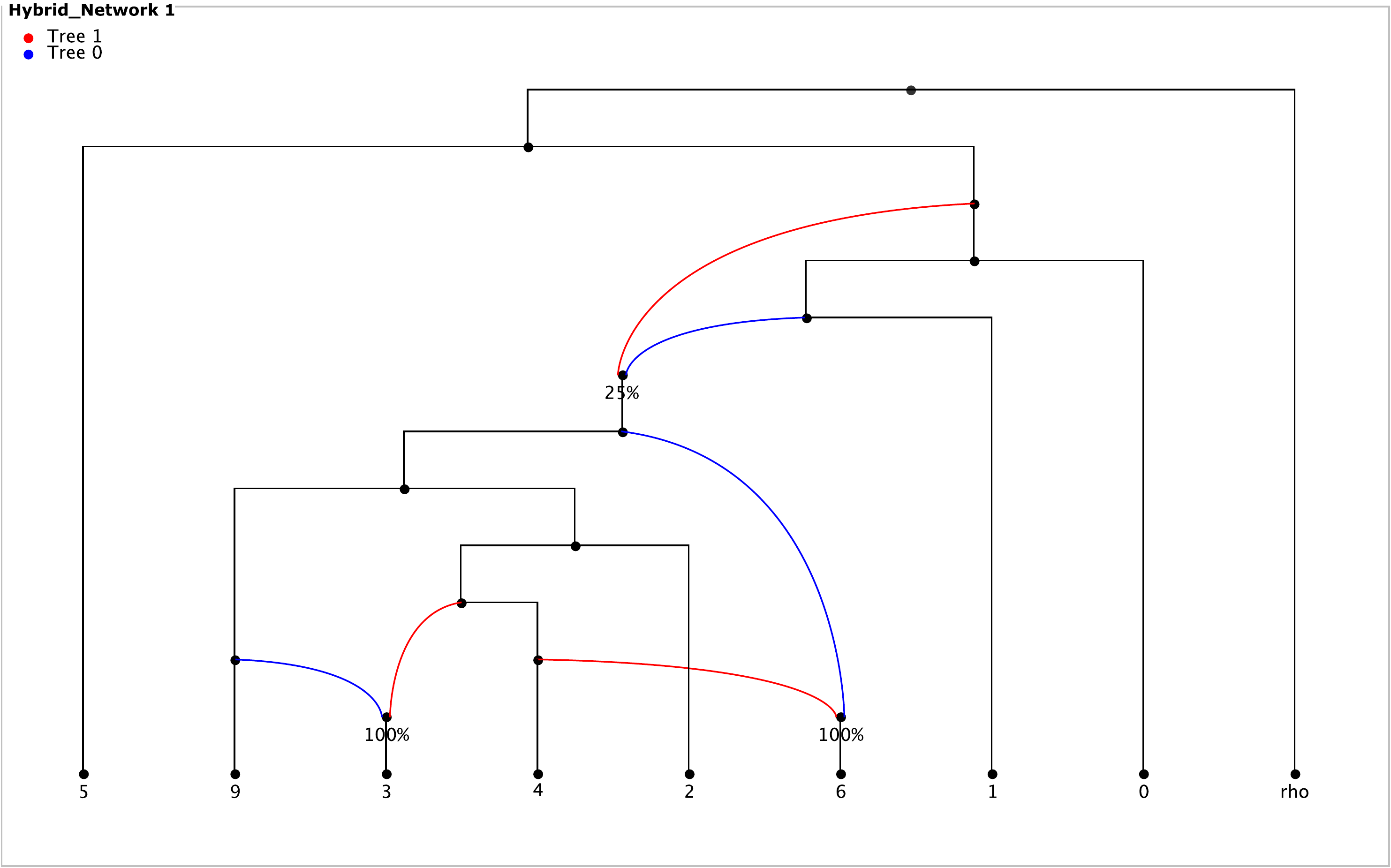}
\\
\end{tabular}
\caption[An illustration of the application of linking patterns - Part 1] {(\textbf{Top}) Two rooted binary phylogenetic $\cX$-trees sharing the common cluster $\{2,3,4,6,9\}$. (\textbf{Bottom}) Two networks displaying both trees from the top  where blue edges refer to the left and red edges to the right tree. The left network can be obtained from the right one by first applying a linking-pattern of Type B, attaching the blue in-edge referring to node $6$ to the hybridization node labeled by $25\%$, and then by applying a modification of Type B.} 
\label{fig-shiftUp}
\end{figure}

\clearpage
\section{Cluster reduction on multiple trees}
\label{sec-cr}

In the following, we will give a formal proof showing that the cluster reduction is safe for a set $\cT$ of \emph{multiple} rooted binary phylogenetic $\cX$-trees as noted in Theorem~\ref{21-th-cr}.

\begin{theorem}
Given a set of rooted binary phylogenetic $\cX$-trees $\cT$ all containing a common cluster $A\subset \cX$, then, $h(\cT)=h(\cT|_A)+h(\cT_a)$.
\label{21-th-cr}
\end{theorem}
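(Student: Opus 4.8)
The plan is to prove the two inequalities $h(\cT)\le h(\cT|_A)+h(\cT_a)$ and $h(\cT)\ge h(\cT|_A)+h(\cT_a)$ separately. The upper bound I would obtain by an explicit substitution. Take a minimum hybridization network $N_A$ for $\cT|_A$ with $r(N_A)=h(\cT|_A)$ and a minimum hybridization network $N_a$ for $\cT_a$ with $r(N_a)=h(\cT_a)$, and build $N$ from $N_a$ by deleting the leaf labelled $a$ together with its in-edge, attaching the freed end to the root of $N_A$, and suppressing the node of in- and out-degree~$1$ this creates. By the reticulation-number formula~\eqref{04-eq-retNumBin} no in-degree changes except at that suppressed degree-$1$ node, so $r(N)=r(N_A)+r(N_a)$; and for every $T_i\in\cT$, if $E_A^i\subseteq E(N_A)$ refers to $T_i|_A$ and $E_a^i\subseteq E(N_a)$ refers to the tree obtained from $T_i$ by replacing $A$ with $a$, then $E_A^i\cup E_a^i$ refers to $T_i$ in $N$, because plugging the embedding of $T_i|_A$ into the leaf $a$ of the embedding of the contracted tree reconstructs exactly $T_i$. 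Hence $h(\cT)\le r(N)=h(\cT|_A)+h(\cT_a)$.

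For the reverse inequality I would start from a minimum hybridization network $N$ for $\cT$ with $r(N)=h(\cT)$, together with fixed edge sets $E_i$ referring to each $T_i$, and split $N$ along the cluster $A$. Let $\rho_A$ be the lowest node of $N$ lying on every directed path from the root to an $A$-labelled leaf (the lowest stable ancestor of the $A$-leaves; it exists since the root qualifies), let $N_A^{\downarrow}$ be the subnetwork rooted at $\rho_A$, regarded as a network on taxon set $A$, and let $N_a^{\downarrow}$ be obtained from $N$ by contracting $N_A^{\downarrow}$ to a single leaf labelled $a$. The target is to establish: (i) $N_A^{\downarrow}$ displays $\cT|_A$, hence $r(N_A^{\downarrow})\ge h(\cT|_A)$; (ii) $N_a^{\downarrow}$ displays $\cT_a$, hence $r(N_a^{\downarrow})\ge h(\cT_a)$; and (iii) $r(N)=r(N_A^{\downarrow})+r(N_a^{\downarrow})$, since each reticulation contributes to the count at its target and the targets split according to whether they lie inside $N_A^{\downarrow}$. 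Combining (i)--(iii) gives $h(\cT)=r(N)\ge h(\cT|_A)+h(\cT_a)$, and together with the first part the theorem follows.

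I expect the crux to be that $\rho_A$ need not separate $N$ cleanly: a reticulation edge of $N$ may cross the boundary of $N_A^{\downarrow}$, with its source on one side and its target on the other --- exactly the ``caught'' edges discussed in Section~\ref{sec-speed} --- and such edges can break (i)--(iii) and can even force non-$A$ leaves to be trapped below $\rho_A$. The way I would handle this is a normalisation step: show that any minimum hybridization network for $\cT$ can be transformed, without increasing its reticulation number and while still displaying $\cT$, into one in which no reticulation edge crosses the boundary of $N_A^{\downarrow}$, so that $\rho_A$ is a tree node with $\cL(\rho_A)=A$ and $N_A^{\downarrow}$ is a genuine pendant subnetwork. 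The mechanism is to ``un-shift'' each offending edge to the side it belongs to, using that a legal shifting step of a reticulation edge (cf.\ the linking-pattern discussion in Section~\ref{sec-speed}) changes neither the set of displayed trees nor the reticulation number; the facts to pin down are that a boundary-crossing edge whose target lies strictly below $\rho_A$ is used in the embedding of no $T_i|_A$, and dually for edges pointing out of the cluster, so that rerouting or deleting them is harmless. I would likely organise this as an induction on the number of boundary-crossing edges (or on $|\cX|$), with the known two-tree case~\cite{Linz-2008} as the base; once the normalisation is in place, (i)--(iii) are routine from the definitions of ``displays'', of the restricted network, and of the reticulation number.
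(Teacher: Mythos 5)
Your upper-bound argument is fine and is essentially the paper's: glue a minimum network $N_A$ for $\cT|_A$ into the leaf $a$ of a minimum network $N_a$ for $\cT_a$ and check that reticulation numbers add and that each $T_i$ is displayed. The problem is the lower bound. Your plan reduces inequality~\eqref{21-eq-cr2} to a ``normalisation lemma'': every minimum network for $\cT$ can be transformed, without increasing $r$, into one in which $N_A^{\downarrow}$ is a pendant subnetwork with no boundary-crossing reticulation edges. But that lemma carries the entire content of the theorem (given the easy direction, it is equivalent to it), and you do not prove it --- you sketch it by ``un-shifting'' offending edges via legal shifting steps. That is circular as stated: legal shifting steps are defined as moves that preserve the property of displaying $\cT$ \emph{starting from a network known to display it after the move}, and invoking them in reverse presupposes that the crossing edges of your arbitrary optimal $N$ arose by shifting edges of some clean network, which is exactly what has to be shown. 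Moreover, the specific ``fact to pin down'' you cite --- that an in-crossing edge with target strictly below $\rho_A$ is used in no embedding of any $T_i|_A$ --- is neither proved nor obviously true, and even where it holds it does not make the edge deletable or freely reroutable: such an edge may be indispensable for displaying the $\cT_a$-part of some $T_j$ (for instance it may be the only way a non-$A$ leaf trapped below $\rho_A$ is reached in $E_j$), so removing or rerouting it can destroy embeddings you still need. An induction ``on the number of boundary-crossing edges'' has no induction step until you exhibit a single $r$-non-increasing, display-preserving move that reduces that number, and no such move is exhibited.

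For contrast, the paper avoids normalising an arbitrary optimal network altogether. It works with the glued network $N_{A,a}$ and its legal-shift variants $N_{A,a}'$, introduces the notion of a \emph{compensable} reticulation edge (Lemma~\ref{21-lem-comp}: $e$ is deletable iff there are two node-disjoint paths between common endpoints with identical sets of restricted pendant subtrees), and then runs a case analysis on which side(s) of the cluster boundary edges have been shifted to, showing in each case that a compensable edge in $N_{A,a}'$ would already have been compensable in $N_A$ or $N_a$ --- contradicting their minimality. In other words, the paper proves that the glued construction cannot be beaten, rather than that an arbitrary optimum can be disentangled. If you want to salvage your route, the missing piece is precisely a proof of the normalisation lemma (this is what the two-tree arguments of Baroni--Semple and Linz supply in their settings, and it is the hard part there too); as submitted, your proof establishes only $h(\cT)\le h(\cT|_A)+h(\cT_a)$.
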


\subsection{Related work}

In general, there are two important works dealing with the cluster reduction of rooted binary phylogenetic $\cX$-trees.\\

\textbf{Baroni, 2006.} The well-known work of Baroni \emph{et al.}~\cite{Baroni-2006} contains a proof showing that the hybridization number of \emph{two} rooted binary phylogenetic $\cX$-trees can be computed by simply summing up the hybridization numbers of its common clusters. More precisely, given two rooted binary phylogenetic $\cX$-trees $T$ and $T'$ containing a common cluster $A\subset \cX$, then $h(T,T')=h(T|_A,T'|_A)+h(T_a,T'_a)$, where $T_a$ and $T'_a$ refers to the respective input tree in which the common cluster has been replaced by a new taxon $a$.\\

\textbf{Linz, 2008.} A more general proof, showing that a similar fact also holds for more than two rooted binary phylogenetic $\cX$-trees, is given in the PhD thesis of Linz~\cite[Theorem 2.5]{Linz-2008}. This proof, however, in contrast to our definition of the hybridization number $h$ (cf.~Eq.~\ref{04-eq-hNumBin}), is based on a different definition, denoted by $h'$, only considering the total number of hybridization nodes of a network. More precisely, given a set of rooted binary phylogenetic $\cX$-trees $\cT$, then $h'(\cT)=h'(\cT|_A)+h'(\cT_a)$ with $$h'(\cT)=\min\{h'(N):\text{N is a hybridization network displaying }\cT\},$$ where $h'(N)$ just counts the number of hybridization nodes in $N$. This means, in particular, that, in contrast to the reticulation number $r(N)$ as defined here~(cf.~Eq.~\ref{04-eq-retNumBin}), $h'(N)$ does not take the number of edges that are directed into a hybridization node into account. Consequently, the two values $r(N)$ and $h'(N)$ differ, if the network $N$ provides a hybridization node with in-degree larger than two.

\subsection{Further definitions}

In the following, we will first give some further definitions that are crucial for establishing Theorem~\ref{21-th-cr}.\\

\textbf{Hybridization networks.} Given a hybridization network $N$ on $\cX$ and a subset $E'$ of reticulation edges in $N$, then, by writing $N-E'$ we denote the network that is obtained from $N$ by first deleting $E'$ and then by suppressing each node of both in- and out-degree $1$.\\

\textbf{Restricted pendant subtrees.} Let $E'$ be an edge set referring to a rooted binary phylogenetic $\cX'$-tree $T'$ that is displayed in a hybridization network $N$. Then for a path $P$ connecting two nodes both contained in $N|_{E',\cX'}$, we denote by $\cR_N(P,E',\cX')$ the set of \emph{non-empty restricted pendant subtrees} of each node lying on $P$. More precisely, each subtree $R_i$ in $\cR_N(P,E',\cX')$ refers to a non-empty subgraph of $N|_{E',\cX'}$ with root $v\not\in P$, which is connected through an edge to a node $w\in P$, such that $R_i$ equals $\overline{N|_{E',\cX'}(v)}$ (cf.~Fig.~\ref{21-fig-embTree}(b)).

\begin{figure}[t]
\centering
\includegraphics[scale=0.9]{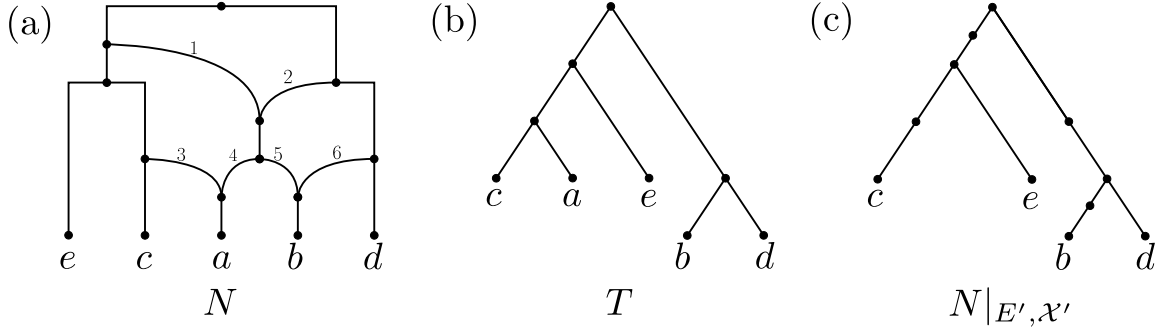}
\caption[Definition regarding restricted pendant subtrees]{\textbf{(a)} A hybridization network $N$ with taxa set $\cX=\{a,b,c,d,e\}$ whose reticulation edges are consecutively numbered. \textbf{(b)}~The restricted network $N|_{E',\cX'}$ with $E'=\{3,6,1\}$ and $\cX'=\{b,c,d,e\}$ still containing nodes of both in- and out-degree~$1$. Let $P$ be the path connecting both nodes $v$ and $w$ in $N|_{E',\cX'}$, then $\cR_N(P,E',\cX')$ consists of the four non-empty restricted pendant subtrees $(c)$, $(e)$, $(b)$, and~$(d)$.} 
\label{21-fig-embTree}
\end{figure}

\subsection{Proof of Theorem~\ref{21-th-cr}}
\label{sec-pcr}

\begin{proof}
As defined in Theorem \ref{21-th-cr}, we have that $A\subset\cX$. Now, in a first step, we show that 
\begin{equation}\label{21-eq-cr1}
h(\cT)\le h(\cT|_A)+h(\cT_a)
\end{equation}
by contradiction. Let $N$ be a hybridization network displaying $\cT$ with minimum hybridization number $h(\cT)$, and let $N_A$ and $N_a$ be a hybridization network displaying $\cT|_A$ with minimum hybridization number $h(\cT|_A)$ and $\cT_a$ with minimum hybridization number $h(\cT_a)$, respectively. Moreover, let $N_{A,a}$ be the network that is obtained from $N_a$ by replacing taxon $a$ through $N_A$. This is done, in particular, by first attaching each in-going edge of the leaf $v_a$ labeled by taxon $a$ to the root of $N_A$ and then by removing label $a$ from $v_a$. Now, if $h(\cT)>h(\cT|_A)+h(\cT_a)$ holds, then simultaneously $r(N)>r(N_{A,a})$ must hold which is a contradiction to the choice of $N$. $\lightning$ 

Next, we will show that 
\begin{equation}\label{21-eq-cr2}
h(\cT)\ge h(\cT|_A)+h(\cT_a)
\end{equation}
by discussing several cases.

In a first step, however, we have to establish a new lemma that is crucial for proving this inequation. Given a hybridization network $N$ containing a reticulation edge $e$, we say that \emph{$e$ can be compensated} if $N-\{e\}$ still displays $\cT$. This is the case if and only if $N$ displays the scenario as described in Lemma~\ref{21-lem-comp} (cf.~Fig.~\ref{21-fig-comp1}). 


\begin{lemma}
Given a hybridization network $N$ displaying a set $\cT$ of rooted binary phylogenetic $\cX$-trees. Then, a reticulation edge $e$ in $N$ can be compensated if and only if for each tree $T_i$ in $\cT$, whose referring edge set $E_i$ contains $e$, there exists another edge set $E_i'\ne E_i$ such that the following condition is satisfied. There exist two node-disjoint paths $P$ and $P'$ both connecting two nodes $u$ and $w$ with $e\in P$ and $e\not\in P'$ such that $\cR_N(P,E_i,\cX)=\cR_N(P',E_i',\cX)$.
\label{21-lem-comp}
\end{lemma}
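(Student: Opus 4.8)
The plan is to prove both implications by analysing, for a fixed tree $T_i\in\cT$ whose edge set $E_i$ selects $e$, where in $T_i$ the subtree $S$ displayed below the reticulation target $y$ of $e$ is attached. Since $T_i$ is a tree, that attachment point is unique, and this rigidity is what forces the claimed parallel structure. Throughout I will use that a tree of $\cT$ whose edge set does not contain $e$ is automatically still displayed by $N-\{e\}$: deleting $e$ only lowers $\delta^-(y)$, and the effect is absorbed by the subsequent suppression of degree-$2$ nodes. Hence in both directions the only trees that matter are those whose edge set contains $e$.

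For the implication from the parallel-path condition to compensability, fix such a tree $T_i$ and take the given alternative edge set $E_i'$ together with the internally node-disjoint $u$--$w$ paths $P$ (through $e$, inside $N|_{E_i,\cX}$) and $P'$ (avoiding $e$, inside $N|_{E_i',\cX}$) with $\cR_N(P,E_i,\cX)=\cR_N(P',E_i',\cX)$. The idea is that within the embedding $E_i$ the $u$--$w$ portion realised by $P$ may be re-routed along $P'$: the displayed tree depends only on the pendant subtrees hanging off this portion and on the order in which they occur along it, not on the particular route, and both orderings coincide because $P$ and $P'$ are sub-paths of embeddings of the \emph{same} tree $T_i$ and hence both read off the pendant subtrees along the root-to-$S$ path of $T_i$. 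Performing this re-routing yields an edge set $E_i''$ with $\overline{N|_{E_i'',\cX}}=T_i$ in which the only in-edge of $y$ that could still be selected lies on the abandoned path $P$; hence $e\notin E_i''$, so $E_i''$ is an embedding of $T_i$ in $N-\{e\}$. Together with the trees not using $e$, this shows that $N-\{e\}$ displays $\cT$, that is, $e$ can be compensated.

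For the converse, assume $N-\{e\}$ displays $\cT$, fix $T_i$ with $e\in E_i$, and let $\hat E_i$ be an edge set referring to $T_i$ in $N-\{e\}$. Pulling $\hat E_i$ back through the node suppressions that created $N-\{e\}$ produces an edge set $E_i'$ of $N$ with $e\notin E_i'$ and $\overline{N|_{E_i',\cX}}=T_i$, so in particular $E_i'\ne E_i$. Because the subtree $S$ below $y$ occupies one fixed position of $T_i$, after suppression the root-to-$y$ path of $N|_{E_i,\cX}$ (which contains $e$) and the root-to-$y$ path of $N|_{E_i',\cX}$ (which does not) realise the same root-to-$S$ path of $T_i$, and therefore carry the same sequence of non-empty pendant subtrees. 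Decomposing these two paths of $N$ into their maximal common and maximal divergent segments, the edge $e$ forces at least one divergent segment; let $u$ and $w$ be the endpoints of the divergent segment containing $e$ (they lie on both paths), let $P$ be that segment and $P'$ the corresponding segment of the other path between $u$ and $w$. These paths are internally node-disjoint, satisfy $e\in P$ and $e\notin P'$, and since the common parts before $u$ and after $w$ contribute identical pendant subtrees, $P$ and $P'$ must account for the same contiguous block of the shared pendant-subtree sequence, i.e., $\cR_N(P,E_i,\cX)=\cR_N(P',E_i',\cX)$.

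The main obstacle is the bookkeeping hidden behind the informal phrases above. One has to verify that pulling an embedding back from $N-\{e\}$ to $N$ is well defined and really produces an edge set referring to $T_i$, and, more delicately, that the re-routing of $E_i$ along $P'$ yields a genuine edge set $E_i''$ rather than something malformed; the crucial point is that the pendant subtrees along $P$ and $P'$ agree not merely as sets but in their order along the path and at the endpoints $u$ and $w$, which is exactly where one leans on both paths being sub-paths of embeddings of the one fixed tree $T_i$ (and on $w$ being, in effect, the reticulation target of $e$). I would present this matching, together with the explicit configuration depicted in Figure~\ref{21-fig-comp1}, as the technical heart of the proof.
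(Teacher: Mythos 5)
Your proposal is correct and follows essentially the same route as the paper: for sufficiency you re-route the embedding $E_i$ along $P'$ (the paper's $\hat E_i=E_i\setminus\{e\}\cup E_H(P')$), and for necessity you take an alternative embedding $E_i'$ avoiding $e$ and extract the divergence/reconvergence points $u$ and $w$ of the two restricted networks (your ``maximal divergent segment containing $e$'' is exactly the paper's choice of the edges $e'$ and $e''$ closest to the root), with the equality of pendant-subtree sequences justified in both cases by the fact that $E_i$ and $E_i'$ display the same tree $T_i$. The level of detail you flag as the technical heart (order-preserving matching of pendant subtrees along $P$ and $P'$) is left at the same informal level in the paper itself.
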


\begin{proof}
'$\Longleftarrow$': For each tree $T_i$, whose referring edge set $E_i$ contains $e$, let $\hat E_i=E_i\setminus\{e\} \cup E_H(P')$, where $E_H(P')$ denotes the set of reticulation edges in $P'$. Then, since $\cR_N(P,E_i,\cX)=\cR_N(P',E_i',\cX)$, $\hat E_i$ refers to $T_i$ and, thus, $N-\{e\}$ still displays $T_i$.

'$\Longrightarrow$': If $e$ can be compensated, this implies that $N-\{e\}$ still displays $\cT$. This means, in particular, that for each edge set $E_i$ containing $e$ and referring to a tree $T_i$ in $\cT$, there has to exist a further edge set $E_i'\ne E_i$ not containing $e$ but still referring to $T_i$. Now, based on the two restricted networks $N|_{E_i,\cX}$ and $N|_{E_i',\cX}$, we can define two particular paths $P$ and $P'$. 

Let $e'$ be an edge of $N$ satisfying the following two conditions. First the source node $u$ of $e'$ is part of $N|_{E_i,\cX}$ as well as of $N|_{E_i',\cX}$ and its target node is only part of $N|_{E_i',\cX}$ but not of $N|_{E_i,\cX}$. Second, there is no other edge in $N$ fulfilling this property and is closer to the root. Similarly, let $e''$ be an edge of $N$ satisfying the following two conditions. First, the target node $w$ of $e''$ is part of $N|_{E_i,\cX}$ as well as of $N|_{E_i',\cX}$ and its source node is only part of $N|_{E_i',\cX}$ but not of $N|_{E_i,\cX}$. Second, there is no other edge in $N$ fulfilling this property and is closer to the root. 

Then, there are two specific paths in $N$ running from $u$ to $w$; one being part of $N|_{E_i,\cX}$ (and, thus, containing~$e$), denoted by $P=(u,a_1,\dots,a_k,w)$, and the other one being part of $N|_{E_i',\cX}$ (and, thus, not containing~$e$), denoted by $P'=(u,b_1,\dots,b_{k'},w)$ (cf.~Fig.~\ref{21-fig-comp1}). Moreover, as both edges $e'$ and $e''$ are chosen such there exist no other edges fulfilling the respective properties and are closer to the root, $a_i\ne b_j$ for each node $a_i\in\{a_1,\dots,a_k\}$ and each node $b_j\in(b_1,\dots,b_{k'})$. Additionally, since $E_i$ and $E_i'$ both refer to $T_i$, $\cR_N(P,E_i,\cX)=\cR_N(P',E_i',\cX)$, which finally establishes Theorem~\ref{21-th-cr}. 
\end{proof}


\begin{figure}[t]
\centering
\includegraphics[scale=1.15]{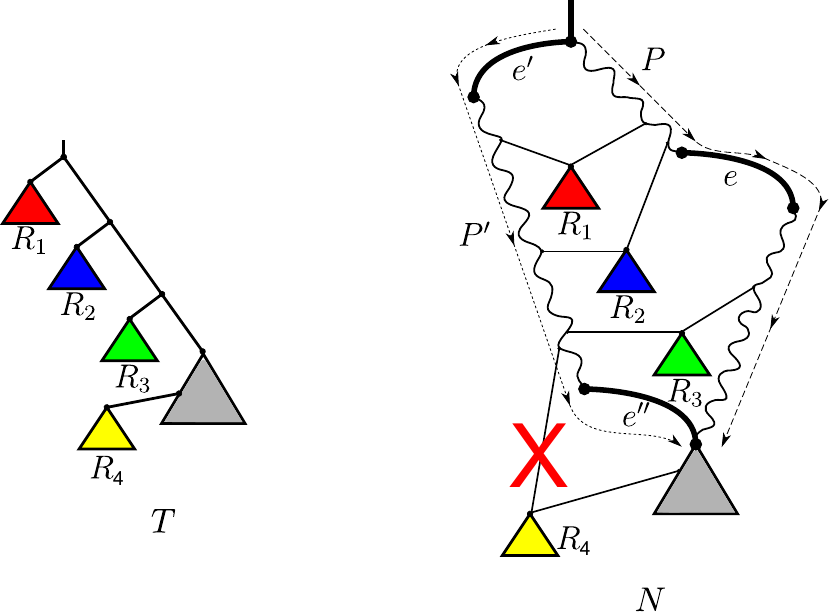}
\caption[Illustration of a scenario compensating an edge]{Illustration of a scenario compensating the reticulation edge $e$ regarding the embedding of $T$ in $N$ (see proof of Lemma~\ref{21-lem-comp}). Note that $e$ could not be compensated if $R_4$ would be a pendant subtree of $P'$.} 
\label{21-fig-comp1}
\end{figure}

Now, let each network $N$, $N_A$, $N_a$, and $N_{A,a}$ be as defined above. Moreover, let $E_A$ and $E_a$ be a specific subset of reticulation edges of those contained in the subnetwork corresponding to $N_A$ and $N_a$, respectively, satisfying the following condition. For each edge $e$ in $E_A$ (resp. $E_a$), this edge can be reattached to a specific edge of the subnetwork corresponding to $N_a$ (resp. $N_A$), so that the resulting network $N_{A,a}'$ still displays $\cT$ (cf.~Fig.~\ref{21-fig-shift}). Additionally, let $N_a'$ and $N_A'$ be the two subgraphs in $N_{A,a}'$ consisting of each element in $N_a$ and $N_A$, respectively. Now, if $h(\cT)<h(\cT|_A)+h(\cT_a)$ holds, based on $E_A$ and $E_a$, we have to consider the following four cases (cf.~Fig.~\ref{21-fig-shift}).

\begin{itemize}
\item[(i)] Let $E_A=\emptyset$ and $E_a=\emptyset$. There exists a set of reticulation edges $E_{A,a}'\neq \emptyset$ in $N_A$ or $N_a$ that can be compensated.
\item[(ii)] Let $E_A\neq\emptyset$ and $E_a=\emptyset$. There exists a set of reticulation edges $E_{A,a}'\neq \emptyset$ in $N_{A,a}'$ that can be compensated. 
\item[(iii)] Let $E_A=\emptyset$ and $E_a\neq\emptyset$. There exists a set of reticulation edges $E_{A,a}'\neq \emptyset$ in $N_{A,a}'$ that can be compensated.
\item[(iv)] Let $E_A\neq\emptyset$ and $E_a\neq\emptyset$. There exists a set of reticulation edges $E_{A,a}'\neq \emptyset$ in $N_{A,a}'$ that can be compensated.
\end{itemize} 

\begin{figure}[t]
\centering
\includegraphics[scale=1.15]{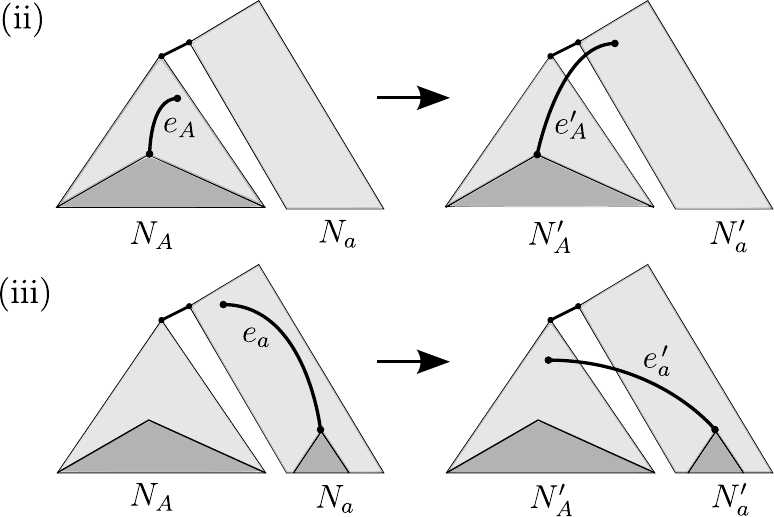}
\caption[Illustration of Case (ii) and (iii) regarding Lemma~\ref{21-lem-comp}]{Illustration of the scenario referring to Case (ii) and Case (iii).} 
\label{21-fig-shift}
\end{figure}

In the following, we will show that each scenario, which is described by one of the four cases, cannot occur due to certain circumstances.\\

\textbf{Case (i).} In this case, either $r(N_A)\ne h(\cT|_A)$ or $r(N_a)\ne h(\cT_a)$, which is a contradiction to the choice of $N_A$ or $N_a$, respectively. $\lightning$\\

\textbf{Case (ii).} Let $N_{A,a}'$ be the network that is obtained from $N_{A,a}$ by reattaching the source nodes of each edge in $E_A$ to the subnetwork corresponding to $N_a$ such that $N_{A,a}'$ still displays $\cT$. Now, first notice that from those shifted edges there does not arise a new path whose start- and end-node both lie in $N_a'$. As a consequence, due to Lemma~\ref{21-lem-comp}, each edge of $N_a'$ that could be compensated, could be also compensated in the original network $N_{A,a}$, which is a contradiction to the choice of both networks $N_a$ and $N_A$. $\lightning$ 

The same argument holds for the subnetwork corresponding to $N_A$ and, thus, in this case the network $N_{A,a}'$ cannot contain any reticulation edges that can be compensated.\\ 

\textbf{Case (iii).} The argumentation regarding this case equals the one of Case (ii).\\

\textbf{Case (iv).} Again, let $E_A'$ and $E_a'$ be the set of edges in $N_{A,a}'$ whose source nodes have been reattached to the subnetwork corresponding $N_a$ and $N_A$, respectively. Now, there additionally exist three out of four sub-cases that have to be considered here (cf.~Fig.~\ref{21-fig-shift-2}).

\begin{itemize}
\item[(iv.i)] Neither a source node of an edge in $E_a'$ is contained in a subnetwork rooted at a target node of an edge in $E_A'$ nor a source node of an edge in $E_A'$ is contained in a subnetwork rooted at a target node of an edge in $E_a'$.
\item[(iv.ii)] There exists a source node of an edge $e_a'$ in $E_a'$ that is contained in a subnetwork rooted at the target node of an edge $e_A'$ in $E_A'$.
\item[(iv.iii)] There exists a source node of an edge $e_A'$ in $E_A'$ that is contained in a subnetwork rooted at the target node of an edge $e_a'$ in $E_a'$.
\item[(iv.iv)] There exists a source node of an edge $e_A'$ in $E_A'$ that is contained in a subnetwork rooted at the target node of an edge $e_a'$ in $E_a'$ and, simultaneously, there exists a source node of an edge $e_A'$ in $E_A'$ that is contained in a subnetwork rooted at the target node of an edge $e_a'$ in $E_a'$. This directly implies that the graph contains a directed cycle and, thus, does not apply to the definition of hybridization networks. Consequently, this case has not to be considered here. 
\end{itemize}

\begin{figure}[t]
\centering
\includegraphics[scale=1.15]{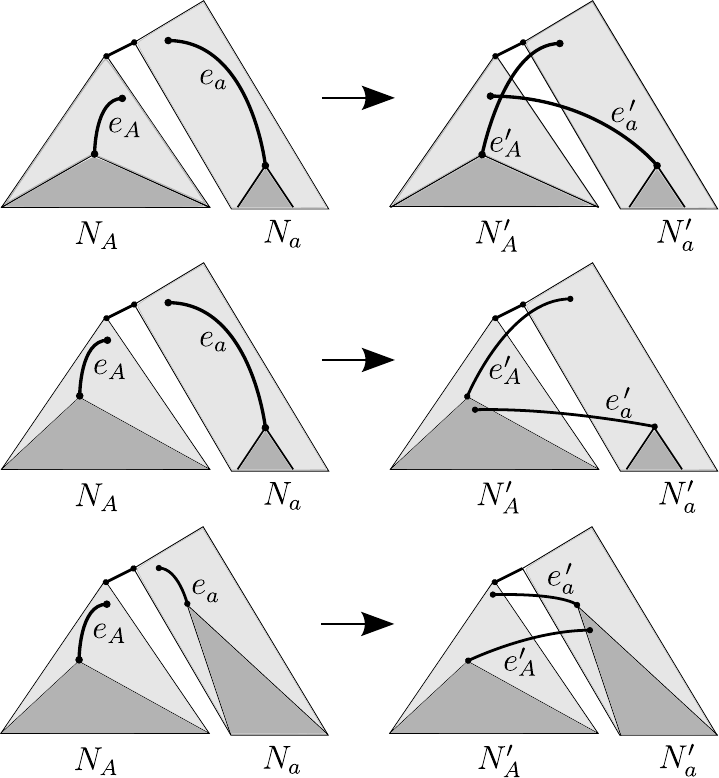}
\caption[Illustration of Case (iv.i), (iv.ii), and (iv.iii) regarding Lemma~\ref{21-lem-comp}]{Illustration of the scenario referring to Case (iv.i), Case (iv.ii), and Case (iv.iii).} 
\label{21-fig-shift-2}
\end{figure}

\textbf{Case (iv.i).} Again, similar to Case (ii), there does not arise a new path whose start- and end-node both lie in the subnetwork corresponding to $N_a$ and $N_A$, respectively. Thus, each edge that is contained in this part of the network and could be compensated, could be also compensated in the original network $N_{A,a}$ which is a contradiction to the choice of both networks $N_a$ and $N_A$. $\lightning$\\

\textbf{Case (iv.ii).} In this certain case there exists a path leading from a target node of $e_A'$ in $E_A'$ back to $N_a'$ (cf.~Fig.~\ref{21-fig-comp2}). Thus, potentially, there could exist a reticulation edge $e$ in $N_a'$ such that $N_{A,a}'-\{e\}$ still displays $\cT$. More precisely, this would be the case if $e_A'$ and $e_a'$ could compensate a deletion of $e$. 

Now, let $E_i$ be an edge set referring to an input tree $T_i$ and let $P$ be the path of $N_{A,a}'|_{E_i,\cX}$ leading from the source node of $e_A'$ to the target node $e_a'$. Moreover, without loss of generality, we assume that there does not exist a further edge set referring to another input tree $T_j$ with $j\ne i$ containing $e$. Now, if there would exist an edge set $E_i'$ with $e\not\in E_i'$ referring to $T_i$, this would automatically imply that $e$ could be compensated. 

If $e$ is not part of such a path $P$, $e$ cannot be compensated by the two shifted edges $e_A'$ and $e_a'$. Otherwise, let $\cR_{N_{A,a}'}(P,E_i,\cX)=(R_0,\dots,R_k)$ be the ordered set of non-empty pendant subtrees of each node lying on $P$ in which the first restricted subtree $R_0$ corresponds to $\overline{N_A'|_{\{E_i,\cX\}}}$. Now, only if there exists a path $P'$ leading from the target node of $e_A'$ to the target node of $e_a'$ such that $\cR_{N_{A,a}'}(P,E_i,\cX)$ equals $\cR_{N_{A,a}'}(P',E_i,\cX)$, $e$ could be compensated by using $e_A'$ and $e_a'$. However, as $A$ is a cluster of $T_i$, in this case $R_1$ to $R_{k-1}$ may not exist, meaning that $\cR_{N_{A,a}'}(P',E_i,\cX)$ could only consist of the two elements $R_0$ and $R_k$. Thus, if $e$ could be compensated, this would directly imply that $e$ could be also compensated in $N_a$ by $e_a$, which is a contradiction to the choice of $N_a$. $\lightning$ \\

\begin{figure}[tb]
\centering
\includegraphics[width = 8cm]{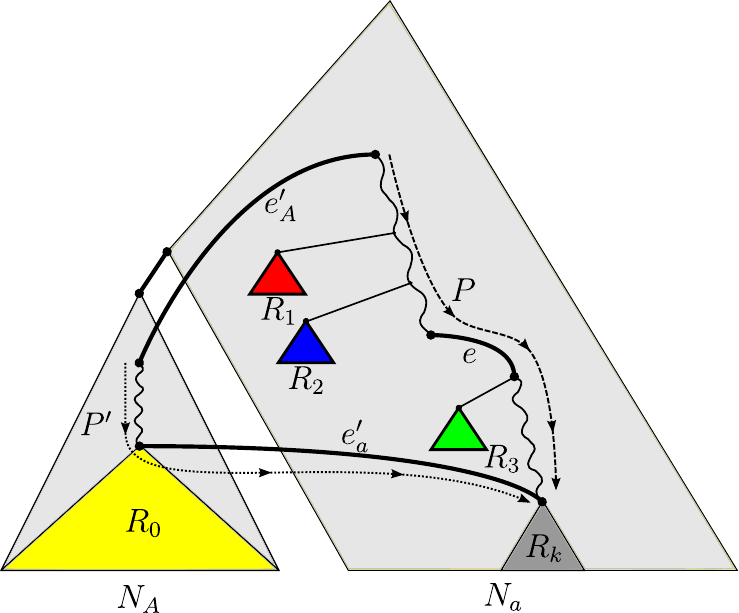}
\caption[Illustration of Case (iv.i) regarding Lemma~\ref{21-lem-comp}]{An illustration of the scenario concerning Case (iv.ii). } 
\label{21-fig-comp2}
\end{figure}

\textbf{Case (iv.iii).} The argumentation regarding this case equals the one of Case (iv.ii).\\

Finally, combining both Inequations \ref{21-eq-cr1} and \ref{21-eq-cr2} completes the proof of Theorem~\ref{21-th-cr}.
\end{proof}

\clearpage
\section{Discussion}
\label{sec-dis}
To analyze hybridization events, it is of high interest to compute all hybridization networks, since the more frequently an event occurs in all those networks the more likely it may be part of the true underlying evolutionary scenario. In this work, we first presented the algorithm \textsc{allHNetworks} calculating all relevant networks for an input consisting of multiple rooted binary phylogenetic $\cX$-trees and then established its correctness by a detailed formal proof. Notice that a major finding of this work is that for computing such networks it suffices to use the concept of maximum acyclic agreement forests.

The stated theoretical worst-case runtime of the algorithm \textsc{allHNetworks} reveals, however, that the number of relevant networks growths in a strong exponential manner in terms of the number and the size of the input trees which obviously complicates its application to real biological problems. As a consequence, it is very important to implement the algorithm in an efficient way (e.g., parallelizing particular substeps), which is addressed in another paper of Albrecht \cite{Albrecht2015}. Moreover, in an algorithmic point of view, in order to improve the practical runtime, one can apply a cluster reduction to the input trees. We have demonstrated, however, that when separating those input trees into several clusters, in order to obtain all relevant networks, one still has to spend some work in attaching back different networks separately computed for each of those clusters. However, if one is only interested in the hybridization number, this post-processing step is not necessary as proven in Section~\ref{sec-cr}.


\bibliographystyle{natbib} 

\begin{thebibliography}{99}

\bibitem[Albrecht \textit{et~al.}, 2011a]{Albrecht2011}
Albrecht, B., Scornavacca, C., Cenci, A., Huson, D. H. (2011)
Fast computation of minimum hybridization networks. 
\textit{Bioinformatics}, \textbf{28(2)}, 191--197.

\bibitem[Albrecht {\textit et~al.}, 2011b]{Albrecht2011-2}
Albrecht, B. (2011) Fast Computation of Hybridization Networks.
Master's thesis, University of T\"ubingen, Mathematisch-Naturwissenschaftliche Fakult\"at,
T\"ubingen, Germany.

\bibitem[Albrecht, 2015]{Albrecht2015}
Albrecht, B. (2015) 
Computing all hybridization networks for multiple binary phylogenetic input trees. 
\textit{BMC Bioinformatics}, \textbf{16}:236.

\bibitem[Bordewich and Semple, 2007a]{Bordewich2007a}
Bordewich, M., Semple, C. (2007) 
Computing the hybridization number of two phylogenetic tress is fixed-parameter tractable.
\textit{IEEE ACM Trans. Comput. Biol. Bioinformatics}, \textbf{4}, 458-466.

\bibitem[Bordewich and Semple, 2007b]{Bordewich-2007b}
Bordewich, M., Semple, C. (2007) 
Computing the minimum number of hybridization events for a consistent evolutionary history. 
\textit{Discrete Appl. Math.}, \textbf{155}, 914-–928.

\bibitem[Baroni~\textit{et.~al.}, 2005]{Baroni2005}
Baroni, M.; Gruenewald, S.; Moulton, V.; Semple, C.: 
Bounding the number of hybridisation events for a consisten evolutionary history. 
\textit{Mathematical Biology} 51: 171--182, 2005.

\bibitem[Baroni and Semple, 2006]{Baroni-2006}
Baroni, M., Semple, C. (2006) 
Hybrids in real time. 
\textit{Syst. Biol.}, \textbf{55}, 46--56.

\bibitem[Bordewich and Semple, 2005]{Bordewich-2005} 
Bordewich, M., Semple, C. (2005) 
On the computational complexity of the rooted subtree prune and regraft distance. 
\textit{Ann. Combinator.}, \textbf{8}, 409--423.

\bibitem[Chen and Wang, 2010]{Chen-2010} 
Chen, Z.-Z. and Wang, L. (2010) 
HybridNet: a tool for constructing hybridization networks. 
\textit{Bioinformatics}, \textbf{26}, 2912--1913.

\bibitem[Huson \textit{et~al.}, 2011]{Huson2011}
Huson, D. H., Rupp, R., Scornavacca, C. (2011) 
Phylogenetic networks: Concepts, Algorithm and Applications. 
\textit{Camebridge University Press}.

\bibitem[Huson and Scornavacca, 2012]{Huson-2012}
Huson, D. H., Scornavacca, C. (2012) 
Dendroscope 3: An interactive tool for rooted phylogenetic trees and networks. 
\textit{Syst. Biol.}, \textbf{61}, 1061--1067.

\bibitem[van Iersel and Linz, 2013]{Iersel-2013}
van Iersel, L., Linz, S. (2013) 
A quadratic kernel for computing the hybridization number of multiple trees. 
\textit{Inform. Process. Lett. 113}, \textbf{9}, 318-–323.

\bibitem[Kelk \textit{et~al.}, 2013]{Kelk2013} 
Kelk, S., Linz, S., Morrison, D.A. (2013) 
Fighting network space: it is time for an SQL-type language to filter phylogenetic networks. 
arXiv:1310.6844.

\bibitem[Linz, 2008]{Linz-2008} 
Linz, S. (2008) 
Reticulation in evolution. 
PhD thesis, University of D\"usseldorf, Mathematisch-Naturwissenschaftliche Fakult\"at, D\"usseldorf, Germany. 

\bibitem[Mallet, 2007]{Mallet-2007}
Mallet, J. (2007) 
Hybrid speciation. 
\textit{Nature}, \textbf{446}, 279-–283.

\bibitem[Scornavacca \textit{et~al.}, 2012]{Scornavacca2012} 
Scornavacca, C., Linz, S., Albrecht, B. (2012) 
A first step towards computing all hybridization networks for two rooted binary phylogenetic trees.
\textit{J. of Comput. Biol.}, \textbf{19(11)}, 1227--1242.

\bibitem[Whidden and Zeh, 2010]{Whidden2010}
Whidden, C., Zeh, N. (2010) 
Fast FPT algorithms for computing rooted agreement forests: theory and experiments. In Festa, P. (ed.) 
\textit{Proceedings of the Symposium on Experiment Algorithms}, vol. 6049 of \textit{Lecture Notes in Computer Science.} Springer, Heidelberg, Germany, pp. 141--153.

\bibitem[Whidden \textit{et~al.}, 2011]{Whidden2011}
Whidden, C., Beiko, R.G., Zeh, N. (2011) 
Fixed-Parameter and Approximation Algorithms for Maximum Agreement Forests. 
arXiv:1108.2664.

\bibitem[Wu, 2010]{Wu2010}
Wu, Y. (2010) 
Close Lower and Upper Bounds for the Minimum Reticulate Network of Multiple Phylogenetic Trees. 
\textit{Bioinformatics}, \textbf{26(12)}, i140--i148.

\bibitem[Wu, 2013]{Wu2013} 
Wu, Y. (2013) 
An algorithm for constructing parsimonious hybridization networks with multiple phylogenetic trees. 
\textit{J. Comput. Biol.}, \textbf{20(10)}, 792--804. 

\end{thebibliography}


\end{document}